\newcommand{\R}{\mathbb{R}}
\newcommand{\test}{\ccs(\R^4)}
\newcommand{\di}{\text{d}}
\newcommand{\en}[1]{\omega_{\textbf{#1}}}
\newcommand{\cre}[2]{#1^\dagger_{\textbf{#2}}}
\newcommand{\ani}[2]{#1_{\textbf{#2}}}
\newcommand{\bsh}[2]{\mathscr{B}^#1_{\textbf{#2}}}
\newcommand{\bshs}[2]{\mathscr{B}^{#1^{2}}_{\textbf{#2}}}
\newcommand{\Ban}{\mathcal{B}}
\newcommand{\Hil}{\mathcal{H}}
\newcommand{\A}{\mathcal{A}}
\newcommand{\M}{\mathcal{M}}
\newcommand{\Mt}{\tilde{\mathcal{M}}}
\newcommand{\Mh}{\widehat{\mathcal{M}}}
\newcommand{\N}{\mathcal{N}}
\newcommand{\ocal}{\mathcal{O}}
\newcommand{\polyalg}{\mathscr{P}}
\newcommand{\polyn}{\mathcal{P}}
\newcommand{\dom}[1]{\operatorname{dom}{#1}}
\newcommand{\affil}{\mathrel{\eta}}
\newcommand{\dual}{\mathsf{d}}
\newcommand{\idop}{\boldsymbol{1}}
\newcommand{\spt}{\mathfrak{M}}
\newcommand{\csf}{\mathfrak{C}}
\newcommand{\ccs}{\mathcal{C}_\mathrm{c}^\infty}
\newcommand{\vpsi}{\psi}
\newcommand{\vxi}{\xi}
\newcommand{\vphi}{\varphi}
\newcommand{\rw}[0]{\mathcal{W}}
\newcommand{\normord}[1]{{:}#1{:}}
\newcommand{\testrw}[0]{\mathcal{C}^\infty_\mathrm{c}(\rw)}
\newcommand{\testo}[0]{\mathcal{C}^\infty_\mathrm{c}(\ocal)}
\newcommand{\testf}[0]{\mathcal{C}^\infty_\mathrm{c}(\R^4)}
\newcommand{\domSplit}{\mathcal{D}_\mathsf{split}}
\newcommand{\domSchwartz}{\mathcal{D}_{\mathcal{S},0}}
\newcommand{\domP}{\polyalg(\R^4)\Omega}
\newcommand{\domAffil}[1]{\mathcal{S}_\mathsf{affil}(#1)}
\newtheorem{definition}{Definition}[section]
\newtheorem{lemma}[definition]{Lemma}
\newtheorem{proposition}[definition]{Proposition}
\newtheorem{theorem}[definition]{Theorem}
\theoremstyle{definition}
\numberwithin{equation}{section}
\title{Quantum $L^p$ inequalities in thermal states}
\author{Henning Bostelmann\thanks{Merseburg University of Applied Sciences, Eberhard-Leibnitz-Straße 2,
06217 Merseburg, Germany; e-mail: \href{mailto:henning.bostelmann@hs-merseburg.de}{henning.bostelmann@hs-merseburg.de}} 
\and Daniela Cadamuro\thanks{%
University of Leipzig, Institute for Theoretical Physics, Leipzig, Germany; e-mail: \href{mailto:cadamuro@itp.uni-leipzig.de}{cadamuro@itp.uni-leipzig.de}
} 
\and Leonardo Sangaletti\thanks{%
Dipartimento di Fisica, Università di Genova, Italy;
e-mail: \href{mailto:leonardo.sangaletti@edu.unige.it}{leonardo.sangaletti@edu.unige.it}}
}
\date{November 24, 2025}
\begin{document}

\maketitle

\begin{abstract}
In thermal quantum field theory, the global Liouvillian (the generator of time translations) is passive. How is this reflected in the properties of its local density, a quantum field? We propose that the locally averaged density is bounded below, but not above, with respect to the noncommutative $L^4$ norm. This is analogous to the known quantum energy inequalities in the vacuum situation.  Our examples include thermal equilibrium on Minkowski space, and the Unruh effect on the Rindler wedge, both for the real scalar free field.
\end{abstract}

\section{Introduction}\label{sec:intro}

In the vacuum sector of quantum field theory (QFT), the energy operator (the Hamiltonian $H$) is a positive operator, but the energy \emph{density} $h$ as a local quantum field cannot be positive in all states \cite{EGJ}. Nevertheless, in typical situations $h$ still fulfills lower bounds of the following type: For any nonnegative test function $f$, there is a constant $c_f>0$ such that for all (sufficiently regular) state vectors $\vpsi$,
        \begin{equation}\label{eq:qei0}
            (\vpsi, h(f) \vpsi) \geq  -c_f \| \vpsi\|^2.
        \end{equation}
Such \emph{quantum energy inequalities} (QEIs) have been found in a range of linear models, also on curved spacetime backgrounds (see \cite{Fewster:lecturenotes} for a review), and in some simple models with self-interaction \cite{BostelmannCadamuroFewster:ising, BostelmannCadamuroMandrysch:qeiint}. Here $\| \vpsi\|$ on the right-hand side of \eqref{eq:qei0} is understood as the usual Hilbert space norm. In other situations, one may need to replace $ \| \vpsi\|$ with a different norm or other homogeneous expression in $\vpsi$ (``state-dependent QEIs'' \cite{FewsterOsterbrink2008,BostelmannFewster:2009}); this still indicates some reminiscent positivity of $h(f)$ as long as the analogous \emph{upper} bound is violated (``nontriviality'' of the QEI).

In the present paper, we ask whether such QEIs still hold in \emph{thermal} states of QFT, rather than in the vacuum sector.

Depending how this question is posed, the answer might be almost obvious: Thermal equilibrium states of QFT are expected to be locally normal to the vacuum; i.e., when evaluated on local observables such as $h(f)$, they can be replaced with a density matrix in the vacuum sector. Therefore, the QEI \eqref{eq:qei0} -- or its obvious generalization to mixed states -- would automatically extend to thermal equilibrium states and their local excitations. Alternatively speaking, at least in linear theories, equilibrium states fulfill the Hadamard condition with respect to the vacuum \cite{SVW:microlocal}, and therefore the standard argument for QEIs \cite[Sec.~2.4]{Fewster:lecturenotes} applies. In a sense, we have asked whether the local energy density in thermal states is (up to small errors) larger than in the vacuum -- a property that would certainly be expected heuristically.

However, our question in this paper is more subtle: We want to compare the energy density in a local excitation of the equilibrium state to its value in the equilibrium state itself, and ask whether the difference is positive. Stated differently, we ask whether positivity properties of the \emph{Liouvillian} (replacing the Hamiltonian in the thermal case) transfer to its local density. This might seem counterintuitive at first, since excitations of thermal states can contain both ``particles'' and ``holes'' over the background, and hence one would not expect any positivity of the Liouvillian, let alone of its density.

To show that such positivity can nevertheless be expected, we recall the setting of thermal equilibrium states in QFT (\cite{HHW:equilibrium}; see \autoref{sec:tqft} for more detail). Given an observable algebra $\A$ (here: the quasilocal algebra of a QFT) and a thermal equilibrium state $\omega$ of $\A$ (i.e., a state fulfilling the KMS condition for some inverse temperature $\beta$), one constructs two representations $\pi$ and $\tilde \pi$ of $\A$ (one linear, the other anti-linear) on a common Hilbert  space $\Hil$, such that $\pi(\A)$ and $\tilde\pi(\A)$ commute; $\pi(\A)$ may be interpreted as the thermal system in question (the QFT), whereas $\tilde\pi(\A)$ is interpreted as a thermal bath. The state $\omega$ is represented as a vector $\Omega\in\Hil$, a pure (entangled) state of the two combined systems. Time evolution is given by a group of unitaries $U(t)=e^{itL}$,  with its generator the Liouvillian $L$; we may heuristically think that $L=H-\tilde H$ with $H$ the Hamiltonian for $\pi(\A)$ and $\tilde H$ for $\tilde\pi(\A)$, but the individual pieces $H$ and $\tilde H$ lose their meaning in the limit of infinite spatial volume. The corresponding energy densities, $h$ and $\tilde h$, are still valid objects however; and defining $\ell(f) := h(f)-\tilde h(f)$, the field $\ell$ typically is indeed a density for $L$, in the sense that $\ell(f)\to L$ in the large-volume limit for $f$.

Now KMS states are well known to have the \emph{passivity} property \cite{PuszWoronowicz:passive}, which here amounts to
\begin{equation}
     ( \pi(A) \Omega, L \pi(A) \Omega) \geq 0 \quad \text{for all (sufficiently regular) $A =A^\ast\in \A$}.
\end{equation}
Thus the Liouvillian $L$ is positive in expectation values for states generated from $\Omega$ by operations in the thermal system, rather than in the thermal bath. It is then reasonable to expect that $\ell(f)$ shows a similar almost-positivity.
In fact, setting $\vpsi = \pi(A) \Omega$, we have
\begin{equation}
     (\vpsi, \ell(f) \vpsi) = (\vpsi, h(f) \vpsi) - (\pi(A^\ast A) \Omega, \tilde h(f) \Omega),
\end{equation}
where in the second summand we used that $\tilde h(f)$ commutes with $\pi(A)$. Supposing that $h(f)$ fulfills a QEI of the usual form \eqref{eq:qei0}, this leads to the estimate
\begin{equation}\label{eq:est0}
     (\vpsi, \ell(f) \vpsi) \geq -c_f \|\vpsi\|^2 - \| \tilde h(f) \Omega \| \, \|  \pi(A^\ast A) \Omega \| .
\end{equation}
This would be a QEI for $\ell$ of the desired form \emph{if} we could find a suitable norm $\| \cdot \|'$ such that
\begin{equation}\label{eq:altnorm}
\| \pi(A^\ast A) \Omega \| \leq (\| \pi(A)\Omega\|')^2 \quad \text{for all $A \in \A$}.
\end{equation}
But the usual Hilbert space norm does not have this property; and the choice $\| \pi(A)\Omega \|'=\|A\|$ would be too strict, as in applications we would like to generalize to states generated by (unbounded) quantum fields from $\Omega$.

However, with a small modification the argument can be saved. Instead of \eqref{eq:est0}, let us consider
\begin{equation}
     (\vpsi, \ell(f) \vpsi) \geq -c_f \|\vpsi\|^2 - \| e^{\beta L / 4}\tilde h(f) \Omega \| \, \| e^{-\beta L / 4} \pi(A^\ast A) \Omega \| .
\end{equation}
\sloppy
Here the norm $\| e^{\beta L / 4}\tilde h(f) \Omega \|$ turns out to be finite; 
and the expression $\| e^{-\beta L / 4} \pi(A^\ast A) \Omega \|^{1/2}$ is indeed a norm of $A \Omega$, known as the noncommutative $L^4$ norm in the theory of noncommutative $L^p$ spaces for type~III von Neumann algebras (see e.g.~\cite{ArakiMasuda}, or \autoref{app:lp} for a brief review). Thus our main claim can be formulated as follows: \emph{In typical situations, the Liouvillian density fulfills a quantum $L^4$ inequality.}
\fussy

The theory of noncommutative $L^p$-spaces is intimately linked with Tomita-Takesaki theory for von Neumann algebras, and indeed the entire setting above may be reformulated in these terms: Given a von Neumann algebra $\M$, every (cyclic and separating) state vector $\Omega$ has the KMS property with respect to the modular group of $(\M,\Omega)$; and vice versa in every KMS state, the modular group is linked to the time evolution by $\Delta^{it}=U(-\beta t)$, or $L=-\frac{1}{\beta}\log\Delta$. Thus our results apply generally to ``local densities'' of modular generators, if these exist in examples.

While the discussion so far was short and somewhat heuristic, we will detail our precise assumptions and arguments in \autoref{sec:qilp} and \autoref{sec:tqft}. We then present two applications of the general theory -- not only to show applicability of the assumptions made in the general setting, but crucially, to verify the nontriviality of the inequalities obtained.

Our first example will be the KMS state of a real scalar free field on Minkowski space (\autoref{sec:freescalar}); this is indicative of the general situation in linear thermal QFT, and in particular illuminates the role of ``particles'' vs. ``holes'' for the quantum inequalities.

The second example has a closer link to Tomita-Takesaki theory: it is the thermal state of the Rindler wedge (for a real scalar free field), see \autoref{sec:rindler}. That is, our algebra $\pi(\A)$ is the Weyl algebra of the right wedge, whereas $\tilde\pi(\A)$ is the Weyl algebra of the left wedge; the vacuum vector $\Omega$, cyclic and separating for both, is well known to be a KMS state for the group of boosts along the wedge (as it coincides with the modular group), and hence it has thermal properties (``Unruh temperature''). We show that the local density of the boosts can be considered a ``Liouvillian density'' in the sense described above, and establish a nontrivial quantum $L^4$ inequality for it.

To summarize the structure of the article: We introduce our precise notion of quantum inequalities, including $L^p$ inequalities, in \autoref{sec:qilp}. This is then applied to thermal quantum field theory in \autoref{sec:tqft}, using a model-independent setting, in which we also clarify relations between the energy density $h$ and the Liouvillian density $\ell$. \autoref{sec:freescalar} discusses the example of the thermal states of a real scalar free field, while \autoref{sec:rindler} shows that the Unruh temperature state fits into our setting. We conclude in \autoref{sec:conclusions}. Two appendices review and extend relevant background material: \autoref{app:affil} on operators affiliated with a von Neumann algebra, and \autoref{app:lp} on noncommutative $L^p$ spaces.

This paper is partially based on the Ph.D.~thesis of one of the authors \cite{Sangaletti:thesis}.

\section{Quantum \texorpdfstring{$L^p$}{Lp} inequalities}\label{sec:qilp}

We start by giving a precise formulation of the notion of quantum inequalities that we will use throughout the paper.

Let $\mathcal{H}$ be a Hilbert space with scalar product $(\cdot, \cdot)$ and norm $\|\cdot\|$, and let $\Ban\subset \mathcal{H}$ be a densely and continuously embedded Banach space with norm $\|\cdot\|_\Ban$. 
A hermitean sesquilinear form $T$ defined on a linear subspace $\mathcal{D}\subset \Ban$, with $\mathcal{D}$ dense in $\Hil$, is called \emph{$\Ban$-bounded below} if there is a $c\in\R$ such that
  \begin{equation}
     ( \vpsi, T \vpsi ) \geq  c \| \vpsi \|^2_\Ban \quad \text{for all }\vpsi\in \mathcal{D};
  \end{equation}
  \emph{$\Ban$-bounded above} if the same holds with $\leq$ instead of $\geq$; and \emph{$\Ban$-bounded} if it has both of these properties. We can now state:

\begin{definition}
Let $\mathcal{D}\subset \Ban$ be a linear subspace, dense in $\Hil$, and let $T: \mathcal{D} \times \mathcal{D} \to \mathbb{C}$ be a hermitean sesquilinear form. We say that $T$ fulfills a \emph{nontrivial quantum $\Ban$-inequality} if $T$ is $\Ban$-bounded below, but not $\Ban$-bounded above. 
\end{definition}

When $\Ban=\Hil$, this is the usual notion of a (state-independent) quantum inequality; when  
$\|\cdot\|_\Ban$ is stricter than $\|\cdot\|$, it is a state-dependent quantum inequality in the sense of \cite{FewsterOsterbrink2008}. We stress that the nontriviality condition (namely, that  $T$ is \emph{not} $\Ban$-bounded above) is crucial in our applications; it expresses the fact that $T$ is ``more positive than negative'' with respect to the norm of $\Ban$. 

More specifically, we now consider the following situation. Let $\M$ be a von Neumann algebra acting on $\Hil$, with commutant $\M'$, and let $\Omega\in\Hil$ be a cyclic and separating vector for $\M$. We denote the Tomita-Takesaki modular objects as $\Delta$, $J$. 

Given this, as our dense subspaces of $\Hil$ we consider the \emph{noncommutative $L^p$ spaces} $L^p(\M,\Omega)$, $2 \leq p \leq \infty$; see \autoref{app:lp} for a brief review of these. In short, these $L^p$ spaces can be thought of as a chain of densely and continuously embedded Banach spaces, 
$\Hil \supset L^p(\M,\Omega) \supset L^{p'}(\M,\Omega) \supset L^\infty(\M,\Omega)$ for $2 \leq p \leq p' \leq \infty$, with $\M\Omega$ being a dense subset of each of them. The corresponding norms $\|\cdot \|_p$ arise from an abstract interpolation construction; but for certain $p$, they have more concrete expressions, in particular one has for any $A\in\M$,
\begin{equation}\label{eq:noncnorms}
   \|A\Omega\|_2 = \|A\Omega\|, \quad 
   \|A\Omega\|_4 = \|\Delta^{1/4} A^\ast A\Omega\|^{1/2}, \quad 
   \|A\Omega\|_\infty = \|A\|_\mathrm{op}. 
\end{equation}
The first and the last identities are well known and allow us to identify $L^2(\M,\Omega)=\Hil$, $L^\infty(\M,\Omega)=\M\Omega \cong \M$. The expression for $\|A\Omega\|_4$ seems less known, but can be extracted from the literature (e.g.,~Theorem~3.6 of \cite{Raynaud:lp}); we report a derivation from \cite{ArakiMasuda} in \autoref{app:lp}.
While \eqref{eq:noncnorms} was stated for bounded $A\in\M$, these formulas extend -- under suitable conditions  -- to operators $A$ which are unbounded, but affiliated with $\M$; see Proposition~\ref{prop:affilp4norm} in the appendix.

Our main interest in the following will be in (semi-)boundedness with respect to $\Ban=L^4(\M,\Omega)$. For use later on, we prove a number of lemmas regarding $L^4$-boundedness.

\begin{lemma}\label{lemma:tpbound}
    Let $T'$ be a densely defined closed symmetric operator affiliated with $\M'$ such that $\Omega\in\dom{T'}$. The sesquilinear form associated with $T'$ and defined on $\M\Omega\times\M\Omega$ is $L^4(\M,\Omega)$-bounded.
\end{lemma}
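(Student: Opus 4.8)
The plan is to reduce the quadratic form to a single inner product that the modular structure controls, and then apply Cauchy--Schwarz in the $\Delta^{1/4}\cdot\Delta^{-1/4}$ factorization underlying the $L^4$ norm. First I would move $T'$ through the algebra: since $T'$ is affiliated with $\M'$, every unitary $U\in\M$ preserves $\dom{T'}$ and commutes with $T'$ there, so writing an arbitrary $A\in\M$ as a finite linear combination of unitaries in $\M$ and using $\Omega\in\dom{T'}$ gives $A\Omega\in\dom{T'}$ and $T'A\Omega=AT'\Omega$. Setting $\eta:=T'\Omega$ and $P:=A^\ast A\in\M$, the form becomes $(A\Omega,T'A\Omega)=(A\Omega,AT'\Omega)=(P\Omega,\eta)$. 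Since $T'$ is symmetric, $(A\Omega,T'A\Omega)$ is real, so a single two-sided estimate $|(A\Omega,T'A\Omega)|\le C\|A\Omega\|_4^2$ will yield $L^4$-boundedness below and above simultaneously.

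The decisive input is that $\eta=T'\Omega$ lies in $\dom{\Delta^{-1/2}}$. For self-adjoint $T'$ this follows by approximation: the spectral truncations $T'_n:=T'\chi_{[-n,n]}(T')$ are bounded elements of $\M'$ with $T'_n\Omega\to T'\Omega$, and for bounded $B'\in\M'$ the commutant Tomita operator $F=J\Delta^{-1/2}$ gives $\Delta^{-1/2}B'\Omega=J(B')^\ast\Omega$; applied to $B'=T'_n$ this yields $\Delta^{-1/2}T'_n\Omega=JT'_n\Omega\to J\eta$, so closedness of $\Delta^{-1/2}$ forces $\eta\in\dom{\Delta^{-1/2}}$ with $\Delta^{-1/2}\eta=J\eta$ and hence $\|\Delta^{-1/2}\eta\|=\|\eta\|$. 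For merely symmetric $T'$ I would invoke the affiliated-operator calculus of \autoref{app:affil}: as $\dom{T'}\subseteq\dom{(T')^\ast}$ gives $\Omega\in\dom{(T')^\ast}$ with $(T')^\ast\Omega=T'\Omega=\eta$, the same identity $\Delta^{-1/2}T'\Omega=J(T')^\ast\Omega=J\eta$ persists. From $\eta\in\dom{\Delta^{-1/2}}$ and the elementary spectral bound $\lambda^{-1/2}\le 1+\lambda^{-1}$ one then gets $\eta\in\dom{\Delta^{-1/4}}$ with $\|\Delta^{-1/4}\eta\|^2\le\|\eta\|^2+\|\Delta^{-1/2}\eta\|^2=2\|T'\Omega\|^2$.

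Finally I would assemble the estimate. Because $P\Omega\in\M\Omega\subseteq\dom{\Delta^{1/2}}\subseteq\dom{\Delta^{1/4}}$ and $\eta\in\dom{\Delta^{-1/4}}$, a short spectral computation (using that $\Delta$ is injective, $\Omega$ being cyclic and separating) shows $\Delta^{-1/4}\eta\in\dom{\Delta^{1/4}}$ and $\Delta^{1/4}\Delta^{-1/4}\eta=\eta$, which justifies $(P\Omega,\eta)=(\Delta^{1/4}P\Omega,\Delta^{-1/4}\eta)$. Cauchy--Schwarz together with the $L^4$-norm formula \eqref{eq:noncnorms} then gives $|(A\Omega,T'A\Omega)|=|(P\Omega,\eta)|\le\|\Delta^{-1/4}\eta\|\,\|\Delta^{1/4}A^\ast A\Omega\|=\|\Delta^{-1/4}\eta\|\,\|A\Omega\|_4^2$, so with $C:=\|\Delta^{-1/4}T'\Omega\|\le\sqrt2\,\|T'\Omega\|$ we obtain the desired two-sided bound and hence $L^4(\M,\Omega)$-boundedness. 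I expect the only genuine obstacle to be the domain claim $T'\Omega\in\dom{\Delta^{-1/2}}$ in the non-self-adjoint symmetric case, which is precisely where the affiliated-operator identity $\Delta^{-1/2}T'\Omega=J(T')^\ast\Omega$ is required; the remainder is the standard $\Delta^{1/4}/\Delta^{-1/4}$ duality of the $L^4$ norm.
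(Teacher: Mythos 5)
Your proposal is correct and takes essentially the same route as the paper's proof: commute $T'$ through $A$ by affiliation to obtain $(A^*A\Omega,T'\Omega)$, establish $T'\Omega\in\dom{\Delta^{-1/2}}\subset\dom{\Delta^{-1/4}}$ via the extended Tomita relation (Proposition~\ref{prop:Ext_Tomita} applied to $\M'$), then split the pairing as $(\Delta^{1/4}A^*A\Omega,\Delta^{-1/4}T'\Omega)$ and apply Cauchy--Schwarz with the $L^4$-norm formula. Your spectral-truncation argument in the self-adjoint case is just a special-case re-derivation of that appendix proposition, and your explicit constant $\sqrt{2}\,\lVert T'\Omega\rVert$ is a minor quantitative refinement over the paper's unspecified $c_{T'}$.
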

\begin{proof}  
    Since $T'\affil\M'$ and $\Omega\in\dom{T'}$, we have $\M\Omega\subset\dom{T'}$. The sesquilinear form associated with $T'$ defined on $\M\Omega\times\M\Omega$ satisfies (for every $A\in\M$):
     \begin{equation}\label{Eq:L4_abstract_ineq}
    \left(A\Omega,T'A\Omega\right)=\left(A\Omega,AT'\Omega\right)=\left(A^*A\Omega,T'\Omega\right),
    \end{equation}
since $T'$ is affiliated with $\M'$. Now $A^*A\in\M$ implies $A^\ast A \in \dom{\Delta^{1/4}}$; and $T' \affil \M'$ with $\Omega\in\dom{T'}\cap \dom{T^{\prime\ast}}$ implies $T'\Omega \in \dom{\Delta^{-1/2}}\subset \dom{\Delta^{-1/4}}$ by  Proposition \ref{prop:Ext_Tomita} (applied to $\M'$ in place of  $\M$). Thus we have
     \begin{equation}
     \left(A\Omega,T'A\Omega\right)=
     \left(\Delta^{1/4}A^*A\Omega,\Delta^{-1/4}T'\Omega\right).
    \end{equation}
Using the Cauchy-Schwarz inequality, we finally get: 
\begin{equation}
     \left|\left(A\Omega,T'A\Omega\right)\right|\leq \left\lVert\Delta^{1/4}A^*A\Omega\right\rVert \left\lVert\Delta^{-1/4}T'\Omega\right\rVert=c^2_{T'}\|A\Omega\|^2_4
\end{equation}
with a constant $c_{T'}$ independent of $A$.
    \end{proof}

In the previous lemma, the domain $\M\Omega$ can be enlarged to contain vectors generated from $\Omega$ with \emph{unbounded} operators affiliated with $\M$. More precisely:

\begin{lemma} \label{lem:extended_l4_set}
    Let $T'$ be a densely defined closed symmetric operator affiliated with $\M'$ such that $\Omega\in\dom{T'}$. Set
    \begin{equation*}
        \domAffil{T'} \coloneqq \left\{ A\Omega\; \middle\vert \; A\affil\M,\;\Omega\in\dom{A^*A},\;A\Omega\in\dom{T'}\right\}.
    \end{equation*}
    Then $\domAffil{T'} \subset L^4(\M,\Omega)$; and if $\mathcal{D}\subset\domAffil{T'}$ is a dense linear subspace\footnote{We do not claim that the subset $\domAffil{T'}\subset\Hil$ is, in general, a linear subspace.} of $\Hil$, then the sesquilinear form associated with $T'$ and defined on $\mathcal{D} \times \mathcal{D}$ is $L^4(\M,\Omega)$-bounded.
\end{lemma}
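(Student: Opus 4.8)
The plan is to reduce the statement of Lemma~\ref{lem:extended_l4_set} to the already-proven Lemma~\ref{lemma:tpbound} by showing that the key inequality~\eqref{Eq:L4_abstract_ineq} and its consequences survive when the bounded operator $A\in\M$ is replaced by an unbounded operator $A\affil\M$ satisfying the domain constraints defining $\domAffil{T'}$.

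First I would establish the inclusion $\domAffil{T'}\subset L^4(\M,\Omega)$. For $A\Omega\in\domAffil{T'}$ we have $A\affil\M$ with $\Omega\in\dom{A^\ast A}$; the latter says precisely that $A^\ast A\Omega$ makes sense, and since $A^\ast A\affil\M$, one expects $A^\ast A\Omega\in\dom{\Delta^{1/4}}$ to follow from the extension of Tomita-Takesaki theory to affiliated operators (Proposition~\ref{prop:affilp4norm} in the appendix, which is exactly the statement that the $L^4$-norm formula in~\eqref{eq:noncnorms} extends to affiliated $A$). Thus $\|A\Omega\|_4=\|\Delta^{1/4}A^\ast A\Omega\|^{1/2}$ is finite, giving membership in $L^4(\M,\Omega)$.

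Next I would redo the computation~\eqref{Eq:L4_abstract_ineq} in the affiliated setting. For $A\Omega\in\domAffil{T'}$ we have $A\Omega\in\dom{T'}$ by hypothesis, so $(A\Omega,T'A\Omega)$ is well-defined. The delicate point is justifying the chain of equalities $(A\Omega,T'A\Omega)=(A\Omega,AT'\Omega)=(A^\ast A\Omega,T'\Omega)$: when $A$ is unbounded these require that $T'$ commutes with $A$ in the appropriate affiliated sense and that the intermediate vectors lie in the relevant domains. I would argue this via the commutativity of $T'\affil\M'$ with operators affiliated to $\M$, using spectral projections of $A$ (or of $|A|$) to approximate and then pass to the limit, controlling everything through the closedness of $T'$ and the domain assumptions $\Omega\in\dom{T'}$ and $A\Omega\in\dom{T'}$. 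Having reestablished $(A\Omega,T'A\Omega)=(A^\ast A\Omega,T'\Omega)$, the remaining steps are identical to Lemma~\ref{lemma:tpbound}: rewrite as $(\Delta^{1/4}A^\ast A\Omega,\Delta^{-1/4}T'\Omega)$, where $\Delta^{-1/4}T'\Omega$ is finite since $\Omega\in\dom{T'}\cap\dom{T^{\prime\ast}}$ forces $T'\Omega\in\dom{\Delta^{-1/2}}\subset\dom{\Delta^{-1/4}}$ by Proposition~\ref{prop:Ext_Tomita}, and apply Cauchy-Schwarz to obtain $|(A\Omega,T'A\Omega)|\leq c_{T'}^2\|A\Omega\|_4^2$ with the same constant $c_{T'}=\|\Delta^{-1/4}T'\Omega\|^{1/2}$ as before.

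Finally, the statement is phrased for a dense linear subspace $\mathcal{D}\subset\domAffil{T'}$ rather than for $\domAffil{T'}$ itself, precisely because $\domAffil{T'}$ need not be linear (as the footnote warns). Since the pointwise bound $|(\vpsi,T'\vpsi)|\leq c_{T'}^2\|\vpsi\|_4^2$ holds for every $\vpsi\in\domAffil{T'}$, it holds in particular for every $\vpsi\in\mathcal{D}$, which gives both the upper and lower $L^4$-bounds defining $L^4(\M,\Omega)$-boundedness of the sesquilinear form on $\mathcal{D}\times\mathcal{D}$; linearity of $\mathcal{D}$ ensures the form is genuinely sesquilinear there. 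The main obstacle I anticipate is the middle step: rigorously justifying the commutation identity $(A\Omega,T'A\Omega)=(A^\ast A\Omega,T'\Omega)$ for \emph{unbounded} $A\affil\M$ and $T'\affil\M'$, where one must carefully track domains and cannot simply move $T'$ past $A$ as one does in the bounded case. This is where the affiliated-operator machinery of \autoref{app:affil} will have to be invoked with care.
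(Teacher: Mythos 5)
Your proposal reproduces the paper's proof in overall structure: the inclusion $\domAffil{T'}\subset L^4(\M,\Omega)$ via Proposition~\ref{prop:affilp4norm}, the chain of equalities ending in $(\Delta^{1/4}A^*A\Omega,\Delta^{-1/4}T'\Omega)$, the domain control via Proposition~\ref{prop:Ext_Tomita}, and Cauchy--Schwarz with the same constant. The one step where you deviate --- and which you yourself flag as the main obstacle --- is the commutation identity $(A\Omega,T'A\Omega)=(A\Omega,AT'\Omega)=(A^*A\Omega,T'\Omega)$, and there your sketched method would fail. You propose to truncate $A$ by its spectral projections, i.e.\ to set $A_n=AE(n)\in\M$ with $E(\lambda)$ the spectral projections of $|A|$, and to pass to the limit ``controlling everything through the closedness of $T'$''. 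For each fixed $n$ one indeed has $(A_n\Omega,T'A_n\Omega)=(A_n^*A_n\Omega,T'\Omega)$, and the right-hand side converges to $(A^*A\Omega,T'\Omega)$ because $A_n^*A_n\Omega=\int_0^n\lambda^2\,\di E(\lambda)\,\Omega\to A^*A\Omega$ using $\Omega\in\dom{A^*A}$. But on the left-hand side you must control $T'A_n\Omega=A_nT'\Omega=AE(n)T'\Omega$, and $\|AE(n)T'\Omega\|^2=\int_0^n\lambda^2\,\di\|E(\lambda)T'\Omega\|^2$ stays bounded as $n\to\infty$ if and only if $T'\Omega\in\dom{A}$. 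That membership is not among the hypotheses --- it is precisely the nontrivial conclusion of the commutation lemma you are trying to establish --- so the truncation-of-$A$ route is circular. Closedness of $T'$ never enters usefully: what is missing is the \emph{convergence} of $T'A_n\Omega$, not the identification of its limit, and weak-convergence tricks do not help since quadratic forms of unbounded symmetric operators are not weakly continuous.

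The paper resolves exactly this point with Lemma~\ref{lem: affaff}, which truncates the \emph{other} operator: $T'_m\coloneqq T'E'(m)\in\M'$, with $E'$ from the polar decomposition of $T'$. Since $T'_m\in\M'$ and $A\affil\M$, one has $AT'_m\Omega=T'_mA\Omega$ exactly; then $T'_m\Omega\to T'\Omega$ (using $\Omega\in\dom{T'}$) and $T'_mA\Omega\to T'A\Omega$ (using $A\Omega\in\dom{T'}$ --- this is where both domain hypotheses enter), and \emph{closedness of $A$}, not of $T'$, yields $T'\Omega\in\dom{A}$ with $AT'\Omega=T'A\Omega$. The adjoint step $(A\Omega,AT'\Omega)=(A^*A\Omega,T'\Omega)$ then makes sense because $\Omega\in\dom{A^*A}$ gives $A\Omega\in\dom{A^*}$, while the lemma supplies $T'\Omega\in\dom{A}$. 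So your proof becomes correct once you replace your sketched limiting argument by a citation (or re-derivation) of Lemma~\ref{lem: affaff}, truncating $T'$ and invoking closedness of $A$; everything else in your proposal matches the paper's proof.
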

    \begin{proof}
        First, $\domAffil{T'} \subset L^4(\M,\Omega)$ follows from Proposition~\ref{prop:affilp4norm} in the appendix. Now we follow similar lines as for Lemma~\ref{lemma:tpbound}. In particular, for every vector $A\Omega\in\domAffil{T'}$ we have:
        \begin{equation}
\left(A\Omega,T'A\Omega\right)=(A\Omega,AT'\Omega)=\left(A^*A\Omega,T'\Omega\right)=\left(\Delta^{1/4}A^*A\Omega,\Delta^{-1/4}T'\Omega\right).
        \end{equation}
        In this case, the first equality is obtained applying Lemma \ref{lem: affaff}.
        The third equality, instead, makes use of Proposition \ref{prop:Ext_Tomita} to ensure the condition $A^*A\Omega\in \dom{\Delta^{1/4}}$ (recalling that, if $A\affil\M$, also $A^*A\affil\M$ by the polar decomposition theorem). Now the Cauchy-Schwarz inequality and Proposition \ref{prop:affilp4norm} lead us to
        \begin{equation}
            \left|\left(A\Omega,T'A\Omega\right)\right|\leq c^2_{T'}\|A\Omega\|_4^2
        \end{equation}
        as claimed.
    \end{proof}
The generalisation in Lemma~\ref{lem:extended_l4_set} will turn out to be crucial since in our applications to quantum energy inequalities, the operators $A$ will be unbounded quantum fields; see \autoref{sec:nontriv} below. Lemma~\ref{lem:extended_l4_set} allows us to extend the $L^4$ bounds to this case, whereas $L^\infty$ bounds (operator norm bounds) would clearly not hold in that situation.

\section{Thermal field theory and the density of the Liouvillian}\label{sec:tqft}

We now introduce the setting of thermal field theory that we shall use in the following. Essentially, we make use of the framework for thermodynamics in infinite volume put forward by Haag, Hugenholtz and Winnink \cite{HHW:equilibrium}. Let us fix the details, mostly following \cite{BraRob:qsm2}; see also \cite{Haa:LQP} for context.

Let $\spt$ be a static, globally hyperbolic spacetime,\footnote{In our examples, $\spt$ will be 3+1-dimensional Minkowski space or a part of it; but our methods work in the more abstract setting.} on which we fix a Cauchy surface $\csf$ and the flow $\Xi$ of a timelike Killing vector orthogonal to $\csf$ (i.e.,~the ``time evolution''). We will take subregions $\ocal \subset \spt$ to be open, with (the interior of) their causal complement denoted as $\ocal'$. If $\ocal''=\ocal$, we call $\ocal$ \emph{causally complete}.

A quantum field theory on the spacetime $\spt$ is given by a net of $C^\ast$-algebras, $\A:\ocal \mapsto \A(\ocal)$, defined on the open subregions $\ocal$ of $\spt$, which fulfills the conditions of isotony ($\A(\ocal_1)\subset \A(\ocal_2)$ if $\ocal_1\subset\ocal_2$) and locality ($[\A(\ocal),\A(\ocal')]=\{0\}$). For technical convenience, we also demand that $\A$ is inner regular ($\A(\ocal)=\overline{\bigcup_{\ocal_1\subset\subset\ocal} \A(\ocal_1)}^{\lVert \cdot \rVert}$, where $\ocal_1\subset\subset\ocal$ stands for $\ocal_1\subset \mathcal{K} \subset \ocal$ with a compact set $\mathcal{K}$). 
We further require that the net is covariant with respect to the flow $\Xi$, i.e., that there exists a one-parameter group of automorphisms $\alpha_t:\A(\spt)\to\A(\spt)$ such that $\alpha_t (\A (\ocal)) = \A(\Xi_t \ocal)$ for all $\ocal$.

We then consider invariant states $\omega$ on $\A$, that is, positive linear normalized functionals $\omega:\A(\spt)\to\mathbb{C}$ such that $\omega \circ \alpha_t = \omega$.
Constructing the GNS representation $\pi$ for $(\A(\spt),\omega)$, we find a Hilbert space $\Hil$ and a vector $\Omega \in \Hil$ such that $\omega(A) = (\Omega,\pi(A)\Omega)$ for all $A$. We also obtain a one-parameter unitary group $U(t)$ such that $\pi(\alpha_t(A)) = U(t)\pi(A)U(t)^\ast$. We shall assume\footnote{With this assumption, we avoid imposing continuity requirements on the automorphisms $\alpha_t$ with respect to $t$, while still retaining the consequences of Corollary~5.3.4 in \cite{BraRob:qsm2} needed later.} that $U(t)$ is strongly continuous, hence $U(t)=e^{iLt}$ with a self-adjoint operator $L$, the \emph{Liouvillian}.

We thus obtain an isotonous, local, $U$-covariant net of von Neumann algebras $\M:\ocal \mapsto \M(\ocal) \coloneqq \pi(\A(\ocal))''$ on $\Hil$; and the algebra $\M(\spt)$ will take the role of the von Neumann algebra $\M$ from the previous section. By construction, $\Omega$ is cyclic for $\M(\spt)$.

We are particularly interested in states describing thermal equilibrium; technically, those fulfilling the KMS property with respect to $\alpha$ for some inverse temperature $0<\beta<\infty$, in short,
\begin{equation}
    \omega( A \alpha_{t+i\beta}(B) ) = \omega( \alpha_t(B) A )
    \quad \text{for all }
    t\in\R,\; A,B\in\A(\spt).
\end{equation}
As a consequence of this condition, $\Omega$ is also separating for $\M(\spt)$ \cite[Corollary~5.3.9]{BraRob:qsm2}. Hence we can construct the associated Tomita-Takesaki modular objects $\Delta$, $J$. By uniqueness of the modular automorphism group \cite[Theorem~5.3.10 and its proof]{BraRob:qsm2}, we obtain
\begin{equation}
    \Delta^{it} = e^{-it \beta L}, \quad \text{or} \quad
    L = - \frac{1}{\beta} \log \Delta .
\end{equation} 
The modular objects allow us to define a second (anti-linear) representation $\tilde \pi$ of $\A(\spt)$  by $\tilde \pi(A) := J\pi(A) J$. Defining $\Mt (\ocal):= J\M(\ocal) J = \tilde \pi(\A(\ocal))''$, we obtain another isotonous, local, covariant net; $\M$ and $\Mt$ are interpreted as the nets of observable algebras of the thermal system and the thermal bath respectively. They commute with each other; indeed, we have $\M(\spt)=\Mt(\spt)'$. 
We further set
\begin{equation}
    \Mh(\ocal) \coloneqq \M(\ocal) \vee \Mt(\ocal),
\end{equation}
giving another isotonous, local, $U(t)$-covariant net that describes the system \emph{combined} with the bath.

We now wish to describe a \emph{local density} of the Liouvillian, similar to an energy density in the vacuum case. Heuristically, such a density $\ell$ should be given as a quantum field of the form \cite[Eq.~(1.5)]{HHW:equilibrium}
\begin{equation}\label{eq:lfromh}
    \ell(x) = h(x) - \tilde h(x), \quad x \in \spt,
\end{equation}
where $h$ is the local energy density associated with $\M$ (i.e., the thermal system) and $\tilde h$ the one associated with $\Mt$ (i.e., the thermal bath). In order to formalize this mathematically, we need to describe (at least) three different aspects of $\ell$:
\begin{enumerate}[(1)]
 \item that it is a covariant quantum field, or technically an operator-valued distribution, associated with the net $\Mh$,
 \item \label{it:twoparts} that it splits into two parts, one associated with $\M$ and one with $\Mt$,
 \item that, when integrated over the Cauchy surface $\csf$, it yields the Liouvillian $L$.
\end{enumerate}
We will treat these aspects one by one, and draw consequences from our definitions. Our main point is that (\ref{it:twoparts}), together with a ``usual'' QEI for $h$,  leads to a quantum $L^4$ inequality for $\ell$.

\subsection{Relation to algebras}

We base our notion of a local quantum field on affiliation with a net of von Neumann algebras, cf.~\cite{FreHer:pointlike_fields}. More formally:
\begin{definition}\label{def:associated}
 Let $\N$ be an isotonous and $U$-covariant net of von Neumann algebras on $\Hil$. A \emph{quantum field $\phi$ associated with $\N$} is a complex-linear map $\phi: f \mapsto \phi(f)$ which assigns to every test function $f \in \ccs(\spt)$ a densely defined closed operator $\phi(f)$ on $\Hil$, such that  $\phi(f) \affil \N(\ocal')'\cap\N(\spt)$ whenever $\operatorname{supp} f \subset \ocal$. We call $\phi$ \emph{covariant} if $\phi( \Xi_t^\ast f) = U(t) \phi(f) U(t)^\ast$ for all $f,t$. We call $\phi$ \emph{symmetric} if $\phi(f)$ is symmetric whenever $f$ is real-valued.  
\end{definition}
By complex-linear, we mean that $\phi(\lambda f+g)=\lambda\phi(f)+\phi(g)$ on a common core for all three operators. Note that we do not require a common dense domain for all $\phi(f)$, nor that $\Omega \in \dom{\phi(f)}$ a priori. We also do not impose continuity conditions  on $\phi(f)$ in $f$.

Somewhat against usual conventions, the locality of our fields is defined by relative commutation to the net $\N$, not by commutation to itself at spacelike distances;%
\footnote{%
In usual terms, we demand that our fields are affiliated with the algebras of the \emph{dualized} net, $\N_\dual(\ocal)\coloneqq\N(\ocal')'\cap\N(\spt)$. This net $\N_\dual$ is isotonous and covariant, but need not be local even if $\N$ is. In quantum field theory, at least in the case of the vacuum representation $\pi$ of simple models and for sufficiently regular regions $\ocal$, it is known that $\pi(\A(\ocal))'' = \pi(\A(\ocal'))'$ (implied by \emph{Haag duality}), see, e.g., \cite{Ara:lattice}; thus $\M=\M_\dual$. By analogy, one may conjecture that in thermal representations of such models, we have $\M=\M_\dual$ as well, or at least that $\M_\dual$ is in itself a local net, a variant of \emph{essential duality} \cite{Yngvason:duality}. However, to the best of the authors' knowledge, this model-dependent question has not been settled in the literature for the thermal case, and it is beyond the scope of our current investigation. 
}
this is technically better suited when working with affiliated operators, see Appendix~\ref{app:affil}.

For our applications, the net $\N$ will be one of $\M$, $\Mt$ and $\Mh$. Specifically, the Liouvillian density $\ell$ should be a covariant symmetric quantum field associated with $\Mh$.

\subsection{Splitting}

The next aspect is a split of the field into a ``system'' and a ``bath'' part. Formally we define:
\begin{definition}\label{def:split}
Let $\ell$ be a symmetric quantum field associated with $\Mh$. 
We say that $\ell$ is \emph{split} if there are symmetric quantum fields $h$, $\tilde h$ associated with $\M$, $\Mt$ respectively, and a dense subspace $\domSplit\subset \Hil$ with $\domSplit \subset \dom{\ell(f)} \cap \dom{h(f)} \cap \dom{\tilde h(f)}$ for all $f$, such that
 \begin{equation}\label{eq:lsplit}
    (\vpsi, \ell(f)  \vpsi) = (\vpsi, h(f) \vpsi)  - (\vpsi, \tilde h(f) \vpsi)
 \end{equation}
 for all $f \in \ccs(\spt)$ and $\vpsi\in\domSplit$.
\end{definition}

When in applications, two fields $h, \tilde h$ are given, then under suitable conditions on the operator domains, their sum or difference defines a split field  in the above sense:

\begin{proposition} \label{prop:ellsum}
    If $h$, $\tilde h$ are symmetric quantum fields associated with $\M,\Mt$ respectively, and if all $h(f)$, $\tilde h(f)$ have a common core $\mathcal{D}$, then
    \begin{equation*}
        \ell(f) := \overline{h(f) - \tilde h(f)}
    \end{equation*}
    defines a symmetric quantum field $\ell$ associated with $\Mh$, which is split with $\domSplit=\mathcal{D}$. 
\end{proposition}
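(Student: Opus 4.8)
The plan is to verify, in turn, the requirements of Definition~\ref{def:associated} (that $\ell$ is a complex-linear assignment of densely defined closed operators, affiliated with the right algebras), then symmetry, and finally the split property of Definition~\ref{def:split}.

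First I would reduce the two \emph{different} affiliations of $h(f)$ and $\tilde h(f)$ to affiliation with a single algebra. Fixing $\ocal$ with $\operatorname{supp} f\subset\ocal$ and writing $\N_h\coloneqq\Mh(\ocal')'\cap\Mh(\spt)$ for the target algebra, and using $\Mh(\ocal')=\M(\ocal')\vee\Mt(\ocal')$ together with $\Mh(\spt)=\M(\spt)\vee\Mt(\spt)$, one has $\Mh(\ocal')'=\M(\ocal')'\cap\Mt(\ocal')'$. I would then check the inclusions
\begin{equation*}
 \M(\ocal')'\cap\M(\spt)\subseteq\N_h,\qquad \Mt(\ocal')'\cap\Mt(\spt)\subseteq\N_h .
\end{equation*}
For the first, note it lies in $\M(\ocal')'$ trivially; it lies in $\Mt(\ocal')'$ because $\Mt(\ocal')\subseteq\Mt(\spt)=\M(\spt)'$, whence $\M(\spt)\subseteq\Mt(\ocal')'$; and it lies in $\Mh(\spt)$ since $\M(\spt)\subseteq\Mh(\spt)$. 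The second inclusion follows by the symmetric argument under $\M\leftrightarrow\Mt$. Consequently both $h(f)$ and $\tilde h(f)$ are affiliated with the \emph{common} algebra $\N_h$.

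Next I would establish closedness, symmetry, and affiliation of $\ell(f)$. For real-valued $f$ the operators $h(f),\tilde h(f)$ are symmetric, hence so is $(h(f)-\tilde h(f))|_{\mathcal{D}}$; a symmetric operator is closable, so $\ell(f)=\overline{h(f)-\tilde h(f)}$ is a densely defined closed symmetric operator, which yields the symmetry of $\ell$. For general complex $f=f_1+if_2$ I would reduce to this case using the complex-linearity of $h,\tilde h$ on common cores, writing $\ell(f)|_{\mathcal{D}}$ as a combination of the symmetric (hence closable) operators associated with $f_1,f_2$. The affiliation $\ell(f)\affil\N_h$ is where the real work lies: since both $h(f)$ and $\tilde h(f)$ are affiliated with $\N_h$, every unitary $u'\in\N_h'$ preserves $\dom{h(f)}$ and $\dom{\tilde h(f)}$ and commutes with each; hence $u'$ commutes with the difference on $\mathcal{D}$, and a graph-closure argument promotes this to $u'\ell(f)u'^\ast=\overline{(h(f)-\tilde h(f))|_{u'\mathcal{D}}}$. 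It then remains to identify the right-hand side with $\ell(f)$, i.e.\ to show that $u'\mathcal{D}$ is again a core for $\ell(f)$; this is the point I would hand to the affiliation machinery of Appendix~\ref{app:affil} (the closure of a difference of operators affiliated with a fixed von Neumann algebra and sharing a common core remains affiliated with that algebra).

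Finally, complex-linearity of $f\mapsto\ell(f)$ is inherited from that of $h,\tilde h$ after passing to closures, so $\ell$ is a quantum field associated with $\Mh$. For the split property I take $\domSplit=\mathcal{D}$: it is dense (being a core for the densely defined $h(f)$), it is contained in $\dom{\ell(f)}\cap\dom{h(f)}\cap\dom{\tilde h(f)}$ for every $f$ (in the latter two since $\mathcal{D}$ is a common core, and in $\dom{\ell(f)}$ since $\mathcal{D}\subset\dom{(h(f)-\tilde h(f))}\subset\dom{\ell(f)}$), and on $\mathcal{D}$ one has $\ell(f)\vpsi=h(f)\vpsi-\tilde h(f)\vpsi$ by construction, giving \eqref{eq:lsplit} at once. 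I expect the affiliation step to be the main obstacle: the subtlety is precisely that closing the difference on the \emph{prescribed} core $\mathcal{D}$, rather than on the full domain $\dom{h(f)}\cap\dom{\tilde h(f)}$, must still yield an operator affiliated with $\N_h$, so that the commutation with $\N_h'$ has to survive both the restriction to $\mathcal{D}$ and the passage to the closure.
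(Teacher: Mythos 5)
Your outline is the paper's: verify the requirements of Definition~\ref{def:associated}, obtain closability from symmetry, reduce affiliation to the appendix machinery plus a lattice computation, and observe that the split property holds by construction with $\domSplit=\mathcal{D}$. Your two inclusions $\M(\ocal')'\cap\M(\spt)\subseteq\N_h$ and $\Mt(\ocal')'\cap\Mt(\spt)\subseteq\N_h$ are exactly the content of the paper's Eq.~\eqref{eq:affilunion}; the only organizational difference is that you merge the two algebras into the common $\N_h$ \emph{first}, whereas the paper applies Lemma~\ref{lem:sumaffil} to $T_1=h(f)$ and $T_2=-\tilde h(f)$ with the two \emph{different} algebras $\M(\ocal')'\cap\M(\spt)$ and $\Mt(\ocal')'\cap\Mt(\spt)$, concluding affiliation with their join, and computes the join afterwards. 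Either order is fine.

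The crux step, however, as you set it up, has a genuine problem --- the very one you flag at the end. The fact you want from Appendix~\ref{app:affil}, namely that the closure of a difference taken on a \emph{prescribed} common core $\mathcal{D}$ remains affiliated, is not what the appendix provides: Lemma~\ref{lem:sumaffil} concerns the closure of $T_1+T_2$ on the natural domain $\dom{T_1}\cap\dom{T_2}$. This distinction is exactly where your unitary-conjugation argument gets stuck: with $\ell(f)\coloneqq\overline{(h(f)-\tilde h(f))\restriction\mathcal{D}}$ you cannot verify that $u'\mathcal{D}\subset\dom{\ell(f)}$, let alone that $u'\mathcal{D}$ is a core. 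The paper avoids this entirely by reading $\overline{h(f)-\tilde h(f)}$ as the closure on $\dom{h(f)}\cap\dom{\tilde h(f)}$; the subspace $\mathcal{D}$ only guarantees dense definedness, serves as the common core for linearity in $f$, and acts as $\domSplit$. With that reading, every element of the commutant of the join preserves the intersection domain (by the two separate affiliations) and commutes with the difference there, and Lemma~\ref{lem: weaker_affiliation} finishes the argument: it requires commutation with a weakly dense $\ast$-subalgebra of the commutant on \emph{some} core of the closed operator, together with the inclusion $A(\mathrm{core})\subset\dom{T}$; closedness of $T$ then propagates the commutation to all of $\dom{T}$ by a graph-limit argument, so no core property of $u'\mathcal{D}$ is ever needed. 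In short, your worry about cores is a misdiagnosis of what the machinery asks for, but your instinct that closing on the prescribed $\mathcal{D}$ is problematic is sound --- the fix is not to do that. A final minor point: for complex $f$, closability does not follow from being ``a combination of closable operators'' (sums of closable operators can fail to be closable); the quick argument is that for symmetric $A,B$ on $\mathcal{D}$ one has $(A+iB)^\ast\supseteq A-iB$, so the adjoint is densely defined.
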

\begin{proof}
 The hypothesis guarantees that $h(f)-\tilde h(f)$ is densely defined%
 ; it is also closable since it is symmetric. It inherits linearity in $f$ from $h$ and $\tilde h$, with $\mathcal{D}$ providing the required common core. Further, with $h(f)\affil\M(\ocal')'\cap \M(\spt)$ and $\tilde h(f)\affil\Mt(\ocal')'\cap \Mt(\spt)$, Lemma~\ref{lem:sumaffil} in the appendix shows that $\ell(f)$ is affiliated with
 \begin{equation}\label{eq:affilunion}
   \big(\M(\ocal')'\wedge \M(\spt) \big) \vee \big( \Mt(\ocal')'\wedge \Mt(\spt) \big)
   = 
   \big(\M(\ocal') \vee \Mt(\ocal')\big)' 
   = \Mh(\ocal')'
 \end{equation}
 as desired. It is clear by construction that $\ell$ is split. 
\end{proof}

Now if a quantum field is split, our results in \autoref{sec:qilp} imply that a quantum $L^4$  inequality for $h$ -- which would follow from a quantum energy inequality in the ``usual'' sense -- implies the same type of inequality for $\ell$.

\begin{theorem}\label{thm:splitdensity}
    Let $\ell=h-\tilde h$ be a symmetric and split quantum field, 
    and let $f \in \ccs(\R^4)$ be such that $\Omega\in\domSplit$. Let $\mathcal{D}\subset \Hil$ be a dense subspace with $\mathcal{D} \subset \domSplit \cap \domAffil{\tilde h(f)}$. Then $\ell(f)$, as a sesquilinear form on $\mathcal{D} \times \mathcal{D}$, fulfills a nontrivial quantum $L^4(\M(\spt),\Omega)$ inequality if and only if $h(f)$ does.
\end{theorem}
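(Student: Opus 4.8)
The plan is to reduce the whole statement to the elementary observation that, on $\mathcal{D}\times\mathcal{D}$, the forms $\ell(f)$ and $h(f)$ differ only by the form of $\tilde h(f)$, and that this difference is $L^4(\M(\spt),\Omega)$-bounded \emph{in both directions}. Since adding or subtracting a two-sided $L^4$-bounded form alters neither the property of being bounded below nor that of being bounded above, both halves of the nontriviality condition (bounded below, \emph{not} bounded above) transfer back and forth between $h(f)$ and $\ell(f)$, yielding the claimed equivalence. The work, such as it is, lies entirely in justifying that the form of $\tilde h(f)$ is $L^4$-bounded, which I would obtain from Lemma~\ref{lem:extended_l4_set}.

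Concretely, I would first verify that $\tilde h(f)$ satisfies the hypotheses of Lemma~\ref{lem:extended_l4_set} with $T'=\tilde h(f)$ (and with the algebra of that lemma taken to be $\M(\spt)$). By Definition~\ref{def:associated}, $\tilde h(f)$ is a densely defined closed operator, and it is symmetric since we are in the relevant case of real-valued $f$ and $\tilde h$ is a symmetric field. For the affiliation, note $\tilde h(f)\affil\Mt(\ocal')'\cap\Mt(\spt)\subset\Mt(\spt)$; because affiliation with a von Neumann subalgebra entails affiliation with the ambient algebra, $\tilde h(f)\affil\Mt(\spt)$, and from $\M(\spt)=\Mt(\spt)'$ we get $\Mt(\spt)=\M(\spt)'$, so $\tilde h(f)$ is affiliated with $\M(\spt)'$ exactly as required. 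The domain condition $\Omega\in\dom{\tilde h(f)}$ follows from $\Omega\in\domSplit\subset\dom{\tilde h(f)}$, and the hypothesis $\mathcal{D}\subset\domAffil{\tilde h(f)}$ supplies precisely the dense subspace the lemma needs.

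Applying Lemma~\ref{lem:extended_l4_set} then delivers two conclusions: $\mathcal{D}\subset\domAffil{\tilde h(f)}\subset L^4(\M(\spt),\Omega)$, so all three forms are indeed defined on a subspace of $\Ban=L^4(\M(\spt),\Omega)$ as the definitions demand; and there is a constant $c\geq 0$ with $\lvert(\vpsi,\tilde h(f)\vpsi)\rvert\leq c\,\|\vpsi\|_4^2$ for every $\vpsi\in\mathcal{D}$. Since $\mathcal{D}\subset\domSplit$, the split identity \eqref{eq:lsplit} gives $(\vpsi,\ell(f)\vpsi)=(\vpsi,h(f)\vpsi)-(\vpsi,\tilde h(f)\vpsi)$ on $\mathcal{D}$. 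If $h(f)$ is $L^4$-bounded below by $c_h$, then $(\vpsi,\ell(f)\vpsi)\geq(c_h-c)\|\vpsi\|_4^2$; conversely, writing $h(f)=\ell(f)+\tilde h(f)$ shows that $\ell(f)$ bounded below implies $h(f)$ bounded below. The identical estimate with $\leq$ shows that $h(f)$ is bounded above iff $\ell(f)$ is, hence that $h(f)$ is \emph{not} bounded above iff $\ell(f)$ is not. Combining the two directions, $\ell(f)$ is bounded below but not above precisely when $h(f)$ is.

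The only genuine step is the verification in the second paragraph, in particular the affiliation $\tilde h(f)\affil\M(\spt)'$: one must track affiliation under commutants, using $\M(\spt)=\Mt(\spt)'$ together with the fact that affiliation with the smaller algebra $\Mt(\ocal')'\cap\Mt(\spt)$ upgrades to affiliation with $\Mt(\spt)=\M(\spt)'$. Once this bookkeeping and the three domain conditions of Lemma~\ref{lem:extended_l4_set} are in place, the remainder is the one-line perturbation argument above, so I do not anticipate any serious obstacle beyond getting the commutant identifications exactly right.
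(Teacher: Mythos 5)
Your proof is correct and follows essentially the same route as the paper's: both apply Lemma~\ref{lem:extended_l4_set} to $\tilde h(f)$ (via $\tilde h(f)\affil\Mt(\spt)=\M(\spt)'$) to obtain a two-sided $L^4(\M(\spt),\Omega)$ bound on $\mathcal{D}$, and then transfer lower and upper bounds between $h(f)$ and $\ell(f)$ through the split identity \eqref{eq:lsplit}. Your explicit verification of the hypotheses of Lemma~\ref{lem:extended_l4_set} (closedness, symmetry for real-valued $f$, and the affiliation bookkeeping through the commutant) simply spells out what the paper's terser proof leaves implicit.
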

\begin{proof}
Since $\tilde h(f)\affil \Mt(\spt)=\M(\spt)'$, Lemma~\ref{lem:extended_l4_set} shows that $\tilde h(f)$ is $L^4(\M(\spt),\Omega)$-bounded on $\mathcal{D}$ by a constant $\tilde c>0$. From \eqref{eq:lsplit},  the triangle inequality yields for all $\vpsi \in \mathcal{D}$,
\begin{equation}
     \left(\vpsi,h(f)\vpsi\right) -  \tilde c \|\vpsi\|_4^2
     \leq 
    \left(\vpsi,\ell(f)\vpsi\right)
     \leq
     \left(\vpsi,h(f)\vpsi\right) +  \tilde c \|\vpsi\|_4^2.
\end{equation}
Thus a (nontrivial) quantum $L^4$ inequality for $h(f)$ on $\mathcal{D}\times\mathcal{D}$ implies a (nontrivial) quantum $L^4$ inequality for $\ell(f)$ there and vice versa.
\end{proof}

\subsection{Integration to the Liouvillian}

Lastly, we ask in which sense the field $\ell$ is related to the Liouvillian. Already for the energy density in the vacuum sector on Minkowski space, this question is mathematically quite delicate, and normally treated by means of test functions $f$ that are progressively stretched out along a Cauchy surface \cite{Orzalesi:currents}. We choose a more direct approach here, referring to ``pointlike fields'' as sesquilinear forms, as this will be the easiest to verify in examples.

\begin{definition}\label{def:integration}
  We say that a covariant symmetric quantum field $\ell$ associated with a net $\N$ is a \emph{density for $L$ in the strong sense} if there exists a dense subspace $\mathcal{D}\subset \Hil$, invariant under $U(t)$, contained in $\dom{L}$ and being a core for every $\ell(f)$; and for every $x\in\spt$ there exists a sesquilinear form $\ell(x)$ on $\mathcal{D} \times \mathcal{D}$, such that for every $\vpsi \in \mathcal{D}$,
  \begin{enumerate}[(i)]
   \item \label{it:fieldcont} the function $\spt \to \mathbb{C}$, $x \mapsto (\vpsi, \ell(x) \vpsi)$ is continuous;
   \item \label{it:fieldint}$\displaystyle (\vpsi, \ell(f) \vpsi) = \int_{\spt} f(x)\, \big(\vpsi, \ell(x) \vpsi\big)\, \mathrm{dVol}_{\spt}(x)$ for any $f \in \ccs(\spt)$;
   \item  \label{it:ellint} $\displaystyle(\vpsi, L \vpsi) = \int_{\csf} \big(\vpsi, \ell(x) \vpsi\big) \,\mathrm{dVol}_{\csf}(x)$.
  \end{enumerate}
\end{definition}

Note that by covariance of the field and invariance of $\mathcal{D}$ and $L$ under $U(t)$, condition (\ref{it:ellint}) implies that
\begin{equation}
\label{eq:ellintcov} 
\begin{aligned}(\vpsi, L \vpsi) &= (U(t)\vpsi, L U(t)\vpsi) 
\\&= \int_{\csf} \big(\vpsi, \ell(\Xi_t(x)) \vpsi\big) \,\mathrm{dVol}_{\csf}(x) \quad \text{for every }\vpsi\in\mathcal{D},\, t \in \R.
\end{aligned}
\end{equation}

We also define an alternative, much weaker notion of density for the generator, based on commutators with the local algebras, in the spirit of \cite{Verch:anec}. This is mainly done to justify the formula \eqref{eq:lfromh} for the Liouvillian density at a model-independent level. The approach is to define a Liouvillian density as a field that, when averaged in space-time (but in a suitable neighbourhood of the time-0 base of a given region $\ocal$), gives the same commutator with local observables in $\ocal$ as the full Liouvillian $L$.  

In preparation, we first introduce a special class of test functions.

\begin{definition}\label{def:spacelikeone}
   Let $\ocal \subset \ocal_1$ be two regions in $\spt$. We say that $f \in \ccs(\ocal_1)$ is \emph{spacelike-one for $\ocal$} if it is of the form
   \begin{equation*}
       f( \Xi_t x_0 ) = f_T(t) f_S(x_0), \quad x_0 \in \csf, \; t \in \R,
   \end{equation*}
   with some $f_T\in\ccs(\R)$, $f_S \in \ccs(\csf)$, where $f_T\geq 0$, $\int f_T(t) \,\mathrm{d}t = 1$, and for each $x_0\in\csf$ either $f_S=1$ in  a neighbourhood of $x_0$, or $\Xi_t x_0 \in \ocal'$ for all $t \in \operatorname{supp}(f_T)$.
\end{definition}

Also, for any net $\N$ of von Neumann algebras, the Liouvillian introduces a symmetric spatial derivation on the algebra $\N(\spt)$ by $\delta(A)=i[L,A]$, see \cite[Sec.~3.2.5]{BraRob:qsm1}, on a uniquely defined domain $\dom{\delta}\subset \N(\spt)$, cf.~\cite[Proposition~3.2.55]{BraRob:qsm1}. The derivation is closable in the $\sigma$-weak topology. Elements in its domain can be approximated as follows: If for $A\in\mathfrak{B}(\Hil)$, $f \in \ccs(\R)$ we set
\begin{equation}\label{eq:Bint}
  B = A(f) \coloneqq \int f(t) U(t)AU(t)^\ast \,\mathrm{d}t
\end{equation}
as a weak integral, then $B \in \dom{\delta}$, and  $\delta(B)=A(-f')$; in fact, $B \in \dom{\delta^n}$ for any $n$ (we call such elements \emph{smooth}). By taking $f$ as a delta sequence, we can $\sigma$-weakly approximate any $A$ with elements of the form \eqref{eq:Bint}; and if $A\in\dom{\delta}$, then we can thus approximate $(A,\delta(A))$ in the graph sense. If $\N$ is inner regular, such approximations can be done within every local algebra $\N(\ocal)$, i.e., elements of the form $A(f)$, $A\in\N(\ocal)$, are $\sigma$-weakly dense in $\N(\ocal)$. 

We can now state our weaker notion for the density of generators.
\begin{definition} \label{def:densityweak}
   A covariant net $\N$ is said to have \emph{local densities  in the weak sense} for the generator $L$ if, for every causally complete region $\mathcal{O}\subset \spt$ and every precompact region $\ocal_1 \supset \bar\ocal$ with nonempty $ \ocal_1'$, there is a symmetric quantum field $\ell_\ocal$ on the spacetime $\ocal_1$ associated with $\N \restriction \ocal_1$ (the local density) with the following property. For every $f \in \ccs(\ocal_1)$ which is spacelike-one for $\ocal$, every $A \in \N(\ocal) \cap \dom{\delta}$,  and every $\vpsi \in \dom{\ell_\ocal(f)}$, it holds that
   \begin{equation}\label{eq:locomm}
      (\vpsi, \delta(A) \vpsi) = i (\vpsi, [ \ell_\ocal(f), A ] \vpsi ).
   \end{equation}
\end{definition}

The commutator in \eqref{eq:locomm} is meant in the weak sense. Note that the field $\ell_\ocal$ does not only depend on $\ocal$, but also on the choice of $\ocal_1$, but we suppress this in the notation. 

We show that the above definition is consistent with our notion of generators in the \emph{strong} sense:  

\begin{proposition}
   If $\ell$ is a quantum field associated with a net $\N$ which is a density for $L$ in the strong sense, and $\mathcal{D}$ is invariant under smooth elements of $\N(\ocal)$ for precompact $\ocal$, then $\N$ has local densities in the weak sense, namely $\ell_{\ocal} = \ell \restriction \ccs(\ocal_1)$.
\end{proposition}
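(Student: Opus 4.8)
\emph{Proof sketch.} The plan is to set $\ell_\ocal := \ell\restriction\ccs(\ocal_1)$ and verify identity~\eqref{eq:locomm}. First, $\ell_\ocal$ is a symmetric quantum field on $\ocal_1$ associated with $\N\restriction\ocal_1$: symmetry is inherited from $\ell$, and for $\operatorname{supp} f\subset\mathcal{P}\subset\ocal_1$ the affiliation $\ell(f)\affil\N(\mathcal{P}')'\cap\N(\spt)$ of Definition~\ref{def:associated} supplies the required localization. Since $\delta(A)=i[L,A]$, \eqref{eq:locomm} is equivalent to $(\vpsi,[L,A]\vpsi)=(\vpsi,[\ell(f),A]\vpsi)$, where both commutators are read in the weak sense (e.g.\ $(\vpsi,[\ell(f),A]\vpsi):=(\ell(f)\vpsi,A\vpsi)-(A^\ast\vpsi,\ell(f)\vpsi)$, legitimate as $\ell(f)$ is symmetric and $A$ bounded). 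I would prove this first for $\vpsi\in\mathcal{D}$ and smooth $A\in\N(\ocal)$ — so that $A\vpsi,A^\ast\vpsi\in\mathcal{D}$ by the invariance hypothesis — and remove both restrictions afterwards. For such $\vpsi,A$, polarising property~(\ref{it:fieldint}) in its two vector slots gives $(\phi,\ell(f)\chi)=\int_\spt f(x)(\phi,\ell(x)\chi)\,\mathrm{dVol}_\spt(x)$ on $\mathcal{D}\times\mathcal{D}$; inserting $(\phi,\chi)=(\vpsi,A\vpsi)$ and $(A^\ast\vpsi,\vpsi)$ and subtracting yields
\[ (\vpsi,[\ell(f),A]\vpsi)=\int_\spt f(x)\,(\vpsi,[\ell(x),A]\vpsi)\,\mathrm{dVol}_\spt(x), \]
with $(\vpsi,[\ell(x),A]\vpsi):=(\vpsi,\ell(x)A\vpsi)-(A^\ast\vpsi,\ell(x)\vpsi)$ continuous in $x$ by~(\ref{it:fieldcont}).

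The crucial step is the pointwise locality $(\vpsi,[\ell(x),A]\vpsi)=0$ for $x\in\ocal'\cap\ocal_1$. For any $g\in\ccs(\ocal'\cap\ocal_1)$ we have $\operatorname{supp} g\subset\ocal'$, so causal completeness $\ocal''=\ocal$ and isotony give $\N((\operatorname{supp} g)')\supset\N(\ocal)$, whence $\ell(g)\affil\N((\operatorname{supp} g)')'\cap\N(\spt)\subset\N(\ocal)'$. Thus $\ell(g)$ commutes with $A\in\N(\ocal)$ (Appendix~\ref{app:affil}), so $(\vpsi,[\ell(g),A]\vpsi)=0$; expanding this by the same polarised form of~(\ref{it:fieldint}) gives $\int_\spt g(x)(\vpsi,[\ell(x),A]\vpsi)\,\mathrm{dVol}_\spt(x)=0$ for all such $g$. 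By continuity of the integrand and the fundamental lemma of the calculus of variations, the form then vanishes on the open set $\ocal'\cap\ocal_1$.

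Next, polarising the covariant formula~\eqref{eq:ellintcov} and inserting $(\vpsi,A\vpsi)$, $(A^\ast\vpsi,\vpsi)$ as before gives, for every $t\in\R$,
\[ (\vpsi,[L,A]\vpsi)=\int_\csf (\vpsi,[\ell(\Xi_t x_0),A]\vpsi)\,\mathrm{dVol}_\csf(x_0), \]
using $(A^\ast\vpsi,L\vpsi)=(\vpsi,AL\vpsi)$ since $A$ is bounded. Multiplying by $f_T(t)$, integrating in $t$ with $\int f_T=1$, and using the factorisation $\mathrm{dVol}_\spt=\di t\,\mathrm{dVol}_\csf$ along the Killing flow, the difference between this and the integral for $(\vpsi,[\ell(f),A]\vpsi)$ (recall $f(\Xi_t x_0)=f_T(t)f_S(x_0)$) equals
\[ \int_\R\!\!\int_\csf f_T(t)\,(1-f_S(x_0))\,(\vpsi,[\ell(\Xi_t x_0),A]\vpsi)\,\mathrm{dVol}_\csf(x_0)\,\di t. \]
Wherever $f_T(t)\neq0$ and $f_S(x_0)\neq1$, the spacelike-one property (Definition~\ref{def:spacelikeone}) forces $\Xi_t x_0\in\ocal'$ (and $\Xi_t x_0\in\ocal_1$, as $\operatorname{supp} f\subset\ocal_1$), so the integrand vanishes by the key step. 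Hence $(\vpsi,[L,A]\vpsi)=(\vpsi,[\ell(f),A]\vpsi)$ for $\vpsi\in\mathcal{D}$ and smooth $A$.

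It remains to extend in both arguments. For $A\in\N(\ocal)\cap\dom\delta$, the discussion following~\eqref{eq:Bint} together with inner regularity of $\N$ furnishes smooth $A_n\in\N(\ocal)$ approximating $(A,\delta(A))$ $\sigma$-weakly in the graph sense; both sides of the identity are $\sigma$-weakly continuous in the operator argument (through $\delta(A_n)$, and through $(\ell(f)\vpsi,A_n\vpsi)-(A_n^\ast\vpsi,\ell(f)\vpsi)$), so it passes to $A$. To reach arbitrary $\vpsi\in\dom{\ell(f)}$, use that $\mathcal{D}$ is a core for $\ell(f)$: pick $\vpsi_n\in\mathcal{D}$ with $\vpsi_n\to\vpsi$ and $\ell(f)\vpsi_n\to\ell(f)\vpsi$; boundedness of $\delta(A)$ and $A$ then gives convergence of both weak forms, establishing~\eqref{eq:locomm} in full generality. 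The step I expect to be the main obstacle is the pointwise locality in the second paragraph: upgrading the affiliation-based, \emph{smeared} commutation to vanishing of the \emph{pointlike} form requires simultaneously the affiliation $\ell(g)\affil\N(\ocal)'$ (where causal completeness must be invoked correctly), the continuity~(\ref{it:fieldcont}), the interchange of the commutator with the space-time integral defining $\ell(g)$, and the variational lemma; the remaining bookkeeping with the spacelike-one split and the volume factorisation is comparatively routine.
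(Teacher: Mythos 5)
Your overall strategy is sound and, after one repair, is essentially a reorganization of the paper's own argument: the paper likewise polarizes properties (\ref{it:fieldint})/(\ref{it:ellint}), likewise exploits that $\ell$ smeared with $\ocal'$-supported test functions is affiliated with $\N(\ocal'')'\cap\N(\spt)\subset\N(\ocal)'$ (this is where causal completeness enters), and likewise extends by the core property of $\mathcal{D}$ and $\sigma$-weak density of smooth elements. The organizational difference is that the paper never localizes to points: it writes $(\vpsi,L\vpsi')=\lim_n\big(\vpsi,(\ell(f)+\ell(h_n))\vpsi'\big)$ with compactly supported cutoffs $h_n=f_T(1-f_S)g_n\in\ccs(\ocal')$ and kills $[\ell(h_n),A]$ directly by affiliation, whereas you first derive the pointlike statement $(\vpsi,[\ell(x),A]\vpsi)=0$ via a du Bois--Reymond argument and then integrate.

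However, as written there is a genuine gap at exactly the step you flagged as critical. You establish the pointwise locality only for $x\in\ocal'\cap\ocal_1$, but in your final step the integrand $f_T(t)\,(1-f_S(x_0))\,(\vpsi,[\ell(\Xi_t x_0),A]\vpsi)$ is supported on an \emph{unbounded} region: since $f_S\in\ccs(\csf)$, one has $1-f_S\equiv 1$ outside the compact set $\operatorname{supp} f_S$, so the points $\Xi_t x_0$ in question run arbitrarily far out along the Cauchy surface and in general lie outside the precompact region $\ocal_1$. Your parenthetical claim ``$\Xi_t x_0\in\ocal_1$, as $\operatorname{supp} f\subset\ocal_1$'' is false---the support of $f$ controls where $f$ is nonzero, not where $1-f_S$ is nonzero; only the membership $\Xi_t x_0\in\ocal'$ (the correct half of your sentence) survives. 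The repair is immediate: the hypothesis of the proposition concerns the globally defined field $\ell$ (only its restriction to $\ccs(\ocal_1)$ is the candidate $\ell_\ocal$), so you may smear with arbitrary $g\in\ccs(\ocal')$; the identical affiliation-plus-variational-lemma argument then yields $(\vpsi,[\ell(x),A]\vpsi)=0$ on all of $\ocal'$, which is what the last step actually requires. These $\ocal'$-supported smearings are precisely the functions $h_n$ of the paper's proof; with this correction your argument closes, at the same level of rigor as the paper's concerning the interpretation of the improper integral over $\csf$.
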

\begin{proof}
    Fix $\ocal,\ocal_1$ as in Definition~\ref{def:densityweak} and let $f=f_Sf_T\in\ccs(\ocal_1)$ be spacelike-one for $\ocal$. From Eq.~\eqref{eq:ellintcov} and since $\int f_T(t)\di t=1$, we have 
    \begin{equation}
       (\vpsi, L\vpsi) = \int f_T(t) \big(\vpsi,\ell(\Xi_t(x_0))\vpsi\big) \,\di t \,\mathrm{dVol}_{\csf}(x_0).
    \end{equation}
    Choosing a suitable sequence $g_n\in\ccs(\csf)$ which is 1 on an increasing sequence of compact sets exhausting $\csf$, we can write for $\vpsi\in\mathcal{D}$,
    \begin{equation}
    \begin{aligned}
       (\vpsi, L\vpsi) &= \lim_{n \to \infty}\int \Big(f_T(t)f_S(x_0) + \underbrace{f_T(t)\big(1-f_S(x_0)\big)g_n(x_0)}_{=:h_n(\Xi_t(x_0))} \Big) \big(\vpsi,\ell(\Xi_t(x_0))\vpsi\big) \,\di t \,\mathrm{dVol}_{\csf}(x_0)
       \\
       &= \lim_{n \to \infty}\int (f(x) + h_n(x)) \,(\vpsi,\ell(x)\vpsi)\, \mathrm{dVol}_{\spt}(x).
    \end{aligned}
    \end{equation}
    Here $h_n \in \ccs(\ocal')$, since $f$ is spacelike-one for $\ocal$. By polarization and using property (\ref{it:fieldint}) in Def.~\ref{def:densityweak}, we obtain for any $\vpsi,\vpsi'\in\mathcal{D}$,
    \begin{equation}
      (\vpsi, L \vpsi') = \lim_{n \to \infty}(\vpsi, \big(\ell(f)+\ell(h_n)\big) \vpsi').
    \end{equation}
    If now $A$ is a smooth element of $\N(\ocal)$, and thus $A \mathcal{D} \cup A^\ast \mathcal{D}\subset\mathcal{D}$, we obtain 
    \begin{equation}
      (\vpsi, \delta(A) \vpsi) = i(\vpsi, [L,A] \vpsi) = i (\vpsi, [\ell(f),A] \vpsi) + i \lim_{n\to\infty}(\vpsi, [\ell(h_n), A ] \vpsi).
    \end{equation}
    But since $\ell(h_n) \affil \N(\ocal'')'\cap \N(\spt)\subset \N(\ocal)'$ (here causal completeness of $\ocal$ enters), the second summand vanishes. That is, we have obtained the relation \eqref{eq:locomm} (with $\ell_\ocal=\ell$) for $\vpsi\in\mathcal{D}$ and any smooth $A$. With $\mathcal{D}$ being a core for $\ell(f)$, \eqref{eq:locomm} then holds for all $\vpsi\in\dom{ \ell(f)}$, and  with smooth elements being dense in $\dom{\delta}$, it holds for all $A \in \dom{\delta}$.
\end{proof}

The main point about our weaker definition is that it allows us to give meaning to the heuristic formula \eqref{eq:lfromh} for the generator. The idea is to lift a local density that exists in the vacuum sector to the thermal state, which is ``locally'' in the same folium. To that end, we now consider \emph{two} invariant states $\omega_0$, $\omega_\beta$ of the same net $\A$ of $C^\ast$-algebras, of which only $\omega_\beta$ is assumed to be a $\beta$-KMS-state. The other state, $\omega_0$, may be thought of as a vacuum state (but very few specific properties of vacuum states will be needed in the following). Hence we have two GNS representations, $\pi_0$ on $\Hil_0$ and $\pi_\beta$ on $\Hil_\beta$, and we will label the corresponding von Neumann algebras, generators, etc.~accordingly. For consistency, we also label the modular data for $(\M_\beta(\spt),\Omega_\beta)$ as $J_\beta$, $\Delta_\beta$; those for $(\M_0(\spt),\Omega_0)$ need not exist.

The two representations of $\A$ are said to be \emph{locally normal} to each other if, for every precompact $\ocal$ with nonempty $\ocal'$, the representations $\pi_0 \restriction \A(\ocal)$ and $\pi_\beta \restriction \A(\ocal)$ are quasi-equivalent. This means (cf.~\cite[Def.~10.3.1]{KadRin:algebras2}) that there is a $\ast$-isomorphism $\varphi_\ocal$ between $\M_0(\ocal)$ and $\M_\beta(\ocal)$ such that $\varphi_\ocal \circ \pi_0 \restriction \A(\ocal)= \pi_\beta \restriction \A(\ocal)$. Let us assume that both $\M_0$ and $\M_\beta$ have the Reeh-Schlieder property; i.e., $\Omega_0$ is cyclic for $\M_0(\ocal)$ for every nonempty $\ocal$, similarly for $\M_\beta$. (This is a realistic assumption also in the thermal case \cite{Jaekel:rsThermal}.) Replacing $\ocal$ with $\ocal'$ (if nonempty), it follows that $\Omega_0$, resp.\ $\Omega_\beta$, is also separating. As a consequence, the isomorphism $\varphi_\ocal$ is unitarily implementable \cite[Theorem~7.2.9]{KadRin:algebras2}, i.e., there is a unitary $V_\ocal : \Hil_0 \to \Hil_\beta$ such that
\begin{equation}\label{eq:vpv}
    V_\ocal \pi_0(A) V_\ocal^\ast = \pi_\beta(A) \quad \text{ for all $A \in \A(\ocal)$}.
\end{equation}
A forteriori, for any subregion $\ocal_<\subset\ocal$, we have
\begin{equation}\label{eq:vMv}
    V_\ocal \M_0(\ocal_<) V_\ocal^\ast = \M_\beta(\ocal_<),
\end{equation}
since the $\pi_{0,\beta}(A)$, $A\in\A(\ocal_<)$, form a weakly dense subalgebra of $\M_{0,\beta}(\ocal_<)$.
Correspondingly, if $\phi_0$ is a quantum field on $\ocal$ associated with $\M_0 \restriction \ocal$, then
\begin{equation}\label{eq:fieldtransf}
    \phi_\beta(f) :=  V_\ocal \phi(f) V_\ocal^\ast
\end{equation}
defines a quantum field $\phi_\beta$ associated with $\M_\beta \restriction \ocal$, where $\ocal$ now takes the role of the spacetime $\spt$; the domain of the closed operator $\phi_\beta(f)$ is $V_\ocal \dom(\phi_0(f))$.

We note consequences for the derivations $\delta_0$ on $\M_0$ and $\delta_\beta$ on $\Mh_\beta$.

\begin{lemma}\label{lemma:gentogen}
 In the above situation, one has  $V_\ocal (\dom{\delta_0} \cap \M_0(\ocal)) V_\ocal^\ast = \dom{\delta_\beta} \cap \M_\beta(\ocal)$, and 
 \begin{align} \label{eq:vdv}
     V_\ocal \delta_0(B) V_\ocal^\ast = \delta_\beta( V_\ocal B V_\ocal^\ast)\quad \text{ for all }B \in \dom{\delta_0} \cap \M_0(\ocal).
 \end{align}
 Further, it holds that
 \begin{equation}
     J_\beta \delta_\beta(A) J_\beta =  \delta_\beta(J_\beta A J_\beta)
     \quad \text{ for all $A \in \dom{\delta_\beta}$}.
 \end{equation}
\end{lemma}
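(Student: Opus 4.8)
The plan is to establish the domain identity and the intertwining \eqref{eq:vdv} together, by transporting the \emph{local} compatibility of the two dynamics through $V_\ocal$ and using the smooth elements from \eqref{eq:Bint}; the modular identity will be handled separately from the relation $\Delta_\beta^{it}=U_\beta(-\beta t)$.

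\textbf{Local intertwining of the two flows.} Write $\gamma^0_t\coloneqq\Ad U_0(t)$ and $\gamma^\beta_t\coloneqq\Ad U_\beta(t)$. First I would record that for every $\ocal_0\subset\subset\ocal$ there is $\epsilon>0$ with $\Xi_t\ocal_0\subset\ocal$ for $|t|<\epsilon$, and that on such a timescale $V_\ocal$ intertwines the flows:
\[
  V_\ocal\,\gamma^0_t(C)\,V_\ocal^\ast=\gamma^\beta_t\big(V_\ocal C V_\ocal^\ast\big),\qquad C\in\M_0(\ocal_0),\ |t|<\epsilon .
\]
On generators $C=\pi_0(A)$ with $A\in\A(\ocal_0)$ this is immediate from covariance, since $\gamma^0_t(\pi_0(A))=\pi_0(\alpha_t(A))$ with $\alpha_t(A)\in\A(\Xi_t\ocal_0)\subset\A(\ocal)$, so that \eqref{eq:vpv} may be applied to both $A$ and $\alpha_t(A)$ and yields $\pi_\beta(\alpha_t(A))=\gamma^\beta_t(\pi_\beta(A))$. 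Because $\Ad V_\ocal$, $\gamma^0_t$ and $\gamma^\beta_t$ are $\sigma$-weakly continuous and $\pi_0(\A(\ocal_0))$ is $\sigma$-weakly dense in $\M_0(\ocal_0)$, the identity extends to all $C\in\M_0(\ocal_0)$.

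\textbf{Smooth elements and passage to the limit.} I would then feed this into the smearing \eqref{eq:Bint}. For $C\in\M_0(\ocal_0)$ with $\ocal_0\subset\subset\ocal$ and $f\in\ccs(\R)$ supported in $(-\epsilon,\epsilon)$, pulling $\Ad V_\ocal$ through the weak integral and applying the local intertwining gives $V_\ocal\,C(f)\,V_\ocal^\ast=(V_\ocal C V_\ocal^\ast)(f)$, an element of $\M_\beta(\ocal)$ that is smooth for $\delta_\beta$; and since $\delta_0(C(f))=C(-f')$ with $\operatorname{supp}(-f')\subset(-\epsilon,\epsilon)$, the same computation gives $\delta_\beta(V_\ocal\,C(f)\,V_\ocal^\ast)=V_\ocal\,\delta_0(C(f))\,V_\ocal^\ast$. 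Thus the intertwining holds on these special smooth elements. For arbitrary $B\in\dom{\delta_0}\cap\M_0(\ocal)$ I would invoke the graph-density asserted for inner regular nets after \eqref{eq:Bint} to obtain such smooth elements $B_n$ with $(B_n,\delta_0(B_n))\to(B,\delta_0(B))$ $\sigma$-weakly; applying the $\sigma$-weakly continuous map $\Ad V_\ocal$ (which carries $\M_0(\ocal)$ onto $\M_\beta(\ocal)$) and using that $\delta_\beta$ is $\sigma$-weakly closed then yields $V_\ocal B V_\ocal^\ast\in\dom{\delta_\beta}\cap\M_\beta(\ocal)$ together with \eqref{eq:vdv}. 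Running the same argument with $V_\ocal^{-1}=V_\ocal^\ast$, which intertwines $\pi_\beta$ and $\pi_0$ on $\A(\ocal)$, gives the reverse inclusion and hence the domain equality.

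\textbf{Main obstacle.} The delicate point is the localization inside the approximation: a generic $B\in\M_0(\ocal)$, once smeared by $C(f)$, lands in the enlarged region $\bigcup_{t\in\operatorname{supp}f}\Xi_t\ocal\supsetneq\ocal$, on which $V_\ocal$ need not implement the correct isomorphism (implementing unitaries attached to different regions differ by an element of a commutant). The approximants must therefore be genuinely localized in compactly contained subregions $\ocal_0\subset\subset\ocal$, with smearing supports small enough that $\Xi_t\ocal_0\subset\ocal$; securing graph-convergence \emph{and} this strict localization simultaneously is precisely what inner regularity provides, and is where I expect to spend the most care.

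\textbf{The modular identity.} Here no localization is needed. From the modular relation $\Delta_\beta^{it}=U_\beta(-\beta t)$ and the Tomita-Takesaki identity $J_\beta\Delta_\beta^{it}J_\beta=\Delta_\beta^{it}$ (valid since $J_\beta$ is antiunitary), I get $J_\beta U_\beta(t)J_\beta=U_\beta(t)$ for all $t$. Hence $\Ad J_\beta$ commutes with $\gamma^\beta_t$ and preserves $\Mh_\beta(\spt)=J_\beta\Mh_\beta(\spt)J_\beta$, so that $\gamma^\beta_t(J_\beta A J_\beta)=J_\beta\gamma^\beta_t(A)J_\beta$. Feeding this into the same smearing-and-closedness argument as above, now with the $\sigma$-weakly continuous real-linear map $\Ad J_\beta$ in place of $\Ad V_\ocal$, yields $J_\beta A J_\beta\in\dom{\delta_\beta}$ and $\delta_\beta(J_\beta A J_\beta)=J_\beta\delta_\beta(A)J_\beta$ for all $A\in\dom{\delta_\beta}$; equivalently, $J_\beta L_\beta J_\beta=-L_\beta$ together with $J_\beta iJ_\beta=-i$ turns $i[L_\beta,A]$ into $i[L_\beta,J_\beta A J_\beta]$.
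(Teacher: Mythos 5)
Your proposal is correct and takes essentially the same route as the paper's proof: both establish the intertwining on smeared elements whose time-orbit remains inside $\ocal$ (so that the quasi-equivalence unitary $V_\ocal$ applies via covariance), then extend to all of $\dom{\delta_0}\cap\M_0(\ocal)$ by graph-density from inner regularity together with $\sigma$-weak closedness of $\delta_\beta$, obtain the reverse inclusion by exchanging the roles of the two representations, and reduce the modular identity to $J_\beta L_\beta J_\beta=-L_\beta$. The only cosmetic differences are that the paper smears the $C^\ast$-algebra elements $\pi_0(A)$, $A\in\A(\ocal)$, directly (with the orbit condition $\alpha_t(A)\in\A(\ocal)$ for $t\in\operatorname{supp}f$) rather than first extending the flow intertwining to $\M_0(\ocal_0)$ for $\ocal_0\subset\subset\ocal$, and it cites $J_\beta L_\beta J_\beta=-L_\beta$ as known where you derive it from $J_\beta\Delta_\beta^{it}J_\beta=\Delta_\beta^{it}$.
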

\begin{proof}
   Consider $B$ of the form
   \begin{equation}\label{eq:bform}
     B=\int f(t) U_0(t)  \pi_0(A) U_0(t)^\ast \,\mathrm{d}t
  \end{equation}
  where $f \in \ccs(\R)$, $A\in \A(\ocal)$ are such that $\alpha_t(A) \in \A(\ocal)$ for all $t \in \operatorname{supp} f$. This $B$ lies in $\dom(\delta_0)$, and 
  \begin{equation}\label{eq:bsmear}
  \begin{aligned}
     V_\ocal B V_\ocal^\ast 
     &= \int f(t) V_\ocal \pi_0(\alpha_t A) V_\ocal^\ast \,\mathrm{d}t
     = \int f(t) \pi_\beta(\alpha_t A) \,\mathrm{d}t \\ &= \int f(t) U_\beta(t)\pi_\beta(A) U_\beta(t)^\ast\,\mathrm{d}t 
     = \int f(t) U_\beta(t) V_\ocal \pi_0(A) V_\ocal^\ast U_\beta(t)^\ast \,\mathrm{d}t.
     \end{aligned}
    \end{equation}
  Thus $V_\ocal B V_\ocal^\ast \in\dom{\delta_\beta}$; and from a computation similar to \eqref{eq:bsmear}, replacing $f$ with $-f'$, one sees that $\delta_\beta( V_\ocal B V_\ocal^\ast ) =  V_\ocal \delta_0(B) V_\ocal^\ast $.

  Thus \eqref{eq:vdv} holds for $B$ of the form \eqref{eq:bform}. Thanks to inner regularity, we can then $\sigma$-weakly approximate any $(B,\delta_0(B))$ in the graph of $\delta_0 \restriction \M_0(\ocal)$ with elements of the above form (see the remark after Def.~\ref{def:spacelikeone}); since $\delta_\beta$ is $\sigma$-weakly closed, this yields  \eqref{eq:vdv} for any $B \in\dom{\delta_0} \cap \M_0(\ocal)$, including the statement that $V_\ocal (\dom{\delta_0} \cap \M_0(\ocal))V_\ocal^\ast \subset \dom{\delta_\beta} \cap \M_\beta(\ocal)$.  The opposite inclusion follows by reversing the roles of $\delta_0$ and $\delta_\beta$.  
  
  The last part of the lemma follows from the known relation $J_\beta L_\beta J_\beta = - L_\beta$.
\end{proof}

We also recall the notion of a \emph{split inclusion} of von Neumann algebras: An inclusion of two von Neumann algebras $\mathcal{M} \subset \N$ on a Hilbert space $\Hil$ is called \emph{split} if there is a type-I factor $\mathcal{F}$ such that $\mathcal{M} \subset \mathcal{F} \subset \N$. Equivalently \cite[Lemma~2]{DAntoniLongo:interpolation}, there exists a Hilbert space $\mathcal{K}$ and a unitary $Y: \Hil \to \mathcal{K}\otimes\mathcal{K}$ such that $Y \mathcal{M} Y^\ast \subset \mathfrak{B}(\mathcal{K}) \otimes \idop$ and $Y \N' Y^\ast \subset \idop \otimes \mathfrak{B}(\mathcal{K})$. This split property has been discussed extensively in relation to the local algebras in quantum field theory, see \cite{Fewster:splitReview} for a review.

We now show that, under certain conditions, local densities for the generator in $\M_0$, thought as the vacuum representation, lift to such on $\Mh_\beta$, the thermal representations (including both system and bath).
\begin{theorem}
   Let $\omega_0$, $\omega_\beta$ be two invariant states on the net $\A$, of which $\omega_\beta$ fulfills the $\beta$-KMS property. 
   Suppose that  
   \begin{enumerate}[(i)]
     \item \emph{(Reeh-Schlieder property)} for every nonempty region $\ocal$, the inclusions $\M_0(\ocal) \Omega_0\subset\Hil_0$ and $\M_\beta(\ocal)\Omega_ \beta \subset \Hil_\beta$ are dense;
     \item \emph{(local normality)} for every precompact region $\ocal$ with nonempty $\ocal'$, 
     the representations $\pi_0\restriction\A(\ocal)$ and $\pi_\beta\restriction \A(\ocal)$ are quasi-equivalent,
     \item \emph{(split property)} for every precompact region $\ocal_1$ with nonempty $\ocal_1'$, there exists another precompact region $\mathcal{O}_2 \supset \mathcal{O}_1$ with nonempty $\ocal_2'$  such that the inclusion $\M_0(\ocal_1')'\subset\M_0(\ocal_2')'$ is split.
   \end{enumerate}
If $\M_0$ has local densities $\ell_{0,\ocal}$ in the weak sense, then $\Mh_\beta$ has local densities $\ell_{\beta,\ocal}$ in the weak sense that are split; it holds for $f \in \ccs(\ocal_1)$ that
   \begin{equation*}
        \ell_{\beta,\ocal}(f) = \overline{ V \ell_{0,\ocal}(f) V^\ast - J_\beta V \ell_{0,\ocal}(\bar f) V^\ast J_\beta}
   \end{equation*}
   with a certain unitary $V:\Hil_0\to\Hil_\beta$.
\end{theorem}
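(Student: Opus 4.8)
The plan is to transport the vacuum density $\ell_{0,\ocal}$ into the thermal representation to build a ``system'' field $h_\beta$, to reflect it through $J_\beta$ into a ``bath'' field $\tilde h_\beta$, and to combine the two by Proposition~\ref{prop:ellsum}, reading off the weak-density relation \eqref{eq:locomm} for the result from the one for $\ell_{0,\ocal}$. Concretely, fix a causally complete $\ocal$ and a precompact $\ocal_1\supset\bar\ocal$ with $\ocal_1'\neq\emptyset$, and use hypothesis~(iii) to produce $\ocal_2\supset\ocal_1$. Since the Reeh--Schlieder property makes $\Omega_0,\Omega_\beta$ cyclic and separating, local normality yields a unitary $V:=V_{\ocal_2}\colon\Hil_0\to\Hil_\beta$ implementing $\M_0(\ocal_<)\cong\M_\beta(\ocal_<)$ for all $\ocal_<\subset\ocal_2$, cf.~\eqref{eq:vpv}--\eqref{eq:vMv}. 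I then set, for $f\in\ccs(\ocal_1)$,
\begin{equation*}
  h_\beta(f):=V\ell_{0,\ocal}(f)V^\ast, \qquad \tilde h_\beta(f):=J_\beta V\ell_{0,\ocal}(\bar f)V^\ast J_\beta .
\end{equation*}
By \eqref{eq:fieldtransf}, $h_\beta$ is a symmetric quantum field associated with $\M_\beta\restriction\ocal_1$; conjugation by the antilinear $J_\beta$ restores complex linearity in $f$ (this is the role of $\bar f$) and carries $\M_\beta$ to $\Mt_\beta$, so $\tilde h_\beta$ is a symmetric quantum field associated with $\Mt_\beta\restriction\ocal_1$, with $\tilde h_\beta(f)=J_\beta h_\beta(f)J_\beta$ for real $f$.

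Next I would produce a common core for $h_\beta(f)$ and $\tilde h_\beta(f)$ in order to apply Proposition~\ref{prop:ellsum} and obtain the split field $\ell_{\beta,\ocal}(f):=\overline{h_\beta(f)-\tilde h_\beta(f)}$ associated with $\Mh_\beta\restriction\ocal_1$. This is where the split property is used: it furnishes a type-I factor $\mathcal{F}$ with $\M_\beta(\ocal_1)\subset\mathcal{F}\subset\M_\beta(\ocal_2)$, transported through $V$ from the interpolating factor in $\Hil_0$. Since $\tilde h_\beta(f)\affil\Mt_\beta(\ocal_1)\subset\M_\beta(\ocal_2)'\subset\mathcal{F}'$, the associated decomposition $\Hil_\beta\cong\mathcal{K}\otimes\mathcal{K}$ writes $h_\beta(f)$ as $X_f\otimes\idop$ and $\tilde h_\beta(f)$ as $\idop\otimes Z_f$ up to the implementing unitary, so the algebraic tensor product of cores for $X_f$ and $Z_f$ is a common core $\domSplit$, and Proposition~\ref{prop:ellsum} yields that $\ell_{\beta,\ocal}$ is split with the asserted formula.

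It then remains to verify the weak-density relation \eqref{eq:locomm} for $\delta_\beta$ on $\Mh_\beta(\ocal)$. For $A\in\M_\beta(\ocal)\cap\dom{\delta_\beta}$ one has $[\tilde h_\beta(f),A]=0$ (as $\tilde h_\beta(f)\affil\Mt_\beta\subset\M_\beta'$), so writing $A=VBV^\ast$ with $B=V^\ast AV\in\M_0(\ocal)\cap\dom{\delta_0}$ and invoking $V\delta_0(B)V^\ast=\delta_\beta(A)$ from Lemma~\ref{lemma:gentogen}, the identity \eqref{eq:locomm} for $\ell_{0,\ocal}$ transports to $\ell_{\beta,\ocal}$ on such $A$. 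For $A\in\Mt_\beta(\ocal)\cap\dom{\delta_\beta}$ one has $[h_\beta(f),A]=0$; writing $A=J_\beta A'J_\beta$ with $A'\in\M_\beta(\ocal)\cap\dom{\delta_\beta}$ and using $J_\beta\delta_\beta(A')J_\beta=\delta_\beta(A)$ reduces matters to the system-part identity for $h_\beta$ evaluated at $J_\beta\vpsi$ and $\bar f$, the antilinearity of $J_\beta$ producing exactly the sign that turns $-[\tilde h_\beta(f),A]$ into $\delta_\beta(A)/i$. By polarization these two identities propagate through the Leibniz rule to products $A_1A_2$ with $A_1\in\M_\beta(\ocal)$, $A_2\in\Mt_\beta(\ocal)$, which span a $\sigma$-weakly dense subalgebra of $\Mh_\beta(\ocal)$; approximating a general $A\in\Mh_\beta(\ocal)\cap\dom{\delta_\beta}$ in the graph of $\delta_\beta$ by smooth elements of the form \eqref{eq:Bint} then gives \eqref{eq:locomm} in full.

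I expect the common-core construction to be the main obstacle. The split structure is naturally available in $\Hil_0$, whereas the bath field is manufactured by the antilinear $J_\beta$ in $\Hil_\beta$, so one must verify carefully that $V$ transports the interpolating type-I factor into $\M_\beta(\ocal_2)$ and that $\Mt_\beta(\ocal_1)$ lands in its commutant; in particular one must check that hypothesis~(iii), stated for the dualized algebras $\M_0(\ocal_1')'\subset\M_0(\ocal_2')'$, indeed delivers the inclusion $\M_\beta(\ocal_1)\subset\mathcal{F}\subset\Mt_\beta(\ocal_1)'$ that the tensor factorization requires (this is immediate under Haag duality, but needs the dualized net in general). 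A secondary difficulty is the domain bookkeeping in the final density step: \eqref{eq:locomm} must be controlled on all of $\dom{\ell_{\beta,\ocal}(f)}$ rather than only on the common core, which requires $A\vpsi$ and $A^\ast\vpsi$ to remain in the relevant domains along the approximation.
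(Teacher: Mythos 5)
Your proposal has the same architecture as the paper's proof: the same fields $h_\beta(f)=V\ell_{0,\ocal}(f)V^\ast$ and $\tilde h_\beta(f)=J_\beta V\ell_{0,\ocal}(\bar f)V^\ast J_\beta$ with $V=V_{\ocal_2}$, a closability/split step powered by hypothesis~(iii), and the same ingredients for the weak-density relation (Leibniz rule on products of system and bath elements, Lemma~\ref{lemma:gentogen}, commutation of the system field with bath operators, antilinearity of $J_\beta$, then smearing and $\sigma$-weak approximation); that last part of your plan is essentially the paper's computation, just organized as ``system, then bath, then products'' instead of directly on products $A\,J_\beta B J_\beta$. The genuine gap is exactly the step you flag as the ``main obstacle'' and then leave open: the forward transport of the split structure. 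Hypothesis~(iii) gives a type-I factor $\mathcal{F}_0$ with $\M_0(\ocal_1')'\subset\mathcal{F}_0\subset\M_0(\ocal_2')'$. Your chain $\Mt_\beta(\ocal_1)\subset\M_\beta(\ocal_2)'\subset\mathcal{F}'$ presupposes $\mathcal{F}=V\mathcal{F}_0V^\ast\subset\M_\beta(\ocal_2)$, i.e.\ $\mathcal{F}_0\subset\M_0(\ocal_2)$. But (iii) only places $\mathcal{F}_0$ inside the dualized algebra $\M_0(\ocal_2')'$, which strictly contains $\M_0(\ocal_2)$ unless duality holds; and since $V$ implements the isomorphism only on algebras of subregions of $\ocal_2$, conjugating an element of $\M_0(\ocal_2')'$ that lies outside $\M_0(\ocal_2)$ by $V$ produces an operator with no controlled relation to the thermal net. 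Consequently neither $\mathcal{F}\subset\M_\beta(\ocal_2)$ nor the commutation of $\tilde h_\beta(f)$ with $\mathcal{F}$ is available, and without that commutation the tensor factorization $h_\beta(f)\cong X_f\otimes\idop$, $\tilde h_\beta(f)\cong\idop\otimes Z_f$ --- the entire basis for dense definedness and closability of the difference --- is unavailable.

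The paper closes this step by transporting in the opposite direction: it pulls the thermal fields back to $\Hil_0$, where the split unitary $Y$ lives, setting $k(f)=YV^\ast h(f)VY^\ast=Y\ell_{0,\ocal}(f)Y^\ast$, affiliated with $Y\M_0(\ocal_1')'Y^\ast\subset\mathfrak{B}(\mathcal{K})\otimes\idop$, and $\tilde k(f)=YV^\ast\tilde h(f)VY^\ast$, affiliated with $YV^\ast\Mt_\beta(\spt)VY^\ast\subset Y\M_0(\ocal_2')Y^\ast\subset\idop\otimes\mathfrak{B}(\mathcal{K})$, using the inclusion $V^\ast\Mt_\beta(\spt)V\subset\M_0(\ocal_2')$; the point is that the bath algebra, pulled back through $V^\ast$, can be located relative to the vacuum net, whereas your forward-transported factor cannot be located relative to the thermal net. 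The domains then factorize as $\mathcal{D}_1\otimes\mathcal{K}$ and $\mathcal{K}\otimes\mathcal{D}_2$, and the difference is densely defined and closable on $\mathcal{D}_1\otimes\mathcal{D}_2$ by \cite[Sec.~VIII.10]{Reed:1972}. A secondary mismatch: Proposition~\ref{prop:ellsum} requires one common core for \emph{all} $f$, which your $f$-dependent core $\mathcal{D}_1(f)\otimes\mathcal{D}_2(f)$ does not provide; the paper accordingly does not invoke that proposition here, but defines $\ell_{\beta,\ocal}(f)=\overline{h(f)-\tilde h(f)}$ separately for each $f$ and then argues linearity in $f$ and the split property directly.
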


\begin{proof}
  Fix $\ocal$, $\ocal_1$ and $\ell_{0,\ocal}$ for the local density in $\M_0(\ocal)$, as in Def.~\ref{def:densityweak}; given these, fix $\ocal_2$ as in property (iii). Let $V \coloneqq V_{\ocal_2}:\Hil_0\to\Hil_\beta$ be the unitary arising from quasi-equivalence on $\A(\ocal_2)$, see property (ii) combined with (i). Now define
  \begin{equation}\label{eq:hfrom0}
      h(f) :=  V \ell_{0,\ocal}(f) V^\ast.
  \end{equation}
  This is, by the remark after \eqref{eq:fieldtransf}, a local field associated with $\M_\beta \restriction \ocal_1$.
  On the other hand,
  \begin{equation}
      \tilde h(f) :=  J_\beta V \ell_{0,\ocal}(\bar f) V^\ast J_\beta
  \end{equation}
  is a local field associated with $\Mt_\beta \restriction \ocal_1$. 
  
  We first show that the difference $h(f)-\tilde h(f)$ is densely defined and closable: With $Y:\Hil_0\to\mathcal{K}\otimes\mathcal{K}$ the unitary implementing the split inclusion, $k(f):=Y V^\ast h(f) V Y^\ast = Y\ell_{0,\ocal}(f)Y^\ast$ is affiliated with  $Y \M_0(\ocal_1')'Y^\ast \subset \mathfrak{B}(\mathcal{K}) \otimes \idop$; hence its domain is of the form $\mathcal{D}_1\otimes\mathcal{K}$. Similarly,  $\tilde k(f):=Y V^\ast \tilde h(f) V Y^\ast$ is affiliated with  $Y V^\ast\Mt_\beta(\spt)V Y^\ast \subset Y \M_0(\ocal_2')Y^\ast  \subset \idop \otimes \mathfrak{B}(\mathcal{K}) $, hence its domain is of the form $\mathcal{K} \otimes \mathcal{D}_2$. Therefore, the sum $k(f) - \tilde k(f)$ is well-defined on the intersection $\dom{k(f)}\cap \dom{\tilde k(f)} =\mathcal{D}_1\otimes\mathcal{D}_2$ and closable there \cite[Sec.~VIII.10]{Reed:1972}. 
  By unitary transformation,  $\ell_{\beta,\ocal}(f) \coloneqq \overline{h(f)-\tilde h(f)}$ is well-defined.

  By a similar argument involving a common core that ``factorizes'' along the tensor product, we can also show that $\ell_{\beta,\ocal}(f)$ is linear in $f$.
  
  Evidently, $\ell_{\beta,\ocal}(f)$ is split.  Since $h(f)$ is affiliated with $\M_\beta(\ocal_1')'\cap \M_\beta(\spt)$ and $\tilde h(f)$ with $\Mt_\beta(\ocal_1')'\cap \Mt_\beta(\spt)$, Lemma~\ref{lem:sumaffil} shows that $\ell_{\beta,\ocal}(f)$ is affiliated with the algebra generated by their union, which works out to be $\Mh_\beta(\ocal_1')'$, cf.~Eq.~\eqref{eq:affilunion}.

  Now let $A,B \in \M_\beta(\ocal)\cap\dom{\delta_\beta}$, so that  $\tilde B \coloneqq J_\beta B J_\beta$ is a generic element of $ \Mt_\beta(\ocal)\cap\dom{\delta_\beta}$. From Lemma~\ref{lemma:gentogen}, we have
   \begin{equation}\label{eq:dba}
   \begin{aligned}
      \delta_\beta(A\tilde B)  
      &
      = \delta_\beta(A)\tilde B + A \delta_\beta(\tilde B)
      \\
      &
      =
      V \delta_0( V^\ast A V) V^\ast \tilde B  
     +  A J_\beta V \delta_0(  V^\ast B V) V^\ast J_\beta .
     \end{aligned}
     \end{equation}
     Picking $f$ to be spacelike-one for $\ocal$, the derivation $\delta_0$ in the above equation can be replaced (weakly) by the commutator with $\ell_{0,\ocal}(f)$, or using \eqref{eq:hfrom0}, with $h(f)$. More precisely, let $\vpsi\in\dom{h(f)} \cap \dom{\tilde h(\bar f)}$. 
     Since $h(f)$ is affiliated with $\M_\beta(\ocal_1')'\cap\M_\beta(\spt)$, we know that $\tilde B \in \Mt_\beta(\ocal)$ leaves its domain invariant, hence $\tilde B \vpsi \in \dom{h(f)}$. Likewise, since also $\vpsi\in\dom{\tilde h(f)}$, we find $J_\beta A^\ast \vpsi \in \dom{h(f)}$. One then computes from \eqref{eq:dba}, 
     \begin{equation}
     \begin{aligned}
     (\vpsi, \delta_\beta(A\tilde B) \vpsi)
      &=  i (\vpsi, [ h(f),  A ] \tilde B \vpsi )
      +\overline{i ( J_\beta A^\ast \vpsi, [ h(f),  B  ]  J_\beta \vpsi )}
      \\ &=  i (\vpsi, [ h(f), A ] \tilde B \vpsi )
      -  i (\vpsi, A [ \tilde h(f), \tilde B ]  \vpsi ).
   \end{aligned}
   \end{equation}
   Since $A$ spectrally commutes with $\tilde h(f)$, and $\tilde B$ with $h(f)$, we finally obtain
   \begin{equation}
   (\vpsi, \delta_\beta(A\tilde B) \vpsi)
   = i (\vpsi, [ h(f) - \tilde h(f), A\tilde B ]  \vpsi ), \quad
     \vpsi \in \dom{h(f)} \cap \dom{\tilde h(f)}.
   \end{equation}
Since these $\vpsi$ form a core for $\ell_{\beta,\ocal}(f)$,  we have established 
\begin{equation}\label{eq:psibetac}
   (\vpsi, \delta_\beta(C) \vpsi)
 =  i (\vpsi, [ \ell_{\beta,\ocal}(f), C ] \vpsi )
 \quad \text{for all } \vpsi\in\dom{\ell_{\beta,\ocal}(f)}
\end{equation}
whenever $C=A \tilde B$ of the form above. It then holds also for $C=(A\tilde B)(g)\in\Mh(\ocal)$, i.e., with a small smearing as in \eqref{eq:Bint}. By $\sigma$-weak approximation with smooth elements, we can drop the requirement that $A,B \in \dom \delta_\beta$ (cf.~the remark after Eq.~\eqref{eq:Bint}), i.e., the identity \eqref{eq:psibetac} can be extended to all $C=(A\tilde B)(g)$, $A \in \M_\beta(\ocal)$, $B \in \Mt_\beta(\ocal)$. Since linear combinations of these $A \tilde B$ are weakly dense in $\Mh_\beta(\ocal)$, one then obtains \eqref{eq:psibetac} for all $C=C_0(g)$, $C_0\in\Mh_\beta(\ocal)$. Removing the smearing function for $C_0\in\dom{\delta_\beta}$, one thus has \eqref{eq:psibetac} for all $C \in \dom{\delta_\beta} \cap \Mh_\beta(\ocal)$, as desired. 
\end{proof}

\section{Thermal states of the real scalar free field}\label{sec:freescalar}

As a specific example, we consider the real scalar free field on Minkowski space in a thermal equilibrium state (see, e.g., \cite{Kay:purification,UMT:thermo}). That is, throughout this section, we will set $\spt=\R^4$, $\csf=\{0\} \times \R^3$ (the time-0 surface), and $\Xi$  the usual time evolution. The algebras $\A(\ocal)$ will be the Weyl algebras of the real scalar free field, generated by Weyl operators $W(f)$,  $f=\bar f \in \ccs(\R^4)$. On the quasilocal algebra $\A$, consider the quasifree state given by
\begin{equation}
    \omega(W(f)) = e^{-\omega_2^s(f,f)/2}
\end{equation}
where $\omega_2^s$, the symmetric part of the two point function, has the integral kernel
\begin{equation}
   \omega_2^s(x,y)= \int \frac{\di^3\textbf{k}}{(2\pi)^3}\frac{1}{2\en{k}}\frac{e^{\beta\en{k}}+1}{e^{\beta\en{k}}-1}\cos(k\cdot(x-y)).
\end{equation}
Here we used the notation $\en{k}=\sqrt{\|\mathbf{k}\|^2+m^2}$ and $k=(\en{k},\textbf{k})$.
The automorphism $\alpha_t$ acts by translating the test functions $f$ by $t$ in $x^0$-direction, and the state $\omega$ is KMS at inverse temperature $\beta$. 

We can now study the GNS representation of $\A$ induced by the state $\omega$. This yields a Fock representation on a tensor product $\mathcal{F}^2$ of two copies of the usual symmetrised bosonic Fock space, with two pairs $\cre{a}{k},\ani{a}{k}$,  $\cre{b}{k},\ani{b}{k}$ of creators and annihilators  acting on it, and with vacuum vector $\Omega=\Omega_b\otimes\Omega_a$ (see \autoref{sec:edaffil} for more details on the notation). In this representation, $L$ is given by
\begin{equation}\label{eq:L}
    L=\int \frac{\di^3\mathbf{k}}{(2\pi)^3}\en{k}(\cre{b}{k}\ani{b}{k}-\cre{a}{k}\ani{a}{k})
\end{equation}  
as a sesquilinear form on the domain $\domSchwartz \subset \mathcal{F}^2$ of vectors with finite particle number, all components of which are Schwartz functions, cf.~Eq.~\eqref{domain:DS} below.
$J$ acts as
\begin{equation}\label{eq:jact}
    J( \vpsi\otimes\vxi)=\Gamma(C)\vxi\otimes\Gamma(C)\vpsi,\;\;\;\vpsi\otimes\vxi\in\mathcal{F}^2,
\end{equation}
where $\Gamma(C)$ is the second quantization of the complex conjugation operator.

By differentiation\footnote{Note that the representation is regular since it is induced by a quasifree state.}, we can obtain a quantum field $\phi$ with formal kernel
\begin{equation}\label{eq:fielddef}
\begin{aligned}
    \phi(x)&=\int\frac{\di^3\mathbf{k}}{(2\pi)^3}\frac{1}{\sqrt{2\en{k}}}\left[\bsh{-}{k}\ani{a}{k}e^{ikx}+\bsh{-}{k}\cre{a}{k}e^{-ikx}+\bsh{+}{k}\ani{b}{k}e^{-ikx}+\bsh{+}{k}\cre{b}{k}e^{ikx}\right]
\end{aligned}
\end{equation}
where $\bsh{\pm}{k}=(\pm 1 \mp e^{\mp\beta\en{k}} )^{-1/2}$. When smeared with test functions and closure, the $\phi(f)$ are then affiliated with $ \M(\ocal')'\cap\M(\R^4)$ whenever $\operatorname{supp} f \subset \ocal$, hence they are quantum fields in our sense. Analogous to the vacuum case \cite[Sec.~X.7]{ReedSimon:1975-2}, $\domSchwartz$ is a common invariant core for all $\phi(f)$, consisting of analytic vectors (see \autoref{sec:edaffil} below). From there, it can be shown that elements of the polynomial algebra $\polyalg(\ocal)$, i.e., sums and products of the $\phi(f)$, $\operatorname{supp} f \subset \ocal$, are affiliated with $\M(\ocal')'\cap \M(\R^4)$: Knowing that $P\in\polyalg(\ocal)$ commutes with $\phi(g)$, $\operatorname{supp} g \subset \ocal'$ and with all $J\phi(g)J$ strongly on $\domSchwartz$, the argument works similar to Proposition~\ref{prop: final_proof_affil} below.

We now turn to the energy density. We consider the following ``normal ordered product'' which is formally analogous to the energy density in the vacuum,
\begin{equation}\label{eq:hdef}
    h(x)=\frac{1}{2}:(\partial_0\phi(x))^2:+\frac{1}{2}\sum_{i=1}^3:(\partial_i\phi(x))^2:+\frac{1}{2}m^2:(\phi(x))^2:.
\end{equation}
We can show that, after integration with test functions, this is indeed a covariant symmetric  quantum field associated with the net $\M$, in the sense of Def.~\ref{def:associated}.   The somewhat technical proof will be deferred to  \autoref{sec:edaffil}; here we just remark that $\domSchwartz$ is a common core for all $h(f)$, $f \in \testf$.

We now set 
\begin{equation}\label{eq:ellexpl}
    \ell(x) \coloneqq h(x) - J h(x) J,
\end{equation}
first as sesquilinear forms on the domain $\domSchwartz$; thanks to $J$-invariance of  $\domSchwartz$, this yields (after operator closure, see Proposition~\ref{prop:ellsum}) a symmetric, covariant and split quantum field $\ell$ associated with $\Mh$, with $\domSplit=\domSchwartz$.
We show that $\ell$ is actually a Liouvillian density for the generator $L$ given in \eqref{eq:L}; by contrast, the integral of $h$ over all space does not yield a reasonable operator.

\begin{proposition}
   $\ell$ is a density for $L$ in the strong sense  (Def.~\ref{def:integration}, with $\mathcal{D}=\domSchwartz$).
\end{proposition}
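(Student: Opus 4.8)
The goal is to verify the three conditions (\ref{it:fieldcont})--(\ref{it:ellint}) of Definition~\ref{def:integration} for the field $\ell(x) = h(x) - Jh(x)J$ on the domain $\mathcal{D} = \domSchwartz$. The strategy is direct computation in the Fock representation $\mathcal{F}^2$: I would start from the explicit kernel \eqref{eq:fielddef} for $\phi$, insert it into the normal-ordered expression \eqref{eq:hdef} for $h(x)$, and apply $\Gamma(C)$ via \eqref{eq:jact} to obtain the kernel of $Jh(x)J$. The point is that $\domSchwartz$ is an invariant domain of analytic vectors on which all these sesquilinear forms are well-defined and on which one can manipulate the creation/annihilation operators freely. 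Condition (\ref{it:fieldcont}) should follow since, for fixed $\vpsi\in\domSchwartz$, the matrix element $(\vpsi,\ell(x)\vpsi)$ is a finite sum of integrals of the form $\int \di^3\mathbf{k}\,\di^3\mathbf{p}\, (\cdots) e^{\pm i(k\pm p)x}$ against Schwartz-class momentum-space wavefunctions, hence is (the restriction to the mass shell of) a smooth function of $x$. Condition (\ref{it:fieldint}) is then essentially the statement that smearing the kernel against $f\in\ccs(\R^4)$ reproduces $\ell(f)$, which holds by construction of $\ell(f)$ as the closure of $h(f)-\tilde h(f)$ together with the fact that $\domSchwartz$ is a common core.

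The substantive content is condition (\ref{it:ellint}): that integrating $(\vpsi,\ell(x)\vpsi)$ over the time-$0$ Cauchy surface $\csf=\{0\}\times\R^3$ reproduces $(\vpsi, L\vpsi)$ with $L$ as in \eqref{eq:L}. Here I would carry out the spatial integral $\int_{\csf}\di^3\mathbf{x}$ explicitly. The key mechanism is that $\int \di^3\mathbf{x}\, e^{i(\mathbf{k}\pm\mathbf{p})\mathbf{x}} = (2\pi)^3\delta^3(\mathbf{k}\pm\mathbf{p})$ collapses the double momentum integral to a single one and forces $\mathbf{p}=\mp\mathbf{k}$. The terms that survive are precisely the ``number-operator-like'' contributions $\cre{a}{k}\ani{a}{k}$ and $\cre{b}{k}\ani{b}{k}$, while the off-diagonal pieces $\ani{a}{k}\ani{a}{k}$, $\cre{a}{k}\cre{a}{k}$, etc., either vanish on the Cauchy surface by the delta constraint combined with the opposing time-derivative/sign structure, or cancel between the three terms of \eqref{eq:hdef} upon using $\en{k}^2 = \|\mathbf{k}\|^2 + m^2$. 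One must also track how the $h$-part and the $\tilde h = JhJ$-part combine: the $\Gamma(C)$ conjugation swaps the two Fock factors and complex-conjugates, which should send the $a$-type number contribution in $h$ to a $b$-type contribution with the \emph{correct relative sign} so that $\ell = h - \tilde h$ yields the combination $\en{k}(\cre{b}{k}\ani{b}{k} - \cre{a}{k}\ani{a}{k})$ matching \eqref{eq:L}. The Bogoliubov-type coefficients $\bsh{\pm}{k} = (\pm 1 \mp e^{\mp\beta\en{k}})^{-1/2}$ enter here, and I expect the thermal factors to combine so that the surviving coefficient in front of each number density is exactly $\en{k}$ (the $\beta$-dependence dropping out for the \emph{Liouvillian} density, in contrast to what would happen for $h$ alone).

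\textbf{Main obstacle.}
The delicate point is the interchange of the spatial integration $\int_{\csf}\di^3\mathbf{x}$ with the momentum integrals defining the kernel, since the individual terms in $h(x)$ are only conditionally integrable in $\mathbf{x}$ and the $\delta^3$ arises as a distributional identity. I would handle this by working throughout at the level of the sesquilinear form on $\domSchwartz$: for fixed $\vpsi$ in this domain the momentum-space wavefunctions are Schwartz, so after expanding $(\vpsi, h(x)\vpsi)$ the $\mathbf{x}$-integral of the smooth, rapidly decaying integrand is justified by Fubini and yields the momentum-space delta distribution rigorously when paired against the test wavefunctions. A secondary technical check is that the off-diagonal (non-number-conserving) terms genuinely drop out on the Cauchy surface: this requires verifying that the relevant coefficient combinations — involving $\en{k}$ from the time derivatives, $-\|\mathbf{k}\|^2$ and $m^2$ from the spatial and mass terms — cancel, which is where the on-shell relation is used. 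Once these two points are settled, matching the result to the sesquilinear form \eqref{eq:L} for $L$ on $\domSchwartz$ is a direct comparison of kernels.
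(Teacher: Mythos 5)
Your overall route --- direct kernel computation on $\domSchwartz$, spatial integration collapsing the double momentum integral, number-type terms surviving to reproduce \eqref{eq:L} --- is the same as the paper's (carried out in \autoref{sec:integral}), and your treatment of conditions (\ref{it:fieldcont}) and (\ref{it:fieldint}) and of the Fubini/distributional point agrees with the paper's. However, there is a genuine error in your account of why the non-number-conserving terms disappear, and it concerns precisely the crux of the proposition. You assert that all off-diagonal pieces ``either vanish on the Cauchy surface by the delta constraint combined with the opposing time-derivative/sign structure, or cancel between the three terms of \eqref{eq:hdef} upon using $\en{k}^2=\|\mathbf{k}\|^2+m^2$'' --- both of which are mechanisms internal to $h$ alone. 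That is correct for the same-type pair terms ($\ani{a}{k}\ani{a}{p}$, $\cre{a}{k}\cre{a}{p}$, $\ani{b}{k}\ani{b}{p}$, etc.): their coefficient $-\en{k}\en{p}-\mathbf{k}\cdot\mathbf{p}+m^2$ vanishes on shell at the constraint $\mathbf{p}=-\mathbf{k}$. It is \emph{false} for the mixed particle--hole pair terms $\ani{a}{k}\ani{b}{p}$ and $\cre{a}{k}\cre{b}{p}$: there the spatial integration enforces $\mathbf{p}=\mathbf{k}$, where the coefficient is $\en{k}\en{p}+\mathbf{k}\cdot\mathbf{p}+m^2=2\en{k}^2\neq 0$, so your ``secondary technical check'' would fail at exactly this step. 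These terms survive the spatial integration of $h$ alone; they cancel only in the difference $\ell=h-JhJ$, because the $h$-part enters with $\bsh{-}{k}\bsh{+}{p}\,e^{i(\mathbf{p}-\mathbf{k})\cdot\mathbf{x}}$ while the $JhJ$-part enters with $\bsh{+}{k}\bsh{-}{p}\,e^{i(\mathbf{k}-\mathbf{p})\cdot\mathbf{x}}$, and the combination $\bsh{-}{k}\bsh{+}{p}-\bsh{+}{k}\bsh{-}{p}$ vanishes at $\mathbf{k}=\mathbf{p}$. In other words, the $h$-versus-$\tilde h$ interplay that you invoke only for the number sector is equally indispensable for the pair-creation/annihilation sector, and there the on-shell relation plays no role.

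This is not a cosmetic slip. If all off-diagonal terms really died inside $h$, then $\int h(0,\mathbf{x})\,\di^3\mathbf{x}$ would itself be a well-defined, number-diagonal form; but the paper stresses that the integral of $h$ over all space does \emph{not} yield a reasonable operator, and the obstruction is exactly the particle--hole pair terms you have misclassified. To repair the plan, split the computation by particle/hole sectors as the paper does: same-type pair terms die within $h$ (and within $JhJ$) by the on-shell relation at $\mathbf{p}=-\mathbf{k}$; mixed $\ani{a}{k}\ani{b}{p}$- and $\cre{a}{k}\cre{b}{p}$-type contributions must be cancelled against their $J\cdot J$-counterparts using the thermal-factor identity above; and only then do the number terms, via $\bshs{+}{k}-\bshs{-}{k}=1$, combine to give the $\beta$-independent coefficient $\en{k}$ and hence $\en{k}(\cre{b}{k}\ani{b}{k}-\cre{a}{k}\ani{a}{k})$ as you describe.
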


\begin{proof} 
$\domSchwartz$ is a common core for all $\ell(f)$ and for $L$, invariant under $U(t)$. Property (\ref{it:fieldcont}) of Def.~\ref{def:integration} is checked by dominated convergence, while property (\ref{it:fieldint}) is clear by definition of $\ell$. It remains to show property (\ref{it:ellint}), i.e., that for all $\vpsi\in\domSchwartz$,
\begin{equation}
(\vpsi, L \vpsi) = \int \big(\vpsi, \ell(0,\mathbf{x}) \vpsi\big) \,\di^3\mathbf{x} .
\end{equation}
This follows from a computation which we summarize in \autoref{sec:integral} below.
\end{proof}

We can now show an $L^4$ inequality for $\ell$.

\begin{theorem}
 Let $f = g^2$ with some $g=\bar g \in \ccs(\R^4) \backslash \{0\}$. Then $\ell(f)$ as in \eqref{eq:ellexpl} fulfills a nontrivial quantum $L^4(\M(\R^4),\Omega)$ inequality.
\end{theorem}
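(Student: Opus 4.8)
The plan is to reduce the claim to the corresponding statement for the energy density $h$ by means of Theorem~\ref{thm:splitdensity}, and then to verify separately that $h(f)$ is $L^4$-bounded below but not above. I would carry out the whole argument on the domain $\mathcal{D}=\domP$ of polynomial-in-field vectors $A\Omega$, $A\in\polyalg(\R^4)$. This $\mathcal{D}$ is dense in $\Hil$ and contained in $\domSchwartz=\domSplit$; moreover every such $A$ is affiliated with $\M(\R^4)$ with $\Omega\in\dom{A^*A}$, so $\mathcal{D}\subset\domAffil{\tilde h(f)}$ as well. Thus the hypotheses of Theorem~\ref{thm:splitdensity} are met, and it remains to show that $h(f)$ fulfills a nontrivial quantum $L^4(\M(\R^4),\Omega)$ inequality on $\mathcal{D}$.

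For the lower bound I would invoke a quantum energy inequality in the Hilbert-space sense, namely $(\psi,h(f)\psi)\ge -c_f\|\psi\|^2$ for all $\psi\in\domSchwartz$, with a finite constant $c_f$ depending on $f=g^2\ge0$. This should follow either from a direct point-split estimate---the coincidence-limit distribution is the symmetric thermal two-point function, whose short-distance singularity matches that of the vacuum, so the usual Fewster--Eveson computation goes through with a thermally corrected but finite constant---or, more abstractly, from the Hadamard property of the thermal state together with the standard QEI argument. Since $L^4(\M,\Omega)$ embeds continuously into $L^2(\M,\Omega)=\Hil$ with $\|\psi\|\le\|\psi\|_4$, the bound upgrades at once to $(\psi,h(f)\psi)\ge -c_f\|\psi\|_4^2$; hence $h(f)$ is $L^4$-bounded below.

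The nontriviality is the heart of the matter, and the key observation is that coherent states are invisible to the $L^4$ norm. For a real $u\in\ccs(\R^4)$ the Weyl operator $W(u)\in\M(\R^4)$ is unitary, so $W(u)^*W(u)=\idop$; using $\deltaw{1/4}\Omega=\Omega$ in \eqref{eq:noncnorms} gives $\|W(u)\Omega\|_4^2=\|\deltaw{1/4}W(u)^*W(u)\Omega\|=1$, independently of $u$. On the other hand the Weyl shift $W(u)^*\phi(x)W(u)=\phi(x)+\eta_u(x)$ by a real classical Klein--Gordon solution $\eta_u$ (the smeared commutator function of $u$), together with the normal ordering in \eqref{eq:hdef} (which removes the $\Omega$-expectation), yields
\begin{equation*}
 (W(u)\Omega,h(f)W(u)\Omega)=\int f(x)\Big[\tfrac12(\partial_0\eta_u)^2+\tfrac12\sum_{i=1}^3(\partial_i\eta_u)^2+\tfrac12 m^2\eta_u^2\Big]\,\di^4 x=:\mathcal{T}[\eta_u](f)\ge 0,
\end{equation*}
the $f$-weighted classical field energy of $\eta_u$. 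Because $f=g^2$ is nonzero and nonnegative, one can pick $u$ with $\eta_u\not\equiv0$ on $\{f>0\}$, so that $\mathcal{T}[\eta_u](f)>0$; replacing $u$ by $\lambda u$ scales this as $\lambda^2$ while leaving $\|W(\lambda u)\Omega\|_4=1$. The Rayleigh quotient $\lambda^2\mathcal{T}[\eta_u](f)$ therefore diverges, so $h(f)$ is not $L^4$-bounded above. (The same conclusion holds directly for $\ell$, since $\tilde h(f)$ commutes with $W(u)\in\M$ and $(\Omega,\tilde h(f)\Omega)=0$.)

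It remains to place this family inside $\mathcal{D}=\domP$, and this is where I expect the real work to lie. I would approximate $W(u)\Omega$ by the partial sums $S_N=\sum_{n\le N}\frac{i^n}{n!}\phi(u)^n\Omega\in\domP$, which converge to $W(u)\Omega$ in $\Hil$ by analyticity of $\Omega$ for $\phi(u)$, and then check that $(S_N,h(f)S_N)\to\mathcal{T}[\eta_u](f)$ and $\|S_N\|_4\to1$, so that for each $\lambda$ a large enough $N$ keeps the quotient near $\lambda^2\mathcal{T}[\eta_u](f)$; a diagonal choice $N=N(\lambda)$ then yields a genuine sequence in $\mathcal{D}$ with diverging quotient. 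The main obstacle is the $L^4$ convergence under truncation: by $\deltaw{1/4}=e^{-\beta L/4}$ and \eqref{eq:L}, the modular operator \emph{enhances} the negative-energy ($a$-type) components that $\phi(u)^n\Omega$ carries through \eqref{eq:fielddef}, so one must show that the cancellations responsible for $W(u)^*W(u)=\idop$ persist after applying $\deltaw{1/4}$, i.e.\ that $\deltaw{1/4}A_N^*A_N\Omega\to\Omega$ in $\Hil$ (writing $S_N=A_N\Omega$). I expect this to follow from explicit $n$-point bounds for the quasifree thermal state, but it is the technical core; the other calculation to be done in full is the coherent-state energy identity above.
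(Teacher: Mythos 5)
Your overall skeleton agrees with the paper's proof: the reduction of $\ell(f)$ to $h(f)$ via Theorem~\ref{thm:splitdensity} on $\mathcal{D}=\domP$, and the lower bound for $h(f)$ from the Hadamard property of the quasifree KMS state plus the standard QEI for sums of Wick squares, upgraded from $L^2$ to $L^4$ by the continuous embedding, are exactly the paper's steps (with the minor caveat that the cited Hadamard technology justifies the bound on $\domP$, which is all you need, rather than on all of $\domSchwartz$ as you assert). Where you genuinely diverge is the nontriviality: the paper (Proposition~\ref{prop:nontrivl4}) uses one-particle/one-hole vectors $\vpsi_\lambda=\phi(g_\lambda)\Omega$ with \emph{real} $g_\lambda$ concentrating at high frequency, controlling $\|\vpsi_\lambda\|_4$ by quasi-freeness and the estimate $\Delta^{1/2}\leq\idop+\Delta$ (Eq.~\eqref{eq:psi6est}) while $(\vpsi_\lambda,h(f)\vpsi_\lambda)\to\infty$; you instead use coherent states $W(\lambda u)\Omega$, whose $L^4$ norm is exactly $1$ because $W(\lambda u)^\ast W(\lambda u)=\idop$, and whose energy is the classical field energy $\lambda^2\,\mathcal{T}[\eta_u](f)$. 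Your mechanism is correct in substance and arguably more transparent physically -- it exhibits directly that the $L^4$ norm is blind to unitaries in $\M$ -- whereas the paper's choice of vectors is more economical precisely because they already lie in $\domP$.

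That last point is where your write-up has a genuine gap: the coherent states are not in $\mathcal{D}=\domP$, and the step you correctly identify as the technical core -- $\|S_N\|_4\to 1$, equivalently $\Delta^{1/4}A_N^\ast A_N\Omega\to\Omega$ -- is left as an expectation, to be attacked by ``explicit $n$-point bounds.'' The gap is closable, but not by brute-force momentum-space estimates; the efficient route is the same trick the paper uses in Eq.~\eqref{eq:psi6est} and Proposition~\ref{prop:affilp4norm}. For real $u$, set $B_N\coloneqq\overline{A_N-W(u)}$; then $B_N\affil\M(\R^4)$, $C_N\coloneqq B_N^\ast B_N$ is self-adjoint and affiliated, and Proposition~\ref{prop:Ext_Tomita} gives $SC_N\Omega=C_N\Omega$, hence $\|\Delta^{1/2}C_N\Omega\|=\|JSC_N\Omega\|=\|C_N\Omega\|$ and
\begin{equation*}
\|S_N-W(u)\Omega\|_4^4=\|\Delta^{1/4}C_N\Omega\|^2=(C_N\Omega,\Delta^{1/2}C_N\Omega)\leq\|C_N\Omega\|^2 .
\end{equation*}
This reduces $L^4$ convergence to Hilbert-space convergence of $C_N\Omega=A_N^\ast A_N\Omega-A_N^\ast W(u)\Omega-W(u)^\ast A_N\Omega+\Omega\to 0$, which follows from the analyticity estimate \eqref{eq: estimate_for_an}: since $E(u,u)=0$, $W(u)$ commutes with $\phi(u)$, so $W(u)\Omega$ is analytic for $\phi(u)$ with the same bounds as $\Omega$, and the double series $\sum_{n,m}(-i)^n i^m\phi(u)^{n+m}\Omega/(n!\,m!)$ converges absolutely to $\Omega$. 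In other words, your worry that $\Delta^{1/4}=e^{-\beta L/4}$ enhances the hole components is bypassed entirely: the modular weight is tamed by the $S$-symmetry of $C_N\Omega$ (which is where reality of $u$ enters), not by decay in momentum space. Combined with the graph-norm convergence $h(f)S_N\to h(f)W(u)\Omega$ already supplied by Lemma~\ref{lem: dense_in_dom}, and a diagonal choice $N=N(\lambda)$, your argument becomes complete.
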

\begin{proof}
  We first note that $h(f)$, as a sesquilinear form on $\domP$, is $L^4$-bounded below. This can be extracted from the literature: As a quasifree KMS state, $\omega=(\Omega,\,\cdot\, \Omega)$ is Hadamard \cite{SV:passivity}, hence so is any state induced by a vector in $\domP$ \cite[Theorem~4.5]{BFK:spectrum}. Now in quasifree Hadamard states, $h(g^2)$ -- as a  ``sum of Wick squares'' -- fulfills a state-independent quantum energy inequality \cite[Sec.~3]{Few:qft_inequalities}, or in our language, a lower $L^2$ bound; a forteriori, we find a lower $L^4$ bound.

   On the other hand, $h(f)$ is not $L^4$-bounded above on the same domain. To that end, we exhibit an explicit sequence of vectors in $\domP$ which is violating these bounds; see \autoref{sec:nontriv} below, in particular Proposition~\ref{prop:nontrivl4}.
   In summary, $h(f)$ fulfills a nontrivial quantum $L^4$ inequality. 
   
   Now consider the split field $\ell$, where we can take $\domSplit \coloneqq\domP\subset\domSchwartz$ (which is inside the domains of $h(f)$ and of $Jh(f)J$). We note that $\domP\subset \domAffil{J h(f) J}$ since all operators in $\polyalg(\R^4)$ are affiliated with $\M(\R^4)$ and have $\domP$ as a common invariant dense domain. We can therefore apply Theorem~\ref{thm:splitdensity}, showing that $\ell(f)$, as sesquilinear form on $\mathcal{D}=\domP$, fulfills a nontrivial $L^4$ inequality.
\end{proof}

We remark that $\ell(g^2)$ is, in general, not $L^2$-bounded below, hence it does not fulfill a quantum energy inequality in the usual sense. This is again verified by means of an explicit counterexample, which will be given in \autoref{sec:nontriv} below (see Proposition~\ref{prop:nontrivL2}).

\subsection{Affiliation of the Hamiltonian density}\label{sec:edaffil}

In this section, we show that the ``Hamiltonian density'' $h(x)$, as defined in \eqref{eq:hdef}, yields a well-defined quantum field in our sense. The main technical point is affiliation of the ``smeared'' field $h(f)$ to the relevant Weyl algebras, for which detailed estimates of its matrix elements are required. We generalize known methods from the vacuum setting \cite{LangerholcSchroer} to our thermal case. 

We first fix our notation: Let $\mathcal{F}^2$ be the ``doubled Fock space'', \begin{equation}
    \mathcal{F}^2\coloneqq \mathcal{F}^s( L^2(\R^3))\bigotimes\mathcal{F}^s( L^2(\R^3)),
\end{equation}
where $\mathcal{F}^s$ denotes the symmetric Fock space. Its elements can be thought of as double sequences $\vpsi = (\vpsi^{(n_p,n_h)})_{n_p,n_h\geq 0}$, where $\vpsi^{(n_p,n_h)} \in L^2(\R^{3n_p+ 3n_h})$, symmetric in the first $3n_p$ and in the last $3n_h$ arguments; $n_p$ is interpreted as the ``number of particles'' and $n_h$ the ``number of holes'' over the thermal background state, $\Omega=(1,0,0,\ldots)$. We will often consider the dense subspace 
\begin{equation}\label{domain:DS}
\begin{aligned}
        \domSchwartz\coloneqq \big\{ \vpsi\in \mathcal{F}^2 \;\big\vert\; &\vpsi^{(n_p,n_h)}\in\mathcal{S}(\R^{3n_p+ 3n_h})\;\text{for all } n_p,n_h \in \mathbb{N}_0; 
        \\
        &\vpsi^{(n_p,n_h)}=0\;\text{for almost all } n_p,n_h  \in \mathbb{N}_0 \big\}.
\end{aligned}
\end{equation}
Following \cite{ReedSimon:1975-2}, the annihilators $\ani{a}{k}$ and $\ani{b}{k}$ are well-defined operators on the domain $\domSchwartz$:
\begin{equation}\label{eq:annihidef}
\begin{aligned}
    (\ani{a}{k}\vpsi)^{(n_p,n_h)}(\textbf{s}_1,\ldots,\textbf{s}_{n_p};\textbf{t}_1,\ldots,\textbf{t}_{n_h})&=(2\pi)^{3/2}\sqrt{n_h+1}\vpsi^{(n_p,n_h+1)}(\textbf{s}_1,\ldots,\textbf{s}_{n_p};k,\textbf{t}_1,\ldots,\textbf{t}_{n_h}),\\
    (\ani{b}{k}\vpsi)^{(n_p,n_h)}(\textbf{s}_1,\ldots,\textbf{s}_{n_p};\textbf{t}_1,\ldots,\textbf{t}_{n_h})&=(2\pi)^{3/2}\sqrt{n_p+1}\vpsi^{(n_p+1,n_h)}(k,\textbf{s}_1,\ldots,\textbf{s}_{n_p};\textbf{t}_1,\ldots,\textbf{t}_{n_h}),
\end{aligned}
\end{equation}
while their adjoints, the creators, are defined only as sesquilinear forms on $\domSchwartz\times \domSchwartz$, given formally by
\begin{equation}\label{eq:creatordef}
\begin{aligned}
    (\cre{a}{k}\vpsi)^{(n_p,n_h)}(\textbf{s}_1,\ldots,\textbf{s}_{n_p};\textbf{t}_1,\ldots,\textbf{t}_{n_h})=&\\
    (2\pi)^{3/2}\frac{1}{\sqrt{n_h}}\sum_{l=1}^{n_h}\delta^3(\textbf{k}-\textbf{t}_l)&\vpsi^{(n_p,n_h-1)}(\textbf{s}_1,\ldots,\textbf{s}_{n_p};\textbf{t}_1,\ldots,\textbf{t}_{l-1},\textbf{t}_{l+1},\ldots ,\textbf{t}_{n_h});\\
    (\cre{b}{k}\vpsi)^{(n_p,n_h)}(\textbf{s}_1,\ldots,\textbf{s}_{n_p};\textbf{t}_1,\ldots,\textbf{t}_{n_h})=& \\
    (2\pi)^{3/2}\frac{1}{\sqrt{n_p}}\sum_{l=1}^{n_p}\delta^3(\textbf{k}-\textbf{s}_l)&\vpsi^{(n_p-1,n_h)}(\textbf{s}_1,\ldots,\textbf{s}_{l-1},\textbf{s}_{l+1},\dots, \textbf{s}_{n_p};,\textbf{t}_1,\ldots,\textbf{t}_{n_h}).
\end{aligned}
\end{equation}

Before turning to $h(f)$, we define the following auxiliary quantities. Let $f \in \mathcal{S}(\R^4)$ be real-valued, and let $\hat f$ be its 4-dimensional Fourier transform. Let $\polyn(p,k)$ be a polynomial in the variables $p,k\in\R^4$, and let  $\mu,\nu,\rho,\sigma \in \{0,1\}$ with the condition $\nu\geq \sigma$. We define the sesquilinear form $O_\polyn^{\mu\nu\rho\sigma}(f)$ on the dense domain $\domSchwartz \times \domSchwartz$ by
\begin{equation}\label{Eq: def_via_quadratic}
\begin{aligned}
       O_\polyn^{\mu\nu\rho\sigma}(f)=
       \iint\frac{\di^3\textbf{k}\di^3\textbf{p}}{2\sqrt{\en{p}\en{k}}} & \frac{\polyn(p,k)}{(2\pi)^6}\hat{f}\left((-1)^{\rho+\sigma+1}\en{p}+(-1)^{\mu+\nu+1}\en{k},\textbf{p}+\textbf{k}\right)\times\\
    &\times(\mathscr{B}_{\textbf{p}}^+)^{\rho}(\mathscr{B}_{\textbf{p}}^-)^{1-
    \rho}(\mathscr{B}_{\textbf{k}}^+)^{\mu}(\mathscr{B}_{\textbf{k}}^-)^{1-\mu}
   c^{\mu\nu}_\textbf{k} c^{\rho\sigma}_\textbf{p},
    \end{aligned}
    \end{equation}
where  
\begin{equation}
c_{\textbf{k}}^{00}=a_{\textbf{k}},\;\;\;c_{\textbf{k}}^{10}=b_{\textbf{k}}\;\;\;c_{\textbf{k}}^{01}=a^\dagger_{\textbf{k}}\;\;\;c_{\textbf{k}}^{11}=b^\dagger_{\textbf{k}}.
\end{equation}
These sesquilinear forms can actually be understood as unbounded operators, as we now show.
\begin{lemma}\label{lem: first_estimate}
    For every real-valued $f\in\ccs(\R^4)$, the form $O_\polyn^{\mu\nu\rho\sigma}(f)$ extends to an operator on $\domSchwartz$, and 
    for $\vpsi = (0,\ldots,0,\vpsi^{(n_p,n_h)},0,\ldots)\in\domSchwartz$, the following estimate holds:
    \begin{equation*}
       \|O_\polyn^{\mu\nu\rho\sigma}(f)\vpsi \|\leq C^2_{f,\beta,\polyn} \sqrt{n_\#+2}\sqrt{n_\#+1}\big\|\vpsi^{(n_p,n_h)}\big\|'^s_{2},
    \end{equation*}
     where  $n_\#=\operatorname{max}\{n_p,n_h\}$, $s=\deg \polyn$, and $\|{\cdot}\|'^s_{2}$ is the $L_2$ norm with respect to the measure $\prod_{i=1}^n \di^3 \mathbf{p}_i(\|\mathbf{p}_i\|^2+1)^{s+4}$ on $\R^{3n_p+3n_h}$. Further, $C^2_{f,\beta,\polyn}$ denotes a positive constant that depends only on the function $f$, the polynomial $\polyn$ and the parameter $\beta$. 
\end{lemma}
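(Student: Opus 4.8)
The plan is to bound $\lVert O_\polyn^{\mu\nu\rho\sigma}(f)\vpsi\rVert$ directly by writing out its components in the sector reached from $(n_p,n_h)$ and reducing everything to a weighted momentum-space integral. First I would dispose of the scalar prefactors in \eqref{Eq: def_via_quadratic}. Since $m>0$ we have $\en{k}\geq m$, so $1/\sqrt{\en{p}\en{k}}\leq 1/m$; the polynomial obeys $|\polyn(p,k)|\leq C_\polyn(1+\|\mathbf{k}\|^2)^{s/2}(1+\|\mathbf{p}\|^2)^{s/2}$ because $p^0=\en{p}$, $k^0=\en{k}$ are themselves $\leq C(1+\|\cdot\|^2)^{1/2}$; and the thermal weights satisfy $|\bsh{+}{k}|\leq(1-e^{-\beta m})^{-1/2}$ (merely bounded), while $|\bsh{-}{k}|=(e^{\beta\en{k}}-1)^{-1/2}\leq C\,e^{-\beta\en{k}/2}$ decays \emph{exponentially}. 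This exponential factor is attached to every hole operator ($a,a^\dagger$, upper index $0$) and will be decisive. After these reductions it suffices to estimate $\iint d^3\mathbf{k}\,d^3\mathbf{p}\,W(\mathbf{k},\mathbf{p})\,c^{\mu\nu}_{\mathbf{k}}c^{\rho\sigma}_{\mathbf{p}}\vpsi$ with a kernel dominated by $(1+\|\mathbf{k}\|^2)^{s/2}(1+\|\mathbf{p}\|^2)^{s/2}\,|\hat f(E,\mathbf{p}+\mathbf{k})|$ times the relevant (possibly decaying) thermal factors, where $E=E_{\mu\nu\rho\sigma}(\mathbf{k},\mathbf{p})$ is the energy argument read off from \eqref{Eq: def_via_quadratic}.

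Next I would insert the explicit formulas \eqref{eq:annihidef}--\eqref{eq:creatordef} with $\vpsi$ supported in the single sector $(n_p,n_h)$. Each annihilator (lower index $0$) turns its momentum into an integration variable contracted against an argument of $\vpsi^{(n_p,n_h)}$, contributing a factor $\leq(2\pi)^{3/2}\sqrt{n_\#+1}$; each creator (lower index $1$) pins its momentum, through the delta functions, to one of the output arguments, producing a symmetrised sum $\tfrac{1}{\sqrt{\cdot}}\sum_l$ and a factor $\leq(2\pi)^{3/2}\sqrt{n_\#+2}$. Using $|\sum_l x_l|^2\leq(\#\,\mathrm{terms})\sum_l|x_l|^2$ together with the symmetry of $\vpsi^{(n_p,n_h)}$ to collapse the symmetrised sum to a single representative, the two operators combine into the announced prefactor $\sqrt{n_\#+2}\sqrt{n_\#+1}$. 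What remains is to bound, by $C^2_{f,\beta,\polyn}\,\big\|\vpsi^{(n_p,n_h)}\big\|'^s_2$, the $L^2$-norm (over the output arguments) of an integral transform of $\vpsi^{(n_p,n_h)}$ whose kernel is $W$, the creator momenta surviving as free output arguments and the annihilator momenta being integrated against $\vpsi^{(n_p,n_h)}$.

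For this momentum estimate I would apply a weighted Cauchy--Schwarz in each annihilator variable, writing $W=\big(W\,(1+\|\cdot\|^2)^{-(s+4)/2}\big)\,(1+\|\cdot\|^2)^{(s+4)/2}$ so that the factors $(1+\|\mathbf{p}_i\|^2)^{(s+4)/2}$ attach to $\vpsi^{(n_p,n_h)}$ and assemble into exactly $\big\|\vpsi^{(n_p,n_h)}\big\|'^s_2$ (the weight is $\geq1$, so placing it only on the contracted arguments is dominated by the full weighted norm). It then remains to show that the reciprocally weighted kernel integral $\iint|W(\mathbf{k},\mathbf{p})|^2\prod_{\text{annih.}}(1+\|\cdot\|^2)^{-(s+4)}\,d^3\mathbf{k}\,d^3\mathbf{p}$ is finite; its square root is $C_{f,\beta,\polyn}$. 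Outside a neighbourhood of the set $\mathbf{p}+\mathbf{k}\approx 0$ this is immediate, since there $|\hat f(\cdot,\mathbf{p}+\mathbf{k})|$ decays faster than any power of $\max(\|\mathbf{p}\|,\|\mathbf{k}\|)$ and overwhelms the polynomial.

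The genuine obstacle is the near-antidiagonal region $\mathbf{p}+\mathbf{k}\approx0$ with both momenta large, where the spatial argument of $\hat f$ is bounded and gives no decay; here I expect the main work to lie, and the analysis splits by the type $(\mu\nu\rho\sigma)$ into three mechanisms. (a) When the two energy signs in $E$ agree (pair terms such as $a^\dagger a^\dagger$, $b^\dagger b^\dagger$), on $\mathbf{p}\approx-\mathbf{k}$ one has $|E|\sim\en{p}+\en{k}\gtrsim\|\mathbf{k}\|$, so the \emph{energy}-decay of the Schwartz function $\hat f$ beats the polynomial and in fact yields $W\in L^2$ with no weight needed. (b) When a hole operator is present (e.g.\ the mixed pair term $a^\dagger b^\dagger$, for which no annihilator weight is available), its factor $\mathscr{B}^-\sim e^{-\beta\omega/2}$ supplies exponential decay along the antidiagonal and dominates everything. (c) The remaining dangerous terms are the number-conserving particle terms $b^\dagger b$, $b b^\dagger$ (energies subtract, no thermal decay); these carry exactly one annihilator, so one inverse weight $(1+\|\mathbf{p}\|^2)^{-(s+4)}$ is at our disposal, and since on the antidiagonal the pinned creator contributes growth $\approx(1+\|\mathbf{p}\|^2)^{s}$, the integrand decays like $(1+\|\mathbf{p}\|^2)^{s-4}$, integrable over $\R^3$ for the degrees at hand (in particular $s=\deg\polyn=2$ for the energy density \eqref{eq:hdef}). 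The surplus $+4$ in the weight exponent is chosen precisely to produce this decay; case (c), the coupling of a pinned output momentum to an integrated input momentum on the antidiagonal, is the most delicate point, and is where the Schwartz property of $\hat f$ in \emph{both} arguments, the exponential thermal decay, and the weight margin are used most sharply. Assembling the three cases yields a finite constant $C^2_{f,\beta,\polyn}$ depending only on $f$, $\beta$ and $\polyn$, as claimed.
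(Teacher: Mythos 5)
Your overall strategy is the one the paper follows: peel off the combinatorial factors $\sqrt{n_\#+1}\sqrt{n_\#+2}$ from the creators/annihilators, reduce everything to square-integrability of an explicit two-variable kernel after transferring the weight $(1+\|\mathbf{p}\|^2)^{(s+4)/2}$ onto the contracted argument of $\vpsi^{(n_p,n_h)}$ by Cauchy--Schwarz, and then verify kernel integrability by exactly the three mechanisms the paper invokes (energies adding $\Rightarrow$ Schwartz decay of $\hat f$ in its first argument; any hole operator $\Rightarrow$ exponential decay of $\mathscr{B}^-$; the number-conserving particle term $b^\dagger b$ $\Rightarrow$ the weight). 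The only structural difference is that the paper dispatches the two-annihilator case via the adjoint identity $(O_\polyn^{\mu 1\rho 1}(f))^\dagger= O_{\overline\polyn}^{\mu 0\rho 0}(\overline{f_R})$ while you treat it directly; both are fine.

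There is, however, a genuine gap in your case (c): as executed, it proves the lemma only for $\deg\polyn\leq 2$, whereas the statement (and the very choice of weight exponent $s+4$) is for an \emph{arbitrary} polynomial. The culprit is the factorized bound $|\polyn(p,k)|\leq C_\polyn(1+\|\mathbf{k}\|^2)^{s/2}(1+\|\mathbf{p}\|^2)^{s/2}$: on the antidiagonal $\mathbf{k}\approx-\mathbf{p}$ this inflates to $(1+\|\mathbf{p}\|^2)^{s}$, so your weighted squared kernel behaves like
\begin{equation*}
  |W|^2\,(1+\|\mathbf{p}\|^2)^{-(s+4)} \approx (1+\|\mathbf{p}\|^2)^{\,s-4}\,|\hat f|^2,
\end{equation*}
which is integrable in $\mathbf{p}\in\R^3$ only when $2(s-4)<-3$, i.e.\ $s\leq 2$ --- your parenthetical retreat to ``the degrees at hand'' concedes exactly this, but it does not prove the lemma as stated. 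The repair is cheap: a polynomial of total degree $s$ satisfies the joint bound $|\polyn(p,k)|\leq C_\polyn(1+\|p\|+\|k\|)^{s}$, and in the dangerous region (where $\|\mathbf{p}+\mathbf{k}\|$ is effectively bounded, since the Schwartz decay of $\hat f$ in its spatial argument suppresses everything else) this is $\lesssim (1+\|\mathbf{p}\|)^{s}$, not $(1+\|\mathbf{p}\|)^{2s}$. With that bound the weighted squared kernel decays like $(1+\|\mathbf{p}\|^2)^{s-(s+4)}=(1+\|\mathbf{p}\|^2)^{-4}$, integrable for \emph{every} degree, and your argument then yields the lemma in full generality with a constant depending only on $f$, $\beta$ and $\polyn$.
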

\begin{proof}
    We will treat three separate cases, corresponding to the presence of two creators ($\nu=\sigma=1$), two annihilators ($\nu=\sigma=0$) or one creator and one annihilator ($\nu=1,\sigma=0$).
    
    We start with the case $\nu=\sigma=1$, obtaining a result that generalises the one in \cite[Appendix 1]{LangerholcSchroer} for the vacuum representation. Inserting \eqref{eq:creatordef} and using Cauchy-Schwarz, we find for $\vphi\in\domSchwartz$: 
    \begin{equation}\label{eq: norm_2_creation}
         \left\lvert (\vphi, O_\polyn^{\mu 1 \rho 1}(f)\vpsi ) \right\rvert \leq  \|\vphi^{(n_p+\rho+\mu,n_h+2-\rho-\mu)} \|_2 \sqrt{n_\#+2} \sqrt{n_\#+1}\big\|\hat{f'} \big\|_{2}\big\| \vpsi^{(n_p,n_h)}\big\|_{2},
    \end{equation}
    where we denoted
    \begin{equation}\label{eq:fprimedef}
    \hat{f'}(\textbf{p},\textbf{k})\coloneqq \frac{\mathcal{P}(\en{p},\mathbf{p},\en{k},\mathbf{k})}{(2\pi)^3}\frac{\hat{f}\left((-1)^{\rho}\en{p}+(-1)^{\mu}\en{k},\textbf{p}+\textbf{k}\right)}{2\sqrt{\en{p}\en{k}}}
    (\mathscr{B}_{\textbf{p}}^{+})^{\rho}
    (\mathscr{B}_{\textbf{p}}^{-})^{1-\rho}
    (\mathscr{B}_{\textbf{k}}^{+})^{\mu}
    (\mathscr{B}_{\textbf{k}}^{-})^{1-\mu}.
\end{equation}
By the Riesz representation theorem,  $O_\polyn^{\mu 1 \rho 1}(f)$ thus extends to an operator with the stated bound \emph{if} we can show that $\hat{f'}$ is indeed square-integrable. To that end, note that for $\rho=\mu$, the first argument of $\hat f$ in \eqref{eq:fprimedef} is growing linearly with $\|\mathbf{p}\|$ and $\|\mathbf{k}\|$, hence $\hat f(\ldots)$ is decaying faster than polynomially, yielding the result. If $\rho \neq \mu$, then at least one of the factor $\mathscr{B}$ decays exponentially, while $\hat f$ decays faster than polynomially in its first argument, $\pm(\en{p}-\en{k})$; this is enough to show that $\hat f'$ is square integrable (cf. \cite[Appendix~B]{Sangaletti:thesis} for details). 

For the case $\nu=\mu=0$, we note that (as form adjoints)
\begin{equation}
    (O_\polyn^{\mu 1\rho 1}(f))^\dagger= O_{\overline\polyn}^{\mu 0\rho 0}(\overline{f_R})
\end{equation}
where $f_R(t,\mathbf{x})=f_R(t,-\mathbf{x})$.
Applying the above arguments again (but with the roles of $\vphi$ and $\vpsi$ reversed) then yields the proposed result.

The remaining case is $\nu=1,\sigma=0$. Using \eqref{eq:annihidef}, \eqref{eq:creatordef} and the Cauchy-Schwarz inequality, we obtain
\begin{equation}\label{eq: norm_cre_an}
   \lvert (\vphi, O_\polyn^{\mu 1\rho 0}(f)  \vpsi) \rvert \leq 
    \big\| \vphi^{(n_p+\mu-\rho,n_h-\mu+\rho)}\big\|_{2}
    \sqrt{n_\#} \sqrt{n_\#}\left\|{\hat{f''}}\right\|_{2}\big\| \vpsi^{(n_p,n_h)}\big\|_{2}^{\prime s},
\end{equation} 
where 
\begin{equation}
    \hat{f''}(\textbf{p},\textbf{k})\coloneqq \frac{\mathcal{P}(\en{p},\mathbf{p},\en{k},\mathbf{k})}{(2\pi)^3}\frac{\hat{f}\left((-1)^{\rho+1}\en{p}+(-1)^{\mu}\en{k},\textbf{p}+\textbf{k}\right)}{2\sqrt{\en{p}\en{k}}(\|\textbf{p}\|^2+1)^{s+4}}
    (\mathscr{B}_{\textbf{p}}^{+})^{\rho}
    (\mathscr{B}_{\textbf{p}}^{-})^{1-\rho}
    (\mathscr{B}_{\textbf{k}}^{+})^{\mu}
    (\mathscr{B}_{\textbf{k}}^{-})^{1-\mu},
\end{equation}
and it remains to show that $f''\in L_2(\R^6)$. In fact, for $\mu \neq\rho$ and for $\mu=\rho=0$, this follows with arguments as above; while for $\mu=\rho=1$, the additional factor $(\|\textbf{p}\|^2+1)^{s+4}$ in the denominator, together with decay of $\hat f(\ldots)$ in the difference variable, is enough to guarantee finite $L^2$ norm.
\end{proof}

We now turn to the sesquilinear form $h(f)=\int f(x) h(x) dx$ with $h(x)$ as in \eqref{eq:hdef}. As a consequence of the above estimates, we can extend $h(f)$ to an operator:

\begin{lemma}\label{Lem: estimate:energy_density}
    For every real valued $f\in\test$, the sesquilinear form $h(f)$ extends to a symmetric operator on $\domSchwartz$; and for every $\vpsi = (0,\ldots,0,\vpsi^{(n_p,n_h)},0,\ldots)\in\domSchwartz$, the following estimate holds:
    \begin{equation}\label{eq: estimate_energy_density}
         \| h(f)\vpsi\| \leq B^2_{f,\beta} \sqrt{n_\#+2}\sqrt{n_\#+1}\left\|\vpsi^{(n_p,n_h)}\right\|'_{2,\R^{3n_p+3n_h}},
    \end{equation}
    where $n_\#=\operatorname{max}\{n_p,n_h\}$, where $\|\cdot\|'_{2}$ is the $L_2$ norm with respect to the measure $\prod_{i=1}^{n_p+n_h} \di^3 \mathbf{p}_i(\|\mathbf{p}_i\|^2+1)^{6}$ on $\R^{3n_p+3n_h}$ and $B_{f,\beta}^2$ is a positive constant.
\end{lemma}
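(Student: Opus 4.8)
The plan is to reduce the statement entirely to Lemma~\ref{lem: first_estimate}, by writing $h(f)$ as a \emph{finite} sum of the elementary forms $O_\polyn^{\mu\nu\rho\sigma}(f)$. First I would insert the mode expansion \eqref{eq:fielddef} of $\phi$ into each of the five normal-ordered squares appearing in \eqref{eq:hdef}, namely $:(\partial_0\phi)^2:$, the three $:(\partial_i\phi)^2:$, and $m^2:\phi^2:$. Since $\phi$ is a sum of four monomials (the $a$, $a^\dagger$, $b$, $b^\dagger$ terms), each normal-ordered square becomes a finite sum over pairs of these monomials, and normal ordering places creators to the left of annihilators. This ordering is exactly encoded by the constraint $\nu\geq\sigma$ in the definition of $O_\polyn^{\mu\nu\rho\sigma}$: the admissible pairs $(\nu,\sigma)\in\{(1,1),(1,0),(0,0)\}$ correspond precisely to the three cases (two creators, one creator/one annihilator, two annihilators) analysed in Lemma~\ref{lem: first_estimate}.

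Next I would carry out the spacetime integral $\int f(x)\,h(x)\,\di x$ term by term. Each derivative $\partial_\mu$ acting on a factor $e^{\pm ikx}$ produces a four-momentum component $\pm i k_\mu$ (with $k_0=\en{k}$), so a product of two derivative factors yields a polynomial $\polyn(p,k)$ of degree at most $2$, while the mass term contributes the constant polynomial of degree $0$. The $x$-integration turns the product of exponentials into $\hat f$ evaluated at the signed four-momentum $\big((-1)^{\rho+\sigma+1}\en{p}+(-1)^{\mu+\nu+1}\en{k},\,\mathbf{p}+\mathbf{k}\big)$, exactly reproducing the integrand of \eqref{Eq: def_via_quadratic}; the normalisations $1/\sqrt{2\en{k}}$ and the coefficients $\mathscr{B}^\pm$ from the field are absorbed into the $O$-structure. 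Thus each of the finitely many contributions is of the form $O_\polyn^{\mu\nu\rho\sigma}(f)$ with $s=\deg\polyn\leq 2$, and since these forms are operators on $\domSchwartz$ by Lemma~\ref{lem: first_estimate}, so is their finite sum $h(f)$.

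For the quantitative bound I would apply Lemma~\ref{lem: first_estimate} to each summand and use the triangle inequality. Because $s\leq 2$ throughout, the weight $(\|\mathbf{p}_i\|^2+1)^{s+4}$ in that lemma is dominated by $(\|\mathbf{p}_i\|^2+1)^{6}$, so every term is controlled by the single weighted norm $\|\vpsi^{(n_p,n_h)}\|'_{2}$ appearing in the statement, with the common prefactor $\sqrt{n_\#+2}\sqrt{n_\#+1}$. Collecting the finitely many constants $C^2_{f,\beta,\polyn}$ into one constant $B^2_{f,\beta}$ then yields \eqref{eq: estimate_energy_density}. Symmetry of $h(f)$ for real $f$ follows because the summands pair up under form-adjunction exactly as recorded in the proof of Lemma~\ref{lem: first_estimate}, e.g.\ $(O_\polyn^{\mu 1\rho 1}(f))^\dagger=O_{\overline\polyn}^{\mu 0\rho 0}(\overline{f_R})$, reflecting the hermiticity of the classical density built from the real field $\phi$ with real coefficients.

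The main obstacle is organisational rather than conceptual: one must keep careful track of the signs coming from the derivatives and exponentials, and verify that normal ordering produces no $(\nu,\sigma)=(0,1)$ term. A minor point in this bookkeeping is that for products of mixed type (one $a$- and one $b$-operator) one uses the commutativity of the $a$ and $b$ families to relabel momenta so that the creator occupies the $\mathbf{k}$-slot, with no c-number correction arising. Once this matching is in place, the estimate is an immediate consequence of the elementary bound already established.
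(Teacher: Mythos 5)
Your proposal is correct and follows essentially the same route as the paper: the paper's proof likewise decomposes $h(f)$ into a finite linear combination of the forms $O_\polyn^{\mu\nu\rho\sigma}(f)$ with $\deg\polyn\in\{0,2\}$, applies Lemma~\ref{lem: first_estimate} to each term together with the triangle inequality (the weight $(\|\mathbf{p}_i\|^2+1)^{s+4}$ being dominated by the exponent-$6$ weight), and deduces symmetry from hermiticity of the form $h(x)$ for real-valued $f$. Your write-up merely makes explicit the bookkeeping that the paper leaves implicit.
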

\begin{proof}
     $ h(f)$ can be written as a linear combination of operators $O^{\mu\nu\rho\sigma}_\polyn(f)$ with $\deg \polyn \in \{0,2\}$. Using Lemma~\ref{lem: first_estimate} together with the triangle inequality, the estimate~\eqref{eq: estimate_energy_density} is immediately derived. In addition, the form $h(x)$ in \eqref{eq:hdef} is hermitean, and therefore the operator $h(f)$ is symmetric on $\domSchwartz$ when $f$ is real-valued. 
\end{proof}
As an immediate consequence of the previous lemma, the operator $h(f)$ on $\domSchwartz$ is closable; we will denote its closure with the same symbol $h(f)$. 

As a next step, we show that the Weyl operators $\pi(W(g))=\exp i \phi(g)$ and $\tilde\pi(W(g))$ map the core $\domSchwartz$ into $\dom {h(f)}$. Generalising known results (see for instance \cite{ReedSimon:1975-2,BraRob:qsm2}) to the thermal representation, it is easy to see that the following estimate holds for every real-valued function $g\in\ccs(\mathbb{R}^4)$:
\begin{equation}\label{eq: est_for_field}
    \|\phi(g) \vpsi \|\leq \frac{2^2}{(2\pi)^{3}}\bsh{+}{0}\sqrt{n^\#+1}\Big\|\frac{\hat{g}}{\sqrt{2\omega}}\restriction{\mathcal{H}_m}\Big\|_2\|\vpsi^{(n_p,n_h)}\|,
\end{equation}
where $\mathcal{H}_m$ denotes the positive mass hyperboloid and we made use of the relation
\begin{equation}
    \bshs{-}{p}\leq\bshs{+}{p}\leq \bshs{+}{0} = \frac{1}{1-e^{-\beta m}}.
\end{equation}
Further, since $g\in\test$, its Fourier transform restricted to the mass hyperboloid belongs to $\mathcal{S}(\R^3)$, which implies  $\phi(g) \vpsi \in\domSchwartz$.
Iterating this $n$ times, we obtain:
\begin{equation}\label{eq: estimate_for_an}
     \|(\phi (g))^n \vpsi\|\leq \frac{2^{(2n)}}{(2\pi)^{3n}}(\bsh{+}{0})^{n}\sqrt{n^\#+1}\ldots\sqrt{n^\#+n}\Big\|\frac{\hat{g}}{\sqrt{2\omega}}\restriction{\mathcal{H}_m}\Big\|_2^{n}\|\vpsi^{(n_p,n_h)}\|.
\end{equation}
This estimate shows that $\domSchwartz$ is a dense set of analytic vectors for the field operators $\phi(g)$; namely, by the ratio test, the following series is convergent in the Hilbert space norm,
\begin{equation}
    e^{i\phi(g)}\vpsi =\sum_{n=0}^{\infty}\frac{(i\phi(g))^n}{n!}\vpsi,
\end{equation}
and can be used to expand the Weyl operator acting on a vector $\vpsi\in\domSchwartz$.
(By the same argument, all finite particle number vectors are analytic for $\phi(g)$.)
We can now prove the following lemma:
\begin{lemma}\label{lem: dense_in_dom}
    For any $f \in \test$, and any real-valued $g\in\test$, the Weyl operators $\pi(W(g))$ and $\tilde\pi(W(g))$ map $\domSchwartz$ into $\dom{h(f)}$. 
\end{lemma}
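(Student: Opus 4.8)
The plan is to leverage two facts already in place: that $h(f)$ is a \emph{closed} operator (the closure of its restriction to $\domSchwartz$), and that, by the estimate \eqref{eq: estimate_for_an}, $\domSchwartz$ consists of analytic vectors for $\phi(g)$. Fix $\vpsi\in\domSchwartz$ and expand the Weyl operator as the norm-convergent series $\pi(W(g))\vpsi=\sum_{n\geq 0}\frac{(i\phi(g))^n}{n!}\vpsi$. Because $\phi(g)$ preserves $\domSchwartz$, all partial sums $S_N$ lie in $\domSchwartz$ and converge to $\pi(W(g))\vpsi$. Since $h(f)$ is closed, it suffices to show that $h(f)S_N$ also converges; and for this it is enough to prove
\begin{equation*}
  \sum_{n\geq 0}\frac{1}{n!}\,\big\|h(f)(\phi(g))^n\vpsi\big\|<\infty ,
\end{equation*}
for then $\big(h(f)S_N\big)_N$ is Cauchy and $\pi(W(g))\vpsi\in\dom{h(f)}$ follows, with $h(f)\pi(W(g))\vpsi$ given by the corresponding series.

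To estimate each term I would decompose $(\phi(g))^n\vpsi$ into its particle/hole sectors $(m_p,m_h)$ and apply the triangle inequality together with the single-sector bound of Lemma~\ref{Lem: estimate:energy_density}:
\begin{equation*}
  \big\|h(f)(\phi(g))^n\vpsi\big\|\leq B^2_{f,\beta}\sum_{(m_p,m_h)}(m_\#+2)\,\big\|[(\phi(g))^n\vpsi]^{(m_p,m_h)}\big\|'_{2},
\end{equation*}
where $m_\#=\max\{m_p,m_h\}$ and $\|\cdot\|'_2$ carries the momentum weight $\prod_i(\|\mathbf{p}_i\|^2+1)^6$. Since $\vpsi$ occupies only finitely many sectors, with total number bounded by some $n_0$, every sector contributing to $(\phi(g))^n\vpsi$ satisfies $m_p+m_h\leq n_0+n$; hence $m_\#+2\leq n_0+n+2$, and the number of contributing sectors grows only polynomially in $n$.

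The remaining ingredient is a \emph{weighted} version of the field estimate \eqref{eq: est_for_field}: since $\hat g\restriction\mathcal{H}_m\in\mathcal{S}(\R^3)$ is Schwartz and the factors $\bsh{\pm}{k}$ are bounded, each application of $\phi(g)$ maps a weighted sector of combined number $m$ into the adjacent ones with a bound of the form $\|\phi(g)\eta\|'_2\leq C_g\sqrt{m+1}\,\|\eta\|'_2$, the constant $C_g$ being controlled by Schwartz seminorms of $\hat g/\sqrt{2\en{k}}\restriction\mathcal{H}_m$ (the new argument produced by $\cre{a}{k}$ or $\cre{b}{k}$ receives its weight $(\|\mathbf{k}\|^2+1)^3$ directly from this Schwartz kernel, while annihilation integrates one argument against it and leaves the weights of the remaining arguments intact). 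Iterating $n$ times, in parallel with \eqref{eq: estimate_for_an}, yields $\sum_{(m_p,m_h)}\big\|[(\phi(g))^n\vpsi]^{(m_p,m_h)}\big\|'_2\leq C_g^{\,n}\sqrt{(n_0+1)\cdots(n_0+n)}\,\|\vpsi\|'_2$ up to a polynomial prefactor. Combining the three bounds, the $n$-th term of the series is at most a polynomial in $n$ times $C_g^{\,n}\sqrt{(n_0+n)!/n_0!}\,/\,n!$; writing $(n_0+n)!/n_0!=\binom{n_0+n}{n}\,n!$ shows this equals $C_g^{\,n}\,\mathrm{poly}(n)/\sqrt{n!}$, which is summable by the ratio test. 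This proves $\pi(W(g))\vpsi\in\dom{h(f)}$. For $\tilde\pi(W(g))=J\pi(W(g))J$ the same argument applies: by \eqref{eq:jact}, $J$ preserves $\domSchwartz$ and merely interchanges particle and hole sectors (up to complex conjugation), and both Lemma~\ref{Lem: estimate:energy_density} and the weighted field estimates are symmetric under $n_p\leftrightarrow n_h$.

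The main obstacle I expect is precisely the weighted field estimate of the third paragraph: unlike \eqref{eq: est_for_field}, which controls only the plain $L^2$ norm, here one must propagate the momentum weight $(\|\mathbf{p}\|^2+1)^6$ required by Lemma~\ref{Lem: estimate:energy_density} through every application of $\phi(g)$, and in particular verify that the annihilation part of $\phi(g)$ -- which integrates out one weighted variable against the Schwartz kernel -- does not spoil the weights of the surviving arguments. Once this weighted boundedness of $\phi(g)$ between consecutive Fock sectors is in hand, the rest is bookkeeping of the sector count and the factorial growth.
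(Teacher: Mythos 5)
Your proposal is correct and follows essentially the same route as the paper's proof: expand $\pi(W(g))\vpsi$ in its power series on the analytic vectors of $\domSchwartz$, bound $\|h(f)(\phi(g))^n\vpsi\|/n!$ by combining Lemma~\ref{Lem: estimate:energy_density} with an iterated \emph{weighted} field estimate, sum by the ratio test, and conclude via the closed graph of $h(f)$; the $J$-conjugation case is handled identically in both. The ``weighted field estimate'' you flag as the main obstacle is indeed the step the paper uses but does not spell out -- it simply inserts the primed norms $\big\|\hat{g}/\sqrt{2\omega}\restriction\mathcal{H}_m\big\|^{\prime\, n}\|\vpsi^{(n_p,n_h)}\|'$ into its estimate -- so your more explicit treatment of how the weight $(\|\mathbf{p}\|^2+1)^{6}$ propagates through creation and annihilation parts is a faithful (and slightly more careful) reconstruction of the intended argument.
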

\begin{proof}
    Let $\vpsi\in\domSchwartz$; by linearity, it suffices to consider $\vpsi = (0,\ldots,0,\vpsi^{(n_p,n_h)},0,\ldots)$. 
    For each finite $N$ the vector
    \begin{equation}\label{eq: conv_seq}
        \vpsi^N\coloneqq \sum_{n=0}^{N}\frac{(i\phi(g))^n}{n!}\vpsi
    \end{equation}
    lies in $\domSchwartz$, hence in the domain of $h(f)$.
    Thanks to the estimate \eqref{eq: estimate_energy_density}, we have:
    \begin{equation}\begin{aligned}
       &\left\| h(f) \sum_{n=0}^{N}\frac{(i\phi(g))^n}{n!}\vpsi \right\|\leq \sum_{n=0}^N\frac{\| h(f) (\phi(g))^n \vpsi\|}{n!}\\
       &\leq \sum_{n=0}^N\frac{B^2_{f,\beta}}{n!}\frac{2^{(2n)}}{(2\pi)^{3n}}(\bsh{+}{0})^{n}\sqrt{n^\#+1}\ldots\sqrt{n^\#+n+2}\Big\|\frac{\hat{g}}{\sqrt{2\omega}}\restriction{\mathcal{H}_m}\Big\|^{\prime n}\|\vpsi^{(n_p,n_h)}\|'.
    \end{aligned}\end{equation}
    By the ratio test, this sequence converges as $N\to\infty$. In conclusion, within the closed graph of $h(f)$:
    \begin{equation}\label{eq: conv_in_graph_anali}
        \vpsi^N\oplus  h(f)\vpsi^N\xrightarrow[]{N\to\infty}\pi(W(g)) \vpsi \oplus h(f) \pi(W(g))\vpsi.
    \end{equation}
    We conclude that $\pi(W(g))\vpsi\in\dom{h(f)}$.---The argument for $\tilde\pi(W(g)) = J \pi(W(g)) J$ is analogous, considering that $J$ acts as in \eqref{eq:jact}.
\end{proof}
We can now prove the affiliation of the energy density operator:
\begin{proposition}\label{prop: final_proof_affil}
    Let $f\in\ccs(\ocal)$. Then $h(f)$ is affiliated with $\M(\ocal')' \cap \M$.
\end{proposition}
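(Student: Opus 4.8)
The plan is to establish affiliation via strong commutation with a generating set of unitaries of the commutant. Since $\M(\ocal')'\cap\M$ has commutant $(\M(\ocal')'\cap\M)'=\M(\ocal')\vee\M'=\M(\ocal')\vee\Mt$, and the latter is generated by the Weyl unitaries $\pi(W(g))$ with $\operatorname{supp} g\subset\ocal'$ together with $\tilde\pi(W(g))=J\pi(W(g))J$ for arbitrary real-valued $g\in\test$, it suffices to show that $h(f)$ commutes with each such unitary. Concretely, recalling that $h(f)$ is the closed symmetric operator obtained after Lemma~\ref{Lem: estimate:energy_density} with core $\domSchwartz$, I would show that for each of these unitaries $U$ one has $U\domSchwartz\subset\dom{h(f)}$, $U^\ast\domSchwartz\subset\dom{h(f)}$, and $h(f)U\vpsi=Uh(f)\vpsi$ for all $\vpsi\in\domSchwartz$. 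Since $\domSchwartz$ is a core for $h(f)$, the relation $Uh(f)U^\ast=h(f)$ then follows by closedness, giving $h(f)\affil\M(\ocal')'\cap\M$ by the affiliation criterion of Appendix~\ref{app:affil}.

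The domain condition $U\domSchwartz\subset\dom{h(f)}$ (and the same for $U^\ast$, which is again a Weyl operator of the same localization) is exactly the content of Lemma~\ref{lem: dense_in_dom}. The algebraic commutation on $\domSchwartz$ is established separately for the two families of generators. For $\pi(W(g))=e^{i\phi(g)}$ with $\operatorname{supp} g\subset\ocal'$, the supports of $f$ and $g$ are spacelike separated, so the Pauli--Jordan commutator function vanishes and $[\phi(f'),\phi(g)]=0$ on $\domSchwartz$ for the fields $\partial_\mu\phi$ entering $h$ in \eqref{eq:hdef}; since $h(f)$ is a Wick polynomial in these fields, it follows that $[h(f),\phi(g)^n]\vpsi=0$ on $\domSchwartz$ for every $n$. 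For $\tilde\pi(W(g))=e^{iJ\phi(g)J}$, the operator $J\phi(g)J$ is built, via \eqref{eq:jact}, from the bath creation and annihilation operators, which commute with the system operators $a_{\mathbf{k}},b_{\mathbf{k}}$ constituting $h(f)$; hence again $[h(f),(J\phi(g)J)^n]\vpsi=0$ on $\domSchwartz$.

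To promote these algebraic relations to commutation with the Weyl unitaries themselves, I would expand $\pi(W(g))\vpsi=\sum_{n\geq0}\frac{(i\phi(g))^n}{n!}\vpsi$, which converges by the analytic-vector estimate \eqref{eq: estimate_for_an}. Applying $h(f)$ to the partial sums and using $[h(f),\phi(g)^n]\vpsi=0$ gives $h(f)\sum_{n=0}^{N}\frac{(i\phi(g))^n}{n!}\vpsi=\sum_{n=0}^{N}\frac{(i\phi(g))^n}{n!}h(f)\vpsi$. The left-hand side converges to $h(f)\pi(W(g))\vpsi$ because $h(f)$ is closed and the estimate \eqref{eq: estimate_energy_density} together with \eqref{eq: estimate_for_an} bounds $\|h(f)\phi(g)^n\vpsi\|$ so that the partial sums converge in the graph norm of $h(f)$ — precisely the computation already carried out in the proof of Lemma~\ref{lem: dense_in_dom}. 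The right-hand side converges to $\pi(W(g))h(f)\vpsi$. Thus $h(f)\pi(W(g))\vpsi=\pi(W(g))h(f)\vpsi$ on $\domSchwartz$, and the same argument applies verbatim to $\tilde\pi(W(g))$.

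The main obstacle is this last interchange of $h(f)$ with the infinite series defining the Weyl operator: it is not purely algebraic and hinges on the uniform-in-$n$ operator bounds of Lemma~\ref{lem: first_estimate} and Lemma~\ref{Lem: estimate:energy_density} together with the iterated field estimate \eqref{eq: estimate_for_an}, which guarantee graph-norm convergence of the partial sums and let closedness of $h(f)$ close the argument. Once strong commutation with all generating unitaries is secured on the core $\domSchwartz$, the affiliation $h(f)\affil\M(\ocal')'\cap\M$ is immediate.
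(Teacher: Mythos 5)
Your overall strategy coincides with the paper's own proof: establish $[h(f),\phi(g)^n]\vpsi=0$ and $[h(f),(J\phi(g)J)^n]\vpsi=0$ on $\domSchwartz$, promote this to commutation with the Weyl unitaries through the analytic-vector/graph-convergence computation of Lemma~\ref{lem: dense_in_dom}, and conclude affiliation from closedness via the criterion of Appendix~\ref{app:affil}. The packaging differs only slightly: the paper applies Lemma~\ref{lem: weaker_affiliation} twice (obtaining $h(f)\affil\M(\ocal')'$ and $h(f)\affil\M(\R^4)$ separately, the intersection then following from the polar/spectral characterization of affiliation), whereas you work directly with the commutant $\M(\ocal')\vee\Mt(\R^4)$ and a generating family of unitaries; your order of steps is the right one for that variant, since the mixed products $\pi(W(g_1))\tilde\pi(W(g_2))$ no longer map $\domSchwartz$ into itself, so one must first upgrade each single unitary to the full-domain relation $Uh(f)U^\ast=h(f)$ before composing.

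There is, however, a genuine flaw in your justification of the second commutation relation. You argue that $J\phi(g)J$ is ``built from the bath creation and annihilation operators, which commute with the system operators $a_{\mathbf{k}},b_{\mathbf{k}}$ constituting $h(f)$.'' No such separate family of oscillators exists: by \eqref{eq:jact}, conjugation by $J$ exchanges the two tensor factors (up to complex conjugation), which at the level of the distributional kernels gives $Ja_{\mathbf{k}}J=b_{\mathbf{k}}$ and $Jb_{\mathbf{k}}J=a_{\mathbf{k}}$. Thus $J\phi(g)J$ is built from exactly the same operators $a_{\mathbf{k}},b_{\mathbf{k}}$ as the field \eqref{eq:fielddef} itself, only with the roles of the coefficients $\bsh{-}{k}$ and $\bsh{+}{k}$ interchanged, and its individual terms do \emph{not} commute with those of $h(f)$. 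The identity $[\phi(x),J\phi(y)J]=0$ -- which holds for \emph{all} $x,y$, not merely at spacelike separation -- arises from a cancellation between the particle and hole sectors: the $a$-sector and $b$-sector CCR contributions carry the common factor $\bsh{+}{k}\bsh{-}{k}$ but opposite combinations of exponentials, so they cancel identically. This is precisely the Araki--Woods doubling mechanism, and it has to be verified by an explicit kernel computation (the route the paper takes, computing the commutator in point-split/normal-ordered form); a termwise commutation argument of the kind you propose would simply fail. Once this step is repaired, the remainder of your argument goes through and reproduces the paper's proof.
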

\begin{proof}
    Let $\vpsi\in \domSchwartz$. One checks by direct computation\footnote{The computation is first done as sesquilinear forms on $\domSchwartz \times \domSchwartz$; perhaps the easiest approach is to understand the normal ordering in \eqref{eq:hdef} in a point-splitting/ground state subtraction sense, and use the locality of the field $\phi$.} and Lemma~\ref{Lem: estimate:energy_density} that 
    \begin{equation}
         \left[ h(f),\phi(g)^n\right]\vpsi=0\;\;\; \text{if } \operatorname{supp} g \subset\ocal'.
    \end{equation}
    Since $\vpsi\in\domSchwartz$ is analytic for $\phi(g)$, we have
    \begin{equation}
      \begin{aligned}
        h(f) \pi(W(g)) \vpsi
        &=  h(f) \sum_{n=0}^{\infty}\frac{(i\phi(g))^n}{n!}\vpsi
        =\sum_{n=0}^{\infty} h(f)\frac{(i\phi(g))^n}{n!}\vpsi
        \\
        &=\sum_{n=0}^{\infty}\frac{(i\phi(g))^n}{n!} h(f)\vpsi=\pi(W(g))h(f) \vpsi,
      \end{aligned}
    \end{equation}
    where the second equation is implied by \eqref{eq: conv_in_graph_anali} and the fourth one again employs analyticity of $h(f)\vpsi$ for $\phi(f)$.  Since these $\pi(W(g))$ generate $\M(\ocal')$, Lemma \ref{lem: weaker_affiliation} now shows that $h(f) \affil \M(\ocal')'$.

    Similarly, one finds that $[ h(f), J \phi(g)^n J] = 0$ for any $g\in\test$ as a relation on $\domSchwartz$, and concludes that $h(f) \affil (J \M(\R^4) J)'=\M(\R^4)$. 
\end{proof}

\subsection{Integration to the Liouvillian}\label{sec:integral}

In this section, we show that for the sesquilinear form $\ell(x)$ and the operator $L$ defined at the beginning of \autoref{sec:freescalar}, one has 
\begin{equation}
   \int \di^3\mathbf{x} \,(\vpsi, \ell(0,\mathbf{x}) \vxi) = (\vpsi, L \vxi) \quad \text{for any $\vpsi,\vxi\in\domSchwartz$}.
\end{equation}
This follows with a computation which has been worked out in \cite[Appendix~A]{Sangaletti:thesis}. For the benefit of the reader, we summarize the main steps here.

First, consider the case that $\vpsi$ and $\vxi$ have only one nonzero component, $\vpsi^{(n_p,n_h)}$ and $\vxi^{(n_p+2,n_h)}$ respectively. In that case, one obtains from the expression \eqref{eq:hdef},
\begin{equation}
  (\vpsi, h(0,\mathbf{x})\vxi) = \int\frac{\di^3\mathbf{k} \, \di^3\mathbf{p}}{(2\pi)^6}
  \frac{-\en{k}\en{p} - \mathbf{k}\cdot\mathbf{p}  +m^2}{4\sqrt{\en{k}\en{p}}}
  \bsh{-}{k}\bsh{-}{p}
  (\vpsi, \ani{a}{k}\ani{a}{p} \vxi) e^{-i (\mathbf{k} + \mathbf{p}) \cdot \mathbf{x}}.
\end{equation}
Note here that $(\mathbf{k},\mathbf{p})\mapsto(\vpsi, \ani{a}{k}\ani{a}{p} \vxi)$ is a Schwartz function; hence $\mathbf{x} \mapsto(\vpsi, h(0,\mathbf{x})\vpsi) $ is Schwartz, and integrating it over all $\mathbf{x}\in\R^3$ amounts to evaluating its Fourier transform, i.e., the $(\mathbf{k},\mathbf{p})$-integrand, at $\mathbf{k}+\mathbf{p}=0$, where however the first factor of the integrand vanishes.

Thus in  $\int \di^3\mathbf{x} \,(\vpsi,  h(0,\mathbf{x}) \vxi)$ for generic $\vpsi,\vxi$, the contribution of $\ani{a}{k}\ani{a}{p}$ vanishes; and similarly, the contributions of  $\ani{b}{k}\ani{b}{p}$, $\cre{a}{k}\cre{a}{p}$, $\cre{b}{k}\cre{b}{p}$, and analogous terms in $\int \di^3\mathbf{x} \,(\vpsi, J h(0,\mathbf{x}) J \vxi)$.

Now let only $\vpsi^{(n_p,n_h)}$ and $\vxi^{(n_p+1,n_h+1)}$ be nonzero. Then,  inserting $\ell(x)=h(x) - Jh(x)J$,
\begin{equation}
  (\vpsi, \ell(0,\mathbf{x}) \vxi) = \int\frac{\di^3\mathbf{k} \di^3\mathbf{p}}{(2\pi)^6}
   \frac{\en{k}\en{p} + \mathbf{k}\cdot\mathbf{p}  +m^2}{4\sqrt{\en{k}\en{p}}}
 \Big( \bsh{-}{k}\bsh{+}{p} 
  e^{i (\mathbf{p} - \mathbf{k}) \cdot \mathbf{x}}
  -
  \bsh{+}{k}\bsh{-}{p} 
  e^{i (\mathbf{k} - \mathbf{p}) \cdot \mathbf{x}}
  \Big)
  (\vpsi, \ani{a}{k}\ani{b}{p} \vxi).
\end{equation}
Integration over $\mathbf{x}\in\R^3$ means evaluating the integrand at $\mathbf{k}-\mathbf{p}=0$, where the second factor of the integrand vanishes. Thus the contributions to 
$\int \di^3 x (\vpsi, \ell(0,\mathbf{x}) \vxi)$ of $\ani{a}{k}\ani{b}{p}$, and analogously $\cre{a}{k}\cre{b}{p}$, $\cre{a}{k}\ani{b}{p}$, and $\cre{b}{k}\ani{a}{p}$, all cancel out.

Finally, let only $\vpsi^{(n_p,n_h)}$ and $\vxi^{(n_p,n_h)}$ be nonzero. Then 
\begin{equation}
\begin{aligned}
  (\vpsi,  &\ell(0,\mathbf{x}) \vxi) = \int  \frac{\di^3\mathbf{k} \di^3\mathbf{p}}{(2\pi)^6}
  \frac{\en{k}\en{p} + \mathbf{k}\cdot\mathbf{p}  +m^2}{2\sqrt{\en{k}\en{p}}}
  \\
  &\times \Big(\bsh{-}{k}\bsh{-}{p}-\bsh{+}{k}\bsh{+}{p}\Big)
  \Big( (\vpsi, \cre{a}{k}\ani{a}{p} \vxi) e^{i (\mathbf{k} - \mathbf{p}) \cdot \mathbf{x}} - (\vpsi, \cre{b}{k}\ani{b}{p} \vxi) e^{i (\mathbf{p} - \mathbf{k}) \cdot \mathbf{x}} \Big).
\end{aligned}
\end{equation}
Integrating over $\mathbf{x}$, thus evaluating at $\mathbf{k}=\mathbf{p}$ (up to a multiplicative factor $(2\pi)^3$), yields with ${\bshs{+}{k}}-{\bshs{-}{k}}=1$,
\begin{equation}
\int \di^3 x  (\vpsi,  \ell(x) \vxi) = \int  \frac{\di^3\mathbf{k}}{(2\pi)^3}
  \en{k} 
  \Big( (\vpsi, -\cre{a}{k}\ani{a}{k} \vxi) + (\vpsi, \cre{b}{k}\ani{b}{k} \vxi) \Big).
\end{equation}
Taking linear combinations, this then holds for all $\vpsi,\vxi\in\domSchwartz$, as claimed.

\subsection{Violation of bounds}\label{sec:nontriv}

In this section we study the expectation value of the Liouvillian density $\ell(f)$ smeared by a positive test function $f\in\ccs(\R^4)$, and show that it is neither $L^4$-bounded above nor $L^2$-bounded below. To that end, we give explicit examples of state vectors in which the bounds are violated. 

Interestingly, it suffices to consider one-particle/one-hole states, i.e., those of the form $\vpsi=\phi(g)\Omega$, $g \in \mathcal{S}(\R^4)$. For these, one computes
\begin{align}
\label{eq:spnorm}
\lVert \vpsi  \rVert^2 &= 
\int\frac{\di^3\textbf{k}}{(2\pi)^3}
\frac{1}{2\en{k}}
\Big( \bshs{+}{k} \lvert \hat g(k) \rvert^2 + \bshs{-}{k} \lvert \hat g(-k) \rvert^2  \Big),
\\ \label{eq:sph}
(\vpsi, h(x) \vpsi) &= 
\int\frac{\di^3\textbf{k}\,\di^3\textbf{p}}{(2\pi)^6}
\Big(
\epsilon_m(k,p)
\bshs{+}{k}\bshs{+}{p} \overline{\hat g(k)} \hat g(p) e^{i(k-p)x} 
\\ \notag &\quad + 
\epsilon_m(k,p)
\bshs{-}{k}\bshs{-}{p} \overline{\hat g(-k)} \hat g(-p)\  e^{i(p-k)x} 
\\ \notag &\quad +
\epsilon_m(k,-p)
\bshs{-}{k}\bshs{+}{p} \big(\overline{\hat g(-k)} \hat g(p)\  e^{i(p+k)x} + \mathrm{h.c.}\big) \Big) ,
\\\label{eq:spht}
(\vpsi, Jh(x)J \vpsi) &= 
\int\frac{\di^3\textbf{k}\,\di^3\textbf{p}}{(2\pi)^6}
\bsh{+}{k}\bsh{-}{k}\bsh{+}{p}\bsh{-}{p}
\Big(
\epsilon_m(k,p) \overline{\hat g(k)} \hat g(p) e^{i(k-p)x} 
\\ \nonumber 
&\quad 
 +
  \epsilon_m(k,p) \overline{\hat g(-k)} \hat g(-p)\  e^{i(p-k)x} 
  + \epsilon_m(k,-p)  \big(\overline{\hat g(-k)} \hat g(p)\  e^{i(p+k)x} + \mathrm{h.c.}\big)
\Big),
\end{align}
where $\epsilon_m(k,p) \coloneqq (k^0p^0+\mathbf{k}\cdot\mathbf{p} + m^2)/(2\en{p}\en{k})$ and, as usual, $k=(\en{k},\mathbf{k})$.

Using sequences of such vectors, we can now show that \emph{upper} $L^4$ bounds on $h(f)$ do not exist.

\begin{proposition}\label{prop:nontrivl4}
      Let $f \in \test$ be nonnegative and not identically zero. Then the sesquilinear form $h(f)$ on the domain $\domP  \times \domP$ is not $L^4$-bounded above.
\end{proposition}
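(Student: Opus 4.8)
The plan is to contradict any upper $L^4$ bound by exhibiting a family of one‑particle/one‑hole vectors along which the quotient $(\vpsi, h(f)\vpsi)/\|\vpsi\|_4^2$ diverges. Following the remark preceding \eqref{eq:spnorm}, I would restrict to vectors $\vpsi=\phi(g)\Omega$ with real $g\in\mathcal{S}(\R^4)$, which lie in $\domP$ and for which the explicit expressions \eqref{eq:spnorm} and \eqref{eq:sph} are available. The underlying heuristic is that a field excitation of very high momentum creates a quantum with amplitude $\sim\en{k}^{-1/2}$, so that $\|\vpsi\|^2=O(\en{k}^{-1})$ shrinks while the smeared energy density stays of order one; if moreover the $L^4$ norm behaves like the $L^2$ norm (which I expect, because $\Delta^{1/4}$ interacts harmlessly with such states), the quotient grows like $\en{k}$.

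Concretely, I would fix a Schwartz bump profile $v$ on the mass shell and take $g=g_\lambda$ real‑valued with on‑shell data concentrated, at fixed width, around a point of three‑momentum $\mathbf{k}_0$ with $|\mathbf{k}_0|=\lambda$; reality of $g$ forces a mirror bump at $-\mathbf{k}_0$. Then I would read off two estimates. For the norm, \eqref{eq:spnorm} gives $\|\vpsi_\lambda\|^2\sim\tfrac{1}{2\en{k}}\bshs{+}{k}\int|v|^2$ at the peak, since the hole contribution carries $\bshs{-}{k}$, which is exponentially small at large $\lambda$; thus $\|\vpsi_\lambda\|^2=O(1/\lambda)$. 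For the energy, in \eqref{eq:sph} the hole–hole term carries $\bshs{-}{k}\bshs{-}{p}$ and the particle–hole cross term carries both $\bshs{-}{k}$ and $\hat{f}$ evaluated near $\pm2\mathbf{k}_0$; both therefore vanish as $\lambda\to\infty$, by exponential suppression and by the Schwartz decay of $\hat{f}$ respectively. Only the particle–particle term survives, and on it $\epsilon_m(k,p)\to1$, $\bshs{+}{k}\bshs{+}{p}\to1$, while $\hat{f}(k-p)$ tends to the restriction of $\hat{f}$ to the three‑dimensional ``group‑velocity'' plane. Hence $(\vpsi_\lambda, h(f)\vpsi_\lambda)$ converges to a finite limit of the form $\int\!\!\int v\,\overline{v}\,\hat{f}(\,\cdot\,)$; this limit is strictly positive because $f\ge0$, $f\not\equiv0$ makes $\hat{f}$ a nonzero positive‑definite function (Bochner), the restriction of a positive‑definite function is positive‑definite, and choosing $v$ Gaussian (Fourier transform nowhere vanishing) secures strict positivity. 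This is precisely the step where the hypotheses on $f$ enter.

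It then remains to show $\|\vpsi_\lambda\|_4^2=O(1/\lambda)$, so that the quotient diverges like $\lambda$. Using \eqref{eq:noncnorms} extended to affiliated operators (Proposition~\ref{prop:affilp4norm}), I have $\|\vpsi_\lambda\|_4^2=\|\Delta^{1/4}\phi(g_\lambda)^2\Omega\|$ with $\Delta^{1/4}=e^{-\beta L/4}$ and $L$ as in \eqref{eq:L}. Writing $\phi(g)=C+A$ with $A$ the annihilation and $C$ the creation part, one has $\phi(g)^2\Omega=\|\vpsi\|^2\Omega+C^2\Omega$, an orthogonal decomposition by particle/hole number. Since $L\Omega=0$, the first summand contributes exactly $\|\vpsi_\lambda\|^4=O(1/\lambda^2)$. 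For $C^2\Omega$ I would check, component by component, that $\Delta^{1/4}$ keeps its norm $O(e^{-\beta\lambda}/\lambda^2)$: the two‑particle component is damped by $e^{-\beta(\en{k}+\en{p})/4}$; the two‑hole component is enhanced by $e^{+\beta(\en{k}+\en{p})/4}$ but its amplitude already carries $\bshs{-}{k}\bshs{-}{p}\sim e^{-\beta(\en{k}+\en{p})}$, so the net factor is again exponentially small; the mixed component is handled likewise. Consequently $\|\Delta^{1/4}\phi(g_\lambda)^2\Omega\|=\|\vpsi_\lambda\|^2(1+o(1))$, i.e.\ $\|\vpsi_\lambda\|_4^2=O(1/\lambda)$, whence $(\vpsi_\lambda,h(f)\vpsi_\lambda)/\|\vpsi_\lambda\|_4^2\gtrsim\lambda\to\infty$, contradicting $L^4$‑boundedness above.

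I expect the main obstacle to be the control of the $L^4$ norm — in particular verifying that the modular enhancement $e^{+\beta\en{k}/4}$ of the two‑hole (and mixed) sectors is genuinely beaten by the thermal factors $\bsh{-}{k}$, and justifying the identity $\|\vpsi_\lambda\|_4^2=\|\Delta^{1/4}\phi(g_\lambda)^2\Omega\|$ for the unbounded operator $\phi(g_\lambda)$ affiliated with $\M$ through Proposition~\ref{prop:affilp4norm}. The energy‑side computation is comparatively routine; its only genuinely structural input is the positivity of the limiting quadratic form, which rests on Bochner's theorem and the sign of $f$.
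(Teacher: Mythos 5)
Your proposal is correct in substance but takes a genuinely different route from the paper. The paper uses the scaling family $g_\lambda(x)=\lambda^3\chi(\lambda x)$ (concentration at a spacetime point, momenta spread up to $\sim\lambda$): there the energy expectation $(\vpsi_\lambda,h(f)\vpsi_\lambda)$ \emph{diverges} (shown by restricting the rescaled $x^0$-integration, with positive integrand, to $[-R,R]$ and letting $R\to\infty$, so only $f(0)>0$ after translation is needed), while $\|\vpsi_\lambda\|_4$ stays \emph{bounded}; the $L^4$ control comes from the soft estimate $\Delta^{1/2}\leq 1+\Delta$ together with $\|\Delta^{1/2}X\Omega\|=\|X^\ast\Omega\|$ and Wick's theorem, giving $\|\vpsi\|_4^2\leq\sqrt{6}\,\|\vpsi\|_2^2$ for any real $g$. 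You instead use a fixed-width bump translated to momentum $|\mathbf{k}_0|=\lambda$: the energy \emph{converges} to a finite limit while $\|\vpsi_\lambda\|_4^2\sim\|\vpsi_\lambda\|_2^2=O(1/\lambda)$, so the quotient grows like $\lambda$. This buys physical transparency (a single high-momentum quantum; the ratio scales like its energy, and $L^4\approx L^2$ on such states), but it costs you two extra inputs the paper avoids: (a) \emph{strict} positivity of the limiting quadratic form, which your Bochner-plus-Gaussian argument does correctly supply (restriction of $\hat f$ to the group-velocity hyperplane is the Fourier transform of the nonzero pushforward of $f\,\mathrm{d}^4x$, and a nowhere-vanishing $\hat v$ gives $\int|\hat v|^2\mathrm{d}\mu>0$); and (b) the sector-by-sector control of $\Delta^{1/4}$ on $C^2\Omega$, which also goes through (note only that the two-hole amplitude carries single powers $\mathscr{B}^-_{\mathbf{k}}\mathscr{B}^-_{\mathbf{p}}\sim e^{-\beta(\en{k}+\en{p})/2}$, not squares, so the net factor after the modular enhancement is $e^{-\beta(\en{k}+\en{p})/4}$ -- still exponentially small, as you need). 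In fact, the obstacle you flag as the main one can be bypassed entirely: since your $g_\lambda$ is real, the paper's bound $\|\vpsi_\lambda\|_4^2\leq\sqrt{6}\,\|\vpsi_\lambda\|_2^2$ applies verbatim to your family and already yields $\|\vpsi_\lambda\|_4^2=O(1/\lambda)$, while the identity $\|\vpsi_\lambda\|_4^2=\|\Delta^{1/4}\phi(g_\lambda)^2\Omega\|$ is justified exactly as in the paper by Proposition~\ref{prop:affilp4norm}, since $\phi(g_\lambda)$ is affiliated with $\M(\R^4)$ with $\Omega$ and $\phi(g_\lambda)\Omega$ in the relevant domains.
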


\begin{proof}
By translation covariance, we may assume $f(0)>0$. We pick a real-valued $\chi\in\mathcal{S}(\R^4)$ and set $\vpsi_\lambda\coloneqq \phi(g_\lambda)\Omega$ with $g_\lambda(x)=\lambda^3\chi(\lambda x)$; thus $\hat g_\lambda(k) = \overline{g_\lambda(-k)} = \lambda^{-1} \hat \chi(k/\lambda)$. For the norm $N_\lambda \coloneqq \|\vpsi_\lambda\|$, we find from \eqref{eq:spnorm} after rescaling arguments that
\begin{equation}
   \lim_{\lambda\to\infty} N_\lambda^2 = 
\int\frac{\di^3\textbf{k}}{(2\pi)^3}
\frac{\big\lvert \hat\chi( \|\mathbf{k}\|,\mathbf{k}) \big\rvert^2 }{2\|\mathbf{k}\|}
\end{equation}
by an application of the dominated convergence theorem. (It is used here that $\mathscr{B}^{-}_{\lambda \mathbf{k}} \to 0$, $\mathscr{B}^{+}_{\lambda \mathbf{k}} \to 1$, and $\omega_{\lambda \mathbf{k}}/\|\lambda \mathbf{k}\| \to 1$ in this limit.)

The $L^4$-norm of the states is controlled as follows. We have from Proposition \ref{prop:affilp4norm} that
   \begin{equation}\label{eq:l4upper}
       \|\vpsi_{\lambda}\|_4^4=\|\Delta^{1/4}\phi(g_{\lambda})^\ast\phi(g_{\lambda})\Omega\|
       =  (\phi(g_{\lambda})^\ast\phi(g_{\lambda}) \Omega, \Delta^{1/2} \phi(g_{\lambda})^\ast \phi(g_{\lambda})\Omega)^{1/2},
   \end{equation}
   noting that $\phi(g_{\lambda})^\ast \phi(g_{\lambda})\Omega$ is indeed in the domain of $\Delta^{1/2}$, since $\phi(g_{\lambda})^\ast\phi(g_{\lambda})\eta\M$. Estimating $\Delta^{1/2} \leq 1 + \Delta$, we find
   \begin{equation}
   \begin{aligned}
       \|\vpsi_{\lambda}\|_4^4 &\leq 
       \|\phi(g_{\lambda})^\ast\phi(g_{\lambda})\Omega\|^2 + \|\Delta^{1/2} \phi(g_{\lambda})^\ast \phi( g_{\lambda})\Omega\|^2
       \\&= \|\phi(g_{\lambda})^\ast\phi(g_{\lambda})\Omega\|^2 + \| J\phi(g_{\lambda})^\ast\phi(g_{\lambda})\Omega\|^2
       \\&= 2 \|\phi(g_{\lambda})^\ast\phi(g_{\lambda})\Omega\|^2.
   \end{aligned}
 \end{equation}       
 We can compute the norm on the r.h.s.~explicitly using quasi-freeness of $(\Omega,\cdot \Omega)$; indeed, since crucially $\phi(g_{\lambda})^\ast=\phi(g_{\lambda})$,
 we obtain $ \|\phi(g_{\lambda})^\ast\phi(g_{\lambda})\Omega\|^2 = 3 \|\phi(g_{\lambda})\Omega\|^4$ and thus 
 \begin{equation}\label{eq:psi6est}
       \|\vpsi_{\lambda}\|_4^2 \leq\sqrt{6}N^2_{\lambda},
   \end{equation}
which stays bounded as $\lambda\to\infty$.

Now using \eqref{eq:sph} and rescaling the $x$-, $\mathbf{k}$- and $\mathbf{p}$-integrals, then restricting the $x^0$-integration with positive integrand to an interval $[-R,R]$ before performing the limit $\lambda\to\infty$, we find
\begin{equation}\label{eq:rrescale}
   \liminf_{\lambda\to\infty} (\vpsi_\lambda, h(f) \vpsi_\lambda)
   \geq 2R \,f(0) 
\int\frac{\di^3\textbf{k}}{(2\pi)^3}
\big\lvert \hat \chi( \|\mathbf{k}\|,\mathbf{k}) \big\rvert^2.
\end{equation}
This becomes arbitrarily large when we choose large $R$, thus showing the claim.
\end{proof}

With a similar sequence of states, we can show that the Liouvillian density does not fulfill a lower $L^2$ bound, i.e., no QEI in the usual sense. 

\begin{proposition}\label{prop:nontrivL2}
 Let $f \in \ccs(\R^4)$ be nonnegative and not identically zero. Then the sesquilinear form $\ell(f)$ on $\domP \times \domP$ is not $L^2$-bounded below.
 \end{proposition}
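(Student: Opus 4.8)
The plan is to exhibit, for the given nonnegative $f$, a sequence of one-hole vectors $\vpsi_n=\phi(g_n)\Omega\in\domP$ along which $(\vpsi_n,\ell(f)\vpsi_n)/\|\vpsi_n\|^2\to-\infty$. The starting point is the explicit formulas \eqref{eq:spnorm}--\eqref{eq:spht}: smearing them with $f$ (i.e.\ replacing each oscillating phase by $\hat f$ evaluated at the corresponding energy--momentum difference) gives closed expressions for $\|\vpsi_n\|^2$ and for $(\vpsi_n,\ell(f)\vpsi_n)=(\vpsi_n,h(f)\vpsi_n)-(\vpsi_n,Jh(f)J\vpsi_n)$. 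The physical idea driving the construction is that a ``hole'' excitation carries negative Liouvillian energy with no upper bound on its momentum, so $\ell(f)$ can be made arbitrarily negative while its Hilbert-space norm stays controlled.

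Concretely, I would first take $g_n$ to be \emph{complex} with $\hat g_n$ supported near the \emph{negative} mass shell only, so that $\hat g_n(k)=0$ and only $\hat g_n(-k)$ survives for $k=(\en{k},\mathbf k)$; then $\vpsi_n$ is a pure one-hole state. Inspecting \eqref{eq:sph} and \eqref{eq:spht}, the particle--particle terms and the mixed terms (all those containing $\hat g_n(k)$) drop out, and $(\vpsi_n,\ell(f)\vpsi_n)$ reduces to a single contribution whose thermal kernel is
\begin{equation*}
   C_{\mathbf k\mathbf p} := \bshs{-}{k}\bshs{-}{p} - \bsh{+}{k}\bsh{-}{k}\bsh{+}{p}\bsh{-}{p}
   = \bsh{-}{k}\bsh{-}{p}\big(\bsh{-}{k}\bsh{-}{p} - \bsh{+}{k}\bsh{+}{p}\big).
\end{equation*}
The crucial algebraic fact is that $\bsh{+}{k} > \bsh{-}{k} > 0$ (a restatement of ${\bshs{+}{k}}-{\bshs{-}{k}}=1$), so $C_{\mathbf k\mathbf p}\leq0$, strictly negative for finite momenta, while the kinematic factor $\epsilon_m(k,p)=(\en{k}\en{p}+\mathbf k\cdot\mathbf p+m^2)/(2\en{k}\en{p})$ is strictly positive. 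Thus the surviving integrand has a definite negative sign once the smearing factor $\hat f$ is accounted for; note in particular that on the diagonal $\mathbf p=\mathbf k$ one has $C_{\mathbf k\mathbf k}=-\bshs{-}{k}$ and $\epsilon_m(k,k)=1$.

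Next I would choose the one-hole profile $G_n(\mathbf k):=\hat g_n(-k)$ to be (a smoothing of) the indicator of a thin spherical shell $\{k_0\leq\|\mathbf k\|\leq k_0+\delta\}$ of fixed thickness $\delta$ and growing radius $k_0=n\to\infty$. The Schwartz decay of $\hat f$ localizes the double integral to the near-diagonal $\mathbf p\approx\mathbf k$, where $\epsilon_m(k,p)\to1$ and $C_{\mathbf k\mathbf p}\to-\bshs{-}{k}$. On that diagonal $\bshs{-}{k}$ is precisely the weight carried by the norm \eqref{eq:spnorm}, so it \emph{cancels} in the quotient and the exponential thermal suppression of $\bshs{-}{k}$ at large $\|\mathbf k\|$ is harmless. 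A change of variables $\mathbf p=\mathbf k+\mathbf q$ together with dominated convergence then yields $\|\vpsi_n\|^2\sim\mathrm{const}\cdot\bshs{-}{k_0}\,k_0^2\delta/\en{k_0}$ and $(\vpsi_n,\ell(f)\vpsi_n)\sim-\mathrm{const}\cdot\bshs{-}{k_0}\,k_0^2\delta^2$, so the quotient behaves like $-\,\mathrm{const}\cdot\delta\,\en{k_0}\to-\infty$. This establishes the absence of a lower $L^2$ bound.

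The main obstacle is the careful treatment of the $\hat f$-smearing near the diagonal, and in particular the use of $f\geq0$ to pin down the sign. After localizing $\mathbf p\approx\mathbf k$, the effective weight multiplying $-\bshs{-}{k_0}$ is an integral of $\hat f$ over a two-plane in momentum space transverse to $\hat{\mathbf k}$ and to the energy axis; by Fourier duality this equals a positive multiple of the integral of $f$ over the dual spacetime two-plane, hence is nonnegative because $f\geq0$, and strictly positive after averaging over the directions $\hat{\mathbf k}$ on the shell since $f\not\equiv0$. Making this positivity rigorous, and controlling the off-diagonal remainder (where $C_{\mathbf k\mathbf p}$ need not have a favourable sign, but $\hat f$ is small) via dominated convergence, is the technical heart of the argument; it is the analogue, in the ``hole'' regime, of the rescaling-and-cutoff estimate used in the proof of Proposition~\ref{prop:nontrivl4}.
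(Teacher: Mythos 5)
Your construction is correct in strategy and rests on the same mechanism as the paper's own proof: one-hole states $\vpsi=\phi(g)\Omega$ with $\hat g$ supported at negative frequencies and large spatial momenta, exploiting that the bath-density weight $\bsh{+}{k}\bsh{-}{k}$ in \eqref{eq:spht} dominates the norm weight $\bshs{-}{k}$ in \eqref{eq:spnorm} by the diverging factor $\bsh{+}{k}/\bsh{-}{k}$. The difference is in implementation, and it matters for how much work remains. The paper absorbs the thermal factor into the test function, choosing $\hat g_\lambda(k)=(\bsh{-}{k})^{-1}\lambda^{-1}\hat\chi(k/\lambda)$ with $\hat\chi$ vanishing for $k_0>0$; then $\|\vpsi_\lambda\|$ converges to a finite nonzero limit, $(\vpsi_\lambda,h(f)\vpsi_\lambda)\to0$ because of the leftover decaying $\bsh{-}{}$-factors, and $(\vpsi_\lambda,Jh(f)J\vpsi_\lambda)\to+\infty$ by literally the same restriction-and-rescaling estimate \eqref{eq:rrescale} already proved for Proposition~\ref{prop:nontrivl4}, so no new analysis is needed. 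Your uncompensated thin-shell profiles instead make both $(\vpsi_n,\ell(f)\vpsi_n)$ and $\|\vpsi_n\|^2$ decay like $e^{-\beta\en{k_0}}$, with the violation visible only in the ratio; your claimed scaling (quotient $\sim -\delta\,\en{k_0}$) is right, but extracting it rigorously requires asymptotics of oscillatory integrals over thin shells (the $\sin(k_0\|\mathbf{x}\|)/(k_0\|\mathbf{x}\|)$-type behaviour of the shell transform), which is genuinely more delicate than the paper's dilation argument. One caution on a specific claim: the statement that the surviving integrand ``has a definite negative sign once the smearing factor $\hat f$ is accounted for'' is not true pointwise, since $\hat f(p-k)$ is complex; the pointwise inequality $C_{\mathbf{k}\mathbf{p}}\leq 0$ does not by itself fix the sign of the double integral. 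The clean repair is the factorization trick underlying \eqref{eq:rrescale}: since $\epsilon_m(k,p)$ is a finite sum of products $u_j(\mathbf{k})u_j(\mathbf{p})$ with real $u_j$, the $\bsh{+}{k}\bsh{-}{k}\bsh{+}{p}\bsh{-}{p}$-part of $(\vpsi_n,\ell(f)\vpsi_n)$ equals $-\sum_j\int f(x)\,\lvert F_{n,j}(x)\rvert^2\,\di^4x\leq 0$ for suitable functions $F_{n,j}$ built from the hole profile, while the $\bshs{-}{k}\bshs{-}{p}$-part (the $h(f)$ contribution) is smaller by a further factor $e^{-\beta\en{k_0}}$ and can be discarded; your two-plane Fourier-duality heuristic is then needed only for the quantitative lower bound on $\sum_j\int f\,\lvert F_{n,j}\rvert^2$, and it is precisely there that $f\geq 0$, $f\not\equiv 0$ enters.
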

 
  \begin{proof}
 Again we may assume $f(0)>0$ and set $\vpsi_\lambda\coloneqq \phi(g_\lambda)\Omega$, but now with the choice
 \begin{equation}
     \hat g_\lambda (p) = (\bsh{-}{k})^{-1} \lambda^{-1} \hat \chi(k/\lambda)
 \end{equation}
 where $\chi \in\mathcal{S}(\R^4)$ is chosen such that $\hat \chi(k)$ vanishes for $k_0>0$. (A suitable $g_\lambda\in\mathcal{S}(\R^4)$ may be found by Fourier inversion, but it is not real-valued.) Using rescaling as before, one finds from \eqref{eq:spnorm},
 \begin{equation}
     \lim_{\lambda\to\infty} \|\vpsi_\lambda\|^2
     =  
\int\frac{\di^3\textbf{k}}{(2\pi)^3}
\frac{1}{2\| \mathbf{k} \|} \lvert \hat \chi(-\|\mathbf{k}\|,\mathbf{k}) \rvert^2,
 \end{equation}
 thus the norms of the vectors remain bounded. Further, \eqref{eq:sph} yields 
 \begin{equation}
     \lim_{\lambda\to\infty} (\vpsi_\lambda, h(f) \vpsi_\lambda) = 0
 \end{equation}
 due to the decaying $\bsh{-}{k}$-factors. Finally, proceeding from \eqref{eq:spht} as in \eqref{eq:rrescale} yields
 \begin{equation}
   \liminf_{\lambda\to\infty} (\vpsi_\lambda, Jh(f)J \vpsi_\lambda)
   \geq 2R \,f(0) 
\int\frac{\di^3\textbf{k}}{(2\pi)^3}
\big\lvert \hat \chi(-\|\mathbf{k}\|,\mathbf{k}) \big\rvert^2
\end{equation}
for any $R>0$. In summary, the norm of $\vpsi_\lambda$ remains bounded as $\lambda\to\infty$, while $(\vpsi_\lambda,\ell(f)\vpsi_\lambda)\to -\infty$.
\end{proof}

A crucial feature of the above proof is that the functions $g_\lambda$ are \emph{not} real-valued, thus the argument from the proof of Proposition~\ref{prop:nontrivl4}, Eq.~\eqref{eq:psi6est} does not apply, and the present sequence $\vpsi_\lambda$ is not bounded in the $L^4$ norm. This avoids an apparent contradiction with the existence of lower $L^4$-bounds.

In fact, we may put this in a more general context: Remembering that the modular operator can be written as $\Delta=e^{-L}$, we find for suitably regular $A \in \M(\spt)$ that 
\begin{equation}
\begin{aligned}
   (A \Omega, L A \Omega)
   &\geq (A \Omega,A\Omega) - (A \Omega, e^{-L}A\Omega) 
   = (A \Omega,A\Omega) - (\Delta^{1/2}A \Omega, \Delta^{1/2}A\Omega) \\
   &= \|A\Omega\| - \|A^\ast \Omega\|,
\end{aligned}
\end{equation}
where the inequality $x \geq 1-e^{-x}$ has been used. (Cf.~\cite[Sec.~2]{PuszWoronowicz:passive}.) In other words, in order to find vectors with large negative expectation values of $L$, we should not look at $A\Omega$ with selfadjoint $A\in\M(\spt)$, but rather at situations where $A^\ast \Omega$ dominates $A\Omega$ in norm. Our computation above then suggests that the Liouvillian density behaves similarly, noting that
\begin{equation}
\| \phi(g_\lambda)^\ast \Omega \|^2
= \int\frac{\di^3\textbf{k}}{(2\pi)^3}
\frac{1}{2\en{k}}
 \Big(\frac{\bsh{+}{k}}{\bsh{-}{k}}\Big)^2 \lvert \hat g(-k) \rvert^2  \xrightarrow{\lambda\to\infty} \infty.
\end{equation}

\section{The Rindler wedge}\label{sec:rindler}

Our second application example is free field theory on the Rindler wedge, cf.~\cite{Wald:qftcst}. To that end, within Minkowski space $\R^4$, we consider the spacelike wedge region
\begin{equation}
    \rw\coloneqq\left\{x\in\R^4:x^1>|x^0|\right\}.
\end{equation}
Now the spacetime manifold in our framework will not be $\R^4$, but rather $\spt=\rw$ as a globally hyperbolic manifold in its own right. The ``time evolution'' $\Xi_t$ is given by the boost transformations along the wedge, with the Cauchy surface $\csf=\{ (0,\mathbf{x}): x^1>0 \}$. 

Our net of algebras consists again of the usual $C^\ast$-algebras $\A(\ocal)$ of the free real scalar field, but restricted to subregions $\ocal\subset\rw$. On these we consider the usual (Minkowski) vacuum state, which is actually a KMS state for $\beta=2\pi$ with respect to the boost automorphism $\alpha_t$. GNS representation leads us to the usual Fock space $\Hil=\mathcal{F}^s(L^2(\R^3))$, with the vacuum vector $\Omega\in\Hil$ being cyclic and separating for $\M(\rw)=\pi(\A(\rw))''$. The algebras $\M(\ocal)$, $\ocal\subset\rw$, are generated by Weyl operators $\pi(W(f))=\exp i \phi(f)$, $f=\bar f\in\testo$, where $\phi$ is the usual free field,
\begin{equation}
  \phi(x)  = \int \frac{\di^3\mathbf{k}}{(2\pi)^3}\frac{1}{\sqrt{2\en{k}}} \Big( e^{i k\cdot x } \cre{a}{k} +  e^{-i k\cdot x } \ani{a}{k} \Big). 
\end{equation}
It is well known (cf.~\cite{ReedSimon:1975-2}) that $\phi$ is indeed a (boost-)covariant symmetric quantum field associated with $\M$ in the sense of Def.~\ref{def:associated}, with a common and invariant core for all $\phi(f)$ being again the space of finite particle-number Schwartz-class vectors, $\domSchwartz$. 

The modular reflection $J$ implements the wedge reflection $j(x)=(-x^0,-x^1,x^2,x^3)$, that is, we have $J \phi(x) J = \phi(j(x))$ and $\Mt(\ocal) = \M(j^\ast\ocal)$, where the right-hand sides refer to fields and algebras of the full Minkowski space theory (cf.~\cite{Bor:ttt}). The selfadjoint generator of the boost unitaries $U(t)=\Delta^{-it/\pi}$ on $\Hil$ can be written as
\begin{equation}
  K =  i\int \frac{d^3\mathbf{k}}{(2\pi)^3} \Big(  \frac{k_1}{2\en{k}} \cre{a}{k} \ani{a}{k}
+\en{k}(\partial_{1}\cre{a}{k}) \ani{a}{k}
\Big)
\end{equation}
in the sense of sesquilinear forms on $\domSchwartz\times\domSchwartz$. It also has a local density (on Minkowski space), namely 
\begin{equation}\label{eq:kdensity}
    \kappa(x)=\frac{x_1}{2}\left(\sum_{i=0}^3\normord{\partial_i\phi^2}(x)+m^2\normord{\phi^2}(x)\right)-x_0\normord{\partial_0\phi\partial_1\phi}(x),
\end{equation}
in the sense that
\begin{equation}\label{eq:kdensityint}
    (\vpsi_1,K\vpsi_2)=\int_{x^0=0}(\vpsi_1,\kappa(x)\vpsi_2)\di^3\mathbf{x} \quad\text{for all }\vpsi_1,\vpsi_2\in \domSchwartz.
\end{equation}
(This may be checked by a computation similar to Sec.~\ref{sec:integral}.) 

Now in our context, the ``Liouvillian'' of the system is $L=K$, and $\kappa$ is indeed a symmetric quantum field associated with $\M$. (Affiliation with $\M(\ocal')'\cap \M(\rw)$ may be checked as in Sec.~\ref{sec:edaffil}, with $\domSchwartz$ a common core for the $\kappa(f)$, $f\in\testrw$.) However, this refers only to $\kappa(x)$, $x \in \rw$, while the formula \eqref{eq:kdensityint} requires integration of the entire time-0 surface. 

Nevertheless, we can reinterpret the expression in the following way: we set for $x \in \rw$,
\begin{equation}\label{eq:ellwedge}
  \ell(x) \coloneqq \kappa(x) - J\kappa(x)J = \kappa(x) + \kappa(j(x)).    
\end{equation}
For the corresponding ``smeared'' expression $\ell(f)$, it is then clear that $\ell$ is a covariant symmetric quantum field associated with $\Mh$, which is split ($\domSplit=\domSchwartz$) and, due to \eqref{eq:kdensityint}, a density for $L$ in the strong sense.

Much like in the thermal case of Sec.~\ref{sec:freescalar},
but with some important differences in detail, we can deduce $L^4$ inequalities for the ``wedge boost density'' $\ell$. Let $\polyalg(\rw)$ be the polynomial algebra generated by the $\phi(f)$, $f \in \testrw$.

\begin{theorem}
 Let $f = g^2$ with some $g=\bar g \in \testrw \backslash\{0\}$. Then $\ell(f)$ as in \eqref{eq:ellwedge}, as a sesquilinear form on $\polyalg(\rw)\Omega \times \polyalg(\rw)\Omega$, fulfills a nontrivial quantum $L^4(\M(\rw),\Omega)$ inequality.
\end{theorem}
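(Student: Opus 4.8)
The plan is to reduce the statement, via Theorem~\ref{thm:splitdensity}, to a nontrivial quantum $L^4$ inequality for the single field $\kappa$. Concretely I would apply that theorem with $h=\kappa$, $\tilde h=J\kappa J$, $\domSplit=\domSchwartz$ and $\mathcal{D}=\polyalg(\rw)\Omega$. Since every element of $\polyalg(\rw)$ is affiliated with $\M(\rw)$ and preserves the common invariant core $\domSchwartz$ (so that $\polyalg(\rw)\Omega\subset\domSchwartz$ and $J\domSchwartz=\domSchwartz$), one checks $\polyalg(\rw)\Omega\subset\domAffil{J\kappa(f)J}$, while $\polyalg(\rw)\Omega$ is dense by Reeh--Schlieder. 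Theorem~\ref{thm:splitdensity} then reduces everything to showing that $\kappa(f)$, as a form on $\polyalg(\rw)\Omega\times\polyalg(\rw)\Omega$, is $L^4(\M(\rw),\Omega)$-bounded below but \emph{not} above.

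For the lower bound I would use that $\kappa$ is a \emph{classically positive} density on the wedge. Completing the square in \eqref{eq:kdensity} gives
\begin{equation*}
\begin{aligned}
  \kappa(x) = {}&\tfrac{x_1+x_0}{4}\normord{(\partial_0\phi-\partial_1\phi)^2}(x) + \tfrac{x_1-x_0}{4}\normord{(\partial_0\phi+\partial_1\phi)^2}(x) \\
  &{}+ \tfrac{x_1}{2}\Big(\normord{(\partial_2\phi)^2}+\normord{(\partial_3\phi)^2}+m^2\normord{\phi^2}\Big)(x),
\end{aligned}
\end{equation*}
exhibiting $\kappa(x)$ as a sum of Wick squares with coefficients $\tfrac{x_1\pm x_0}{4}$, $\tfrac{x_1}{2}$ that are strictly positive for $x\in\rw$ and, since $g\in\testrw$ has compact support in the open wedge, bounded below by a positive constant on $\Supp f$. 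The Minkowski vacuum is Hadamard, and every vector of $\polyalg(\rw)\Omega$ induces a Hadamard state \cite[Theorem~4.5]{BFK:spectrum}; hence the quantum energy inequality for sums of Wick squares \cite[Sec.~3]{Few:qft_inequalities} applies to $\kappa(f)=\kappa(g^2)$ and yields a state-independent lower bound $(\vpsi,\kappa(f)\vpsi)\geq -c_f\|\vpsi\|^2$. Since $L^4(\M(\rw),\Omega)$ embeds continuously into $\Hil$, this entails the desired lower $L^4$ bound a fortiori.

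For nontriviality I would adapt Proposition~\ref{prop:nontrivl4}. Fixing an interior point $x_\ast\in\rw$ with $f(x_\ast)>0$, I would set $\vpsi_\lambda\coloneqq\phi(g_\lambda)\Omega$ with real-valued $g_\lambda(x)=\lambda^3\chi(\lambda(x-x_\ast))$, so that $\Supp g_\lambda\subset\rw$ for large $\lambda$. Rescaling and dominated convergence keep $\|\vpsi_\lambda\|$ bounded, and---crucially using $\phi(g_\lambda)^\ast=\phi(g_\lambda)$---the estimate $\Delta^{1/2}\leq\idop+\Delta$ together with quasi-freeness gives $\|\vpsi_\lambda\|_4^2\leq\sqrt6\,\|\vpsi_\lambda\|^2$, also bounded. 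It then remains to show $(\vpsi_\lambda,\kappa(f)\vpsi_\lambda)\to+\infty$: after substituting $x=x_\ast+x'/\lambda$, $\mathbf{k}=\lambda\mathbf{k}'$, $\mathbf{p}=\lambda\mathbf{p}'$, the weights tend to their positive values at $x_\ast$ and the diagonal high-momentum part of each Wick square contributes nonnegatively; restricting the rescaled time variable to an interval $[-R,R]$ (the analog of the device in \eqref{eq:rrescale}) produces a lower bound proportional to $R\,f(x_\ast)$ times a positive momentum integral, which is unbounded in $R$.

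The main obstacle is exactly this divergence. Unlike the thermal case there is no translation invariance to move the concentration point to the origin, so the computation must proceed at a generic interior $x_\ast$ while tracking the explicit weights $x_0,x_1$ and the indefinite cross term $-x_0\normord{\partial_0\phi\,\partial_1\phi}$ in \eqref{eq:kdensity}. What saves the argument is the positive-definiteness already used for the lower bound: near $x_\ast$ the coefficients $\tfrac{x_1\pm x_0}{4}$, $\tfrac{x_1}{2}$ are smooth and bounded away from $0$, so the cross term cannot overcome the diagonal part, and the leading contribution to $(\vpsi_\lambda,\kappa(f)\vpsi_\lambda)$ stays positive and diverges as in Proposition~\ref{prop:nontrivl4}.
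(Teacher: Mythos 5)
Your proposal is correct and follows essentially the same route as the paper: reduction via Theorem~\ref{thm:splitdensity} with $\mathcal{D}=\polyalg(\rw)\Omega$, a lower $L^2$ (hence $L^4$) bound for $\kappa(g^2)$ from a sum-of-Wick-squares decomposition with positive coefficients, and nontriviality via concentrating real one-particle states $\phi(g_\lambda)\Omega$ whose $L^4$ norms stay bounded by the $\sqrt{6}$ argument while the expectation of one positive Wick-square term diverges after the $[-R,R]$ restriction. Your ``completed square'' is in fact algebraically identical to the paper's lightcone-coordinate rewriting \eqref{eq:kgtransformed} (with $\partial_u=\partial_0+\partial_1$, $\partial_v=\partial_0-\partial_1$), so the difference is purely notational.
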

\begin{proof}
  As a first step, we show that $\kappa(f)$ is $L^2$-bounded below (hence also $L^4$-bounded below). Since $\kappa$ is not obviously a ``sum of normal-ordered squares'', this requires some computation, which we defer to \autoref{sec:klower}. On the other hand, $\kappa(f)$ is not $L^4$-bounded above; this is shown in \autoref{sec:knontriv} by exhibiting a suitable sequence of states. Hence $\kappa(f)$ fulfills a nontrivial quantum $L^4$ inequality.

  Now $\polyalg(\rw)\Omega\subset \domAffil{\kappa(j^\ast f)}$ since all operators in $\polyalg(\rw)$ are affiliated with $\M(\rw)$ and have $\polyalg(\rw)\Omega$ as a common invariant dense domain. 
  Thus Theorem~\ref{thm:splitdensity} applies with $\mathcal{D}=\polyalg(\rw)\Omega$, showing that $\ell(f)$ fulfills a nontrivial $L^4$ inequality.
\end{proof}

Also in this case, $\ell(f)$ is $L^4$-bounded below but not $L^2$-bounded below; see Proposition~\ref{prop:nontrivboostL2} in \autoref{sec:knontriv} below.

The Rindler wedge illuminates from a different perspective, and perhaps even more clearly, the origins of our quantum $L^4$ inequality. The ``thermal system'' here consists of all measurements in the wedge $\rw$, while the ``thermal bath'' are the observables in the opposite wedge $\rw'$. System and bath are in a correlated quantum state, the Minkowski vacuum, which is thermal with respect to the boost transformations. The boost density $\kappa$ on $\rw$ takes the role of the energy density on the thermal system, and it fulfills quantum energy inequalities ($L^2$-bounds) in the usual sense. However, in order to construct from $\kappa$ a generator of the boosts with correct action on  \emph{all} observables in the wedge $\rw$, one would need to integrate $\kappa(x)$ right until the edge of the wedge. Due to that sharp cutoff, this procedure (as one may check) does not yield a well-defined selfadjoint operator; rather, one requires the boost density in $\rw'$ as a counterterm, in the sense of Eq.~\eqref{eq:ellwedge}. This necessary counterterm yields extra negative contributions to the Liouvillian density $\ell$, breaking the lower $L^2$ bounds. However, in states generated in the wedge from the ground state (i.e.,~$\psi = A\Omega$, $A\in\M(\rw)$), the contributions from the counterterm are limited -- not with respect to the $L^2$ but to the $L^4$ norm. Thus the resulting Liouvillian density $\ell$ is still $L^4$-bounded below.

\subsection{A lower bound for the boost density}\label{sec:klower}

Our goal in this subsection is to show that the boost generator $\kappa$, as given in \eqref{eq:kdensity}, fulfills a usual quantum inequality (a lower $L^2$-bound) when smeared with $f=g^2\in\testrw$. 

The usual techniques to this end \cite{Few:qft_inequalities} are based on writing the density as sum of (normal-ordered) squares with positive coefficients, which is not the case in Eq.~\eqref{eq:kdensity}. This problem can be circumvented by passing to lightcone coordinates
\begin{equation}
    x'=(u,x,x^2,x^3) = (\frac{x^0+x^1}{2},\frac{x^0-x^1}{2},x^2,x^3),
\end{equation}
noting that $x \in \rw$ if and only if $u>0$ and $v<0$. Using the relations
\begin{equation}
            \frac{\partial}{\partial x^0}=\frac{1}{2}\partial_u+\frac{1}{2}\partial_v,\quad \frac{\partial}{\partial x^1}=\frac{1}{2}\partial_u-\frac{1}{2}\partial_v,
            \quad \operatorname{dVol}_\rw(x) = 2 \di{}u \, \di{}v \, \di{}x^2 \, \di{}x^3, 
\end{equation}
one computes \cite[Proposition 4.3.2]{Sangaletti:thesis} that in the new coordinates,
\begin{equation}\label{eq:kgtransformed}
\begin{aligned}  \kappa(g^2)
       =2 \int g^2(x') \Big(&\frac{u-v}{2}\sum_{i=2}^3\normord{\partial_i\phi^2}(x')+\frac{u-v}{2}m^2\normord{\phi^2}(x')\\
       &+\frac{u}{2}\normord{\partial_v\phi^2}(x')-\frac{v}{2}\normord{\partial_u\phi^2}(x')\;\Big)\di^4 x'.
\end{aligned}
\end{equation}
The integrand is now manifestly written as a sum over normal ordered squares of the field $\phi$ or its derivatives, with coefficients that are squares of smooth real-valued functions, so that we can apply the usual techniques for establishing lower bounds:

\begin{proposition}\label{Prop: state_ind_QEI_for_kr}
    Let $g=\bar g\in\testrw$. Then the sesquilinear form $\kappa(g^2)$ on the domain $\polyalg(\R^4)\Omega\times\polyalg(\R^4)\Omega$ is $L^2$-bounded below.
\end{proposition}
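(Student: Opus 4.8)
The strategy is to read off from \eqref{eq:kgtransformed} that, after passing to lightcone coordinates, $\kappa(g^2)$ is a finite sum of normal-ordered squares of (constant-coefficient derivatives of) the free field, each weighted by a \emph{nonnegative} function, and then to invoke the state-independent ``sum of Wick squares'' quantum energy inequality of \cite[Sec.~3]{Few:qft_inequalities} -- exactly as was done for $h(g^2)$ in the thermal case of \autoref{sec:freescalar}.

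First I would record that the coefficients in \eqref{eq:kgtransformed} are not merely nonnegative but bounded away from zero on the support of $g$. Since $g\in\testrw$, its support is a compact subset of the open wedge $\rw=\{u>0,\,v<0\}$, so there are constants $u_0>0$, $v_0<0$ with $u\geq u_0$ and $v\leq v_0$ throughout $\operatorname{supp} g$. Hence $\sqrt{u-v}$, $\sqrt{u}$ and $\sqrt{-v}$ are smooth on a neighbourhood of $\operatorname{supp} g$, and the real-valued functions
\begin{equation*}
   w_{2}=w_{3}=g\sqrt{u-v},\quad w_{\phi}=mg\sqrt{u-v},\quad w_{v}=g\sqrt{u},\quad w_{u}=g\sqrt{-v}
\end{equation*}
all lie in $\testrw$. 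Distributing $g^2$ across the integrand of \eqref{eq:kgtransformed} then lets me write
\begin{equation*}
   \kappa(g^2)=\sum_{i=2}^{3}\normord{(\partial_i\phi)^2}(w_i^2)+\normord{\phi^2}(w_\phi^2)+\normord{(\partial_v\phi)^2}(w_v^2)+\normord{(\partial_u\phi)^2}(w_u^2),
\end{equation*}
a finite sum of normal-ordered squares of the fields $\partial_2\phi,\partial_3\phi,\phi,\partial_v\phi,\partial_u\phi$, each a constant-coefficient derivative of the free field (recall $\partial_u=\partial_0+\partial_1$, $\partial_v=\partial_0-\partial_1$), smeared against nonnegative weights $w_a^2$ with $w_a\in\testrw$ real.

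Each such term is of precisely the type covered by the quantum energy inequality for Wick squares of derivative fields: for any real $w\in\testrw$ and any vector $\vpsi$ in the Minkowski (vacuum) folium, one has $(\vpsi,\normord{(\mathcal{D}\phi)^2}(w^2)\vpsi)\geq -c_{w}\|\vpsi\|^2$ with a constant depending only on $w$, the differential operator $\mathcal{D}$, and the Hadamard two-point function of the vacuum \cite[Sec.~3]{Few:qft_inequalities}. Since the Minkowski vacuum is Hadamard and every vector in $\polyalg(\R^4)\Omega$ induces a Hadamard state \cite[Theorem~4.5]{BFK:spectrum}, these bounds apply on $\polyalg(\R^4)\Omega$. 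Summing the finitely many lower bounds yields $(\vpsi,\kappa(g^2)\vpsi)\geq -c_g\|\vpsi\|^2$ for all $\vpsi\in\polyalg(\R^4)\Omega$, i.e.\ $L^2$-boundedness below.

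The only genuine subtlety -- and the step I would check most carefully -- is the first one: that the rewriting as a sum of squares with \emph{smooth} real weights is legitimate. This hinges entirely on $\operatorname{supp} g$ being a compact subset of the \emph{open} wedge, so that $u$ and $-v$ stay bounded below and the square-root coefficients remain smooth. Without this the weights $w_a$ would be singular at the edge $u=0$ or $v=0$; this is exactly the mechanism behind the breakdown of the analogous bound for the full density $\ell(g^2)$, whose counterterm forces integration up to the edge. Given \eqref{eq:kgtransformed}, the remainder of the argument is a direct application of \cite{Few:qft_inequalities}.
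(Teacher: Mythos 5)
Your proposal is correct and follows essentially the same route as the paper's own proof: starting from \eqref{eq:kgtransformed}, absorbing the nonnegative coefficient functions into the test functions via square roots (legitimate precisely because $\operatorname{supp} g$ is compact in the open wedge, so $u>0$, $v<0$ there), and then applying the standard state-independent QEI for sums of normal-ordered squares term by term. The only difference is cosmetic -- the paper invokes the methods of \cite{FewsterEveson:1998} where you cite \cite[Sec.~3]{Few:qft_inequalities} together with \cite[Theorem~4.5]{BFK:spectrum}, and you spell out the smoothness of the square-root weights slightly more explicitly -- but the argument is the same.
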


\begin{proof}
We show a lower $L^2$-bound for the term $g^2(x')\frac{u-v}{2}\,\normord{\phi^2}(x')$ within \eqref{eq:kgtransformed}; the others are handled similarly, keeping in mind that $u>0$, $v<0$ in the integrand. Now since $u-v>0$, we know that $h(x'):=\sqrt{(u-v)/2}\, g(x')$ is an element of $\testrw$. Let $\vpsi\in\polyalg(\R^4)\Omega$. Using methods from \cite{FewsterEveson:1998}, we find a $\vpsi$-independent constant $c_h$ such that
\begin{equation}
  \int d^4x' h(x')^2 ( \vpsi, \normord{\phi^2}(x') \vpsi ) \geq - c_h \lVert\vpsi\rVert^2.
\end{equation}
With corresponding bounds on the other terms of the integral \eqref{eq:kgtransformed}, the proposed lower $L^2$-bound is established.
\end{proof}

\subsection{Violation of boost density bounds}\label{sec:knontriv}

We now show violation of upper $L^4$ bounds (nontriviality) and lower $L^2$ bounds for $\ell(f)$. Much like in \autoref{sec:nontriv}, we achieve this by exhibiting a suitable sequence of one-particle states, $\vpsi=\phi(g)\Omega$, $g \in \testrw$. 

\begin{proposition}\label{prop:nontrivboostl4}
      Let $f \in \test$ be nonnegative and not identically zero. Then the sesquilinear form $\kappa(f)$ on the domain $\polyalg(\rw)\Omega \times \polyalg(\rw)\Omega$ is not $L^4$-bounded above.
\end{proposition}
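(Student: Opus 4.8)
The plan is to mirror the argument for Proposition~\ref{prop:nontrivl4}, using a sequence of ultraviolet-concentrated one-particle states, but with two modifications dictated by the boost density and the wedge geometry. First I would fix an interior point $x_\ast=(\tau,a,0,0)\in\rw$, so that $a>|\tau|$, at which $f(x_\ast)>0$; such a point exists in the open wedge for the test functions of interest (in particular for $f=g^2$ with $g\in\testrw$). I then choose a real-valued profile $\chi\in\mathcal{S}(\R^4)$ and set $\vpsi_\lambda\coloneqq\phi(g_\lambda)\Omega$ with $g_\lambda(x)=\lambda^3\chi(\lambda(x-x_\ast))$, so that $\hat g_\lambda(k)=\lambda^{-1}e^{ikx_\ast}\hat\chi(k/\lambda)$ concentrates on momenta of order $\lambda$ while the spacetime support of $g_\lambda$ shrinks to $x_\ast$. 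For $\lambda$ large this support lies inside $\rw$, so $\vpsi_\lambda\in\polyalg(\rw)\Omega$ and $\phi(g_\lambda)=\phi(g_\lambda)^\ast$. The decisive change compared to Proposition~\ref{prop:nontrivl4} is that the concentration is around $x_\ast\neq 0$, so that the position-dependent coefficients $\tfrac{x_1}{2}$ and $-x_0$ in \eqref{eq:kdensity} survive the limit, tending to $\tfrac{a}{2}$ and $-\tau$.

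The $L^4$-norm bound is obtained exactly as in \eqref{eq:psi6est}: by Proposition~\ref{prop:affilp4norm}, $\|\vpsi_\lambda\|_4^4=\|\Delta^{1/4}\phi(g_\lambda)^\ast\phi(g_\lambda)\Omega\|$; using $\Delta^{1/2}\leq 1+\Delta$ and the modular identity $\Delta^{1/2}A\Omega=JA^\ast\Omega$ gives $\|\vpsi_\lambda\|_4^4\leq 2\|\phi(g_\lambda)^\ast\phi(g_\lambda)\Omega\|^2$; self-adjointness of $\phi(g_\lambda)$ together with quasi-freeness of the vacuum then yields $\|\phi(g_\lambda)^\ast\phi(g_\lambda)\Omega\|^2=3\|\phi(g_\lambda)\Omega\|^4$, hence $\|\vpsi_\lambda\|_4^2\leq\sqrt6\,\|\vpsi_\lambda\|^2$. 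Since $\|\vpsi_\lambda\|^2$ converges to a finite limit as $\lambda\to\infty$ (the one-particle norm of a rescaled wavepacket, controlled by dominated convergence exactly as for $N_\lambda$ in Proposition~\ref{prop:nontrivl4}), the states stay bounded in the $L^4$ norm.

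For the expectation value I would compute the one-particle matrix element of $\kappa(x)$ directly from \eqref{eq:kdensity}. Since $\vpsi_\lambda$ is a pure one-particle state over the vacuum, only the number-operator part $\cre{a}{k}\ani{a}{p}$ of each normal-ordered product contributes, giving a single term with oscillation $e^{i(k-p)x}$ and no off-diagonal oscillatory pieces (a simplification over the thermal \eqref{eq:sph}). Rescaling the momenta by $\lambda$ and writing $x=x_\ast+y/\lambda$—so that the translation phase of $\hat g_\lambda$ cancels the shift—the prefactors combine to order one and $f(x)\to f(x_\ast)$; the $\int\di^4 y$ then collapses the integral onto $\mathbf{k}=\mathbf{p}$, and the surviving $y^0$-integral, restricted to $[-R,R]$ with positive integrand, produces the factor $2R$ as in \eqref{eq:rrescale}. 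A short computation, using that the bracket $\tfrac12\big(\sum_{i=0}^3\normord{(\partial_i\phi)^2}+m^2\normord{\phi^2}\big)$ is exactly the energy density (so the first part of $\kappa$ equals $x_1$ times $h$, with diagonal coefficient $x_1\epsilon_m$), shows that on the diagonal the coefficient of $\kappa$ at $x_\ast$ is $a\,\epsilon_m(k,k)-\tau\,k_1/\|\mathbf{k}\|=a-\tau\,k_1/\|\mathbf{k}\|$ in the ultraviolet (where $\en{k}\to\|\mathbf{k}\|$, so $\epsilon_m(k,k)=1$), whence
\begin{equation*}
  \liminf_{\lambda\to\infty}(\vpsi_\lambda,\kappa(f)\vpsi_\lambda)\;\geq\;
  2R\,f(x_\ast)\int\frac{\di^3\mathbf{k}}{(2\pi)^3}
  \Big(a-\tau\,\frac{k_1}{\|\mathbf{k}\|}\Big)\,\big|\hat\chi(\|\mathbf{k}\|,\mathbf{k})\big|^2 .
\end{equation*}

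The main obstacle, and the genuinely new point relative to the energy-density case, is to ensure that this leading contribution is positive despite the indefinite-looking coefficients $\tfrac{x_1}{2}$ and $-x_0$ of the boost density. This is exactly where the wedge geometry enters: because $a>|\tau|$ and $|k_1|\leq\|\mathbf{k}\|$, we have $a-\tau\,k_1/\|\mathbf{k}\|\geq a-|\tau|>0$, so the integrand is strictly positive and the bound diverges as $R\to\infty$, proving that $\kappa(f)$ is not $L^4$-bounded above. This pointwise positivity is the boost counterpart of the identity $\epsilon_m(k,k)=1$, and it is precisely why the states must be localized at an interior wedge point: at the origin the coefficients would vanish, while outside $\rw$ one could have $a-|\tau|\leq 0$. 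The remaining steps are routine: justifying the interchange of the $\lambda\to\infty$ limit with the momentum integrations by dominated convergence (easier here than in the thermal case, since the exponential factors of \eqref{eq:sph} are absent and the vacuum one-particle data are Schwartz), and respecting the order of limits, taking $\lambda\to\infty$ for fixed $R$ before letting $R\to\infty$.
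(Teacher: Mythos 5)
Your proposal follows essentially the paper's own strategy: ultraviolet-concentrated one-particle states $\phi(g_\lambda)\Omega$ localized at an interior wedge point, the identical $L^4$-boundedness argument (Proposition~\ref{prop:affilp4norm}, $\Delta^{1/2}\leq 1+\Delta$, quasi-freeness plus real-valuedness of the profile, as in \eqref{eq:psi6est}), and divergence of $(\vpsi_\lambda,\kappa(f)\vpsi_\lambda)$ by restricting a pointwise nonnegative integrand to a time window $[-R,R]$, taking $\lambda\to\infty$ before $R\to\infty$. The differences are in implementation: the paper first uses boost covariance to place the localization point on the time-zero surface ($\tau=0$), and, more importantly, it rewrites $\kappa(g^2)$ in lightcone coordinates, Eq.~\eqref{eq:kgtransformed}, as a sum of five normal-ordered squares whose coefficients are positive on the wedge; in a one-particle state each term then has manifestly nonnegative pointwise expectation, so four terms can simply be discarded and only the $\normord{(\partial_2\phi)^2}$ term must be shown to diverge. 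Your variant (Minkowski coordinates, general $\tau$ with $|\tau|<a$, full diagonal coefficient $a-\tau k_1/\|\mathbf{k}\|\geq a-|\tau|>0$) isolates the same geometric input and is viable.

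Two points need repair, however. First, you take $\chi\in\mathcal{S}(\R^4)$: then $g_\lambda$ is not compactly supported in $\rw$, so $\phi(g_\lambda)$ is not affiliated with $\M(\rw)$, the vector $\phi(g_\lambda)\Omega$ does not lie in $\polyalg(\rw)\Omega$ (the domain named in the proposition), and Proposition~\ref{prop:affilp4norm} cannot be invoked for $\|\vpsi_\lambda\|_4$. You must take $\chi$ compactly supported --- either with $\operatorname{supp}\chi\subset\rw$ as the paper does (since $\rw$ is a cone and $ae_1+\rw\subset\rw$, the support of $g_\lambda$ then stays inside $\rw$ for \emph{every} $\lambda$), or supported in a small ball with $\lambda$ taken large. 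Second, and more substantively: the restriction step needs the integrand $f(x)\,(\vpsi_\lambda,\kappa(x)\vpsi_\lambda)$ to be pointwise nonnegative \emph{before} the limit, since otherwise cutting down the integration region does not yield a lower bound. Your positivity argument covers only the limiting diagonal coefficient $a-\tau k_1/\|\mathbf{k}\|$, not the pre-limit integrand, which is exactly where the indefinite-looking structure of \eqref{eq:kdensity} bites. The gap is closable by a short Cauchy--Schwarz estimate: writing $w_\mu(x)$ for the derivatives of the one-particle wave function, the expectation of the $\partial_0,\partial_1$ part of $\kappa(x)$ is proportional to
\begin{equation*}
  \tfrac{x_1}{2}\big(|w_0(x)|^2+|w_1(x)|^2\big)-x_0\operatorname{Re}\big(\overline{w_0(x)}\,w_1(x)\big)
  \;\geq\; \tfrac{x_1-|x_0|}{2}\big(|w_0(x)|^2+|w_1(x)|^2\big)\;\geq\;0
  \quad\text{for } x\in\overline{\rw},
\end{equation*}
which is precisely what the paper's lightcone decomposition \eqref{eq:kgtransformed} makes manifest term by term. (Note that this positivity --- in your argument and in the paper's alike --- also uses that $f$ is supported in the closed wedge, as holds in the intended application $f=g^2$, $g\in\testrw$.)
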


\begin{proof}
Using boost covariance, as well as translation covariance in $x^2$ and $x ^3$, we may assume that $f(0,a,0,0)>0$ for some $a>0$. Fix $\chi=\bar \chi\in\testrw$ and set $\vpsi_\lambda\coloneqq \phi(g_\lambda)\Omega$ with $g_\lambda(x)=\lambda^3\chi(\lambda x - \lambda ae_1)$, or equivalently, $\hat g_\lambda(k) = \lambda^{-1} \hat\chi(\lambda^{-1} k) e^{-ik_1a}$. One finds that
\begin{equation}
   \lim_{\lambda\to\infty} \|\vpsi_\lambda\|^2 = 
\int\frac{\di^3\textbf{k}}{(2\pi)^3}
\frac{\big\lvert \hat\chi(\|\mathbf{k}\|,\mathbf{k}) \big\rvert^2}{2\|\mathbf{k}\|};
\end{equation}
the $L^4$-norm of these vectors remains bounded as well, since
$\|\vpsi_{\lambda}\|_4^2\leq  \sqrt{6} \|\vpsi_{\lambda}\|_2^2$ with the same argument leading to \eqref{eq:psi6est}, which applies since $\chi$ is real-valued.

We aim to show that $( \vpsi_\lambda, \kappa(f) \vpsi_\lambda) \to \infty$ as $\lambda\to\infty$. To that end, one first notes that on inserting \eqref{eq:kgtransformed}, the expectation value $ (\vpsi_\lambda, \kappa(f) \vpsi_\lambda )$ is a sum of five positive terms, each arising from one of the normal-ordered squares, and it suffices to show that one of them diverges to $+\infty$; we pick the one that involves $\normord{(\partial_2 \phi)^2}$. That is, we consider (after converting back to the coordinates $x$) the expression
\begin{equation}
  K_2(f) \coloneqq
  \int \di{}^4x \,f(x) \int \di^3\mathbf{k} \,\di^3\mathbf{p} \frac{x_1 k_2p_2}{2(2\pi)^6 \en{k}\en{p}} \overline{\hat g_\lambda(k)} \hat g_\lambda(p) e^{i(k-p)\cdot x} 
\end{equation}
and show that it diverges. Indeed, after rescaling the integration variables, we have
\begin{equation}
  K_2(f) = 
  \int \di{}^4x \,\big(\frac{x_1}{\lambda}+a\big) f(\lambda^{-1}x + ae_1) \int \frac{ k_2p_2}{2 (2\pi)^6 \omega_{\lambda \mathbf{k}}\omega_{\lambda \mathbf{p}}/\lambda^2} \overline{\hat \chi(k)} \hat \chi(p) e^{i(k-p)\cdot x} .
\end{equation}
Noticing that the $x$-integrand is positive, we find for any $R>0$,
\begin{equation}
  \liminf_{\lambda \to\infty} K_2(f) \geq \int_{-R}^R \di{x}_0 \, a f(ae_1) \int \frac{\di^3\mathbf{k}}{(2\pi)^3}\frac{k_2^2}{\|\mathbf{k}\|}  \big\lvert\hat\chi( \|\mathbf{k}\|,\mathbf{k} )\big\rvert^2
\end{equation}
which becomes arbitrarily large for large $R$, showing the claim.
\end{proof}

This directly leads us to nonexistence of lower $L^2$ bounds:
\begin{proposition}\label{prop:nontrivboostL2}
 Let $g=\bar g \in \ccs(\R^4)$ be not identically zero. Then the sesquilinear form $\ell(g^2)$ on $\polyalg(\rw)\Omega \times \polyalg(\rw)\Omega$ is not $L^2$-bounded below.
 \end{proposition}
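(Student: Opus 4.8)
The plan is to mirror the proof of Proposition~\ref{prop:nontrivL2}, adapting the wave-packet technique of Proposition~\ref{prop:nontrivboostl4} to a modified and -- crucially -- \emph{non-real} sequence. Using the split \eqref{eq:ellwedge} together with $J\kappa(g^2)J=-\kappa(g^2\circ j)$, I would write, for $\vpsi\in\polyalg(\rw)\Omega$, that $(\vpsi,\ell(g^2)\vpsi)=(\vpsi,\kappa(g^2)\vpsi)+(\vpsi,\kappa(g^2\circ j)\vpsi)$, equivalently $(\vpsi,\ell(g^2)\vpsi)=(\vpsi,\kappa(g^2)\vpsi)-(J\vpsi,\kappa(g^2)J\vpsi)$. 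The first summand is $L^2$-bounded below by Proposition~\ref{Prop: state_ind_QEI_for_kr}, so it suffices to exhibit a sequence $\vpsi_\lambda=\phi(g_\lambda)\Omega$ with $g_\lambda\in\testrw$, along which $\|\vpsi_\lambda\|$ stays bounded while the ``counterterm'' $(\vpsi_\lambda,\kappa(g^2\circ j)\vpsi_\lambda)\to-\infty$.

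I would then reduce this to a one-particle computation exactly as in the proof of Proposition~\ref{prop:nontrivboostl4}: by Wick's theorem the expectation $(\phi(g_\lambda)\Omega,\kappa(g^2\circ j)\phi(g_\lambda)\Omega)$ is a hermitean quadratic form in the one-particle wavefunction of $g_\lambda$, with kernel built from the left-wedge density. Because $g^2\circ j$ is supported in $\rw'$, where the boost weight $x_1$ is \emph{negative}, this form carries the opposite sign to the one driven to $+\infty$ in Proposition~\ref{prop:nontrivboostl4}; the task is to drive it to $-\infty$ while keeping both $\|\vpsi_\lambda\|$ and the harmless term $(\vpsi_\lambda,\kappa(g^2)\vpsi_\lambda)$ under control.

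The delicate point is the choice of $g_\lambda$. In the thermal case one simply places $\hat g_\lambda$ on negative frequencies and reweights by $(\mathscr{B}^{-})^{-1}$ to populate the ``bath'' sector; here the localisation $g_\lambda\in\testrw$ forbids a sharp frequency cutoff, since compact support forces $\hat g_\lambda$ to be entire. Instead the selection of the left-wedge (bath) content must come from the Unruh structure of the vacuum, in which the boost modes exhibit the same Bogoliubov/thermal mixing as in \autoref{sec:freescalar}. Guided by the principle at the end of \autoref{sec:nontriv}, I would take $g_\lambda$ non-real and high-frequency so that $\|\phi(g_\lambda)^\ast\Omega\|\gg\|\phi(g_\lambda)\Omega\|$: this is what makes the reflected state $J\vpsi_\lambda$ deposit diverging boost energy in the region where $g^2$ sits, and it simultaneously forces $\|\vpsi_\lambda\|_4\to\infty$, which is both necessary for and consistent with the lower $L^4$ bound already established for $\ell(g^2)$ (a real-valued sequence would satisfy $\|\vpsi_\lambda\|_4^2\le\sqrt6\,\|\vpsi_\lambda\|^2$ and hence could never violate the lower bound). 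Using boost covariance I would, if needed, first normalise $g$ so that its support lies near the edge of the wedge, where the cross-wedge two-point function is most singular and the effect strongest.

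The main obstacle is precisely this construction and its accompanying estimate: showing that a sequence localised in $\rw$, of bounded Hilbert-space norm, makes the left-wedge boost density diverge to $-\infty$. Physically this rests on the entanglement of the Minkowski vacuum across the horizon, and technically it requires \emph{quantifying} the cross-wedge correlations rather than invoking any soft or purely abstract argument; I expect it to proceed through the explicit one-particle expectation and a rescaling argument in the spirit of \eqref{eq:rrescale}. A final companion check -- that $(\vpsi_\lambda,\kappa(g^2)\vpsi_\lambda)$ does not itself diverge to $+\infty$ and cancel the gain -- then completes the proof.
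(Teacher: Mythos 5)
Your reduction is sound as far as it goes: writing $(\vpsi,\ell(g^2)\vpsi)=(\vpsi,\kappa(g^2)\vpsi)-(J\vpsi,\kappa(g^2)J\vpsi)$ and invoking Proposition~\ref{Prop: state_ind_QEI_for_kr} for the first term correctly reduces the claim to showing that $J\kappa(g^2)J$ is not $L^2$-bounded above on $\polyalg(\rw)\Omega$. But from that point on your proposal is a plan, not a proof: the construction of a sequence $g_\lambda\in\testrw$ of bounded one-particle norm making the cross-wedge term diverge is exactly what you yourself label the ``main obstacle,'' and it is never carried out. The obstruction you notice is real --- the trick from Proposition~\ref{prop:nontrivL2} (supporting $\hat g_\lambda$ on negative frequencies and reweighting by $(\mathscr{B}^{-})^{-1}$) has no direct analogue here, since compact support in $\rw$ forces $\hat g_\lambda$ to be entire --- and nothing in your sketch resolves it. Quantifying how a right-wedge excitation of bounded Hilbert-space norm deposits divergent boost energy in a fixed compact subset of the left wedge would require genuinely new estimates on the Unruh/Bogoliubov mode mixing; none are supplied, so the proposal has a gap precisely at its central step.

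Moreover, your assertion that the result ``requires quantifying the cross-wedge correlations rather than invoking any soft or purely abstract argument'' is mistaken: the paper's proof is exactly such a soft argument, and it needs no new estimates at all. It argues by contradiction: if $J\kappa(g^2)J$ were $L^2$-bounded above on $\polyalg(\rw)\Omega$, then --- combined with the lower bound from Proposition~\ref{Prop: state_ind_QEI_for_kr}, which applies because $\polyalg(\R^4)\Omega$ is $J$-invariant --- it would be a \emph{bounded} hermitean form there; equivalently, $\kappa(g^2)$ would be a bounded form on $J\polyalg(\rw)\Omega$. By the Reeh-Schlieder property this domain is dense in $\Hil$, so the symmetric operator $\kappa(g^2)$ (in particular on $\domSchwartz$) would extend to a bounded, everywhere-defined operator; since $\|\cdot\|_2\leq\|\cdot\|_4$, it would then be $L^4$-bounded above, contradicting Proposition~\ref{prop:nontrivboostl4} --- which was already established using \emph{real-valued} right-wedge wave packets and involves no cross-wedge analysis whatsoever. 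In short, the unboundedness you are trying to exhibit by hand is inherited ``for free'' by density from the unboundedness already proven; if you insist on the constructive route, you must actually produce the sequence and its divergence estimate, which I expect to be substantially harder than the paper's entire argument.
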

 \begin{proof}
   We know from Proposition~\ref{Prop: state_ind_QEI_for_kr} that $\kappa(g^2)$ is $L^2$-bounded below on the stated domain. Hence the statement is equivalent to $J\kappa(g^2) J$ \emph{not} being $L^2$-bounded above. 

  Assume, for sake of contradiction, that $J\kappa(g^2) J$ is $L^2$-bounded above on $\polyalg(\rw)\Omega \times \polyalg(\rw)\Omega$. Applying Proposition~\ref{Prop: state_ind_QEI_for_kr}, and noting that $\polyalg(\R^4)\Omega$ is invariant under $J$, we know that $J\kappa(g^2) J$ is also $L^2$-bounded below on  $\polyalg(\rw)\Omega \times \polyalg(\rw)\Omega$, hence bounded there.
  Now $J\polyalg(\rw)\Omega$ is dense in $\Hil$ by the Reeh-Schlieder property of the real scalar free field. For the operator $\kappa(g^2)\restriction J\polyalg(\rw)\Omega$, this means that its adjoint, and hence its double adjoint (closure), are both defined on all of $\Hil$ and bounded. In particular, the original operator $\kappa(g^2)$ defined on $\domSchwartz$ is bounded. But this is a contradiction to Proposition~\ref{prop:nontrivboostl4}.     
\end{proof}

\section{Conclusions}\label{sec:conclusions}

We have seen that in quantum field theory, the Liouvillian density, or the density of the (negative of the) modular generator, can possess an ``almost positivity'' property akin to a quantum energy inequality (QEI). Naturally this inequality does not apply to all states uniformly, but specifically to those generated from the base state by applying an observable of the thermal system (rather than of the heat bath). This is reminiscent of the passivity of the Liouvillian, or negativity of the modular generator on corresponding subspaces. Mathematically, the property can be expressed as lower bounds with respect to the noncommutative $L^4$ norm.

We have discussed the fundamental approach to these features in a model-independent setting. However, as with QEIs, the question of existence of lower bounds can be answered only in specific models. To that end, we have considered the free real scalar field on Minkowski space in its usual thermodynamical equilibrium state, as well as the vacuum sector of the same model restricted to the Rindler wedge, where the boost density takes the role of the density of the Liouvillian. This choice was mainly for simplicity: We expect that similar results hold in more general linear theories, also on other static spacetimes and in general space-time dimensions. Challenges still need to be overcome, though, for a generalization to models with self-interaction: Our proof techniques are partially based on the existence of state-independent QEIs, which are not yet well understood beyond the linear case; see however \cite{FewsterHollands:2005,BostelmannFewster:2009,BostelmannCadamuroFewster:ising,BostelmannCadamuroMandrysch:qeiint}.

As the Rindler wedge example shows, the concept of $L^p$ inequalities applies not only to thermal equilibrium in the usual sense, but more generally to the modular generator in Tomita-Takesaki theory. Since the importance of the latter in relativistic quantum information has recently attracted attention (e.g.,  \cite{HollandsSanders:entanglement,CasiniGrilloPontello:relent,CeyhanFaulkner:qnec,CLR:waveinfo}), we expect our results to be relevant in that context as well.

\appendix

\section{Operators affiliated with a von Neumann algebra}\label{app:affil}

While unbounded (closed) operators cannot be \emph{contained} in a von Neumann algebra, there is still a closely related notion of \emph{affiliation}. We recall this notion, and show some auxiliary results that are not directly available in the literature.

Let $\M$ be a von Neumann algebra acting on a Hilbert space $\Hil$. 
A densely defined, closed operator $T$ is called \emph{affiliated with $\M$} (in symbols, $T \affil \M$) if it fulfills one (hence both) of the following equivalent conditions:
\begin{enumerate}[(i)]
 \item \label{it:affil1} For every $A \in \M'$, we have $A \dom T \subset \dom T$ and $AT\vpsi=TA\vpsi$ for all $\vpsi \in \dom T$;
 \item \label{it:affil3} If $T = V \int_0^\infty \lambda dE(\lambda)$ is the polar decomposition of $T$, then $V, E(\lambda)\in \M$. 
\end{enumerate}
We briefly recall the arguments leading to this equivalence:
\begin{enumerate}
 \item[(i)$\Rightarrow$(ii):] If $U\in\M'$ is unitary, hence $UTU^\ast=T$, then uniqueness of the polar and spectral decompositions yield $UVU^\ast=V$, $UE(\lambda)U^\ast=E(\lambda)$. Every element of $\M'$ can be written as a linear combination of unitaries, thus $V,E(\lambda)\in\M''=\M$. \cite[Lemma 2.5.8.]{BraRob:qsm1} 
 \item[(ii)$\Rightarrow$(i):] Consider the sequence $T_n := T E(n)$ in $\M$. We note that $E(n)\vpsi' \oplus T_n \vpsi'$ is in the graph of $T$ for every $\vpsi'\in\Hil$. Thus for any $A \in \M'$ and $\vpsi \in \dom{T}$, we have 
 \begin{equation}
     E(n) A\vpsi \oplus T_n A \vpsi = E(n) A \vpsi \oplus A T_n  \vpsi \xrightarrow{n \to \infty} A \vpsi \oplus A T \vpsi    
 \end{equation}
 within the (closed) graph of $T$, due to continuity of $A$. Thus $A\vpsi \in \dom{T}$ and $TA\vpsi=AT\vpsi$.
\end{enumerate}

A weaker form of condition (\ref{it:affil1}) is already sufficient to show affiliation (cf.~also \cite[Remark~5.6.3]{KadRin:algebras1}):
\begin{lemma}\label{lem: weaker_affiliation}
   Let $\mathcal{D}_0$ be a core for the closed operator $T$, and let $\N_0\subset\M'$ be a weakly dense $\ast$-subalgebra.
   Suppose that for every $\vpsi \in \mathcal{D}_0$ and $A \in \N_0$, we have $A\vpsi \in \dom T$ and $AT\vpsi=TA\vpsi$. Then $T \affil \M$.
\end{lemma}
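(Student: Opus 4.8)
The plan is to verify the polar-decomposition criterion~(\ref{it:affil3}) for affiliation: writing $T = V\int_0^\infty \lambda\, dE(\lambda)$, I would show that the partial isometry $V$ and every spectral projection $E(\lambda)$ commute with all of $\N_0$, and then pass to the weak closure of $\N_0$, which equals $\M'$, to conclude $V, E(\lambda) \in \M'' = \M$.

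First I would extend the commutation hypothesis from the core $\mathcal{D}_0$ to the full domain $\dom T$. Fix $A \in \N_0$ and $\vpsi \in \dom T$, and choose $\vpsi_n \in \mathcal{D}_0$ with $\vpsi_n \to \vpsi$ and $T\vpsi_n \to T\vpsi$. Since $A$ is bounded, $A\vpsi_n \to A\vpsi$ while $T(A\vpsi_n) = AT\vpsi_n \to AT\vpsi$; as $T$ is closed, this gives $A\vpsi \in \dom T$ and $TA\vpsi = AT\vpsi$. Hence $AT \subset TA$ for every $A \in \N_0$, and because $\N_0$ is a $\ast$-algebra the same holds with $A^\ast$ in place of $A$; taking adjoints turns $A^\ast T \subset TA^\ast$ into $AT^\ast \subset T^\ast A$.

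The central step is to carry this commutation over to the polar parts, and here is the main obstacle: $T$ need not be normal, so one cannot apply spectral theory to $T$ directly and must route everything through the positive self-adjoint operator $T^\ast T$, keeping careful track of domains. Concretely, for $\vpsi \in \dom(T^\ast T)$ one checks that $A\vpsi \in \dom T$ with $TA\vpsi = AT\vpsi \in \dom T^\ast$, so that $A$ preserves $\dom(T^\ast T)$ and $T^\ast T A\vpsi = T^\ast (AT\vpsi) = A T^\ast T\vpsi$; thus $A(T^\ast T) \subset (T^\ast T)A$. The resolvent $R = (T^\ast T + 1)^{-1}$ is bounded, and from $(T^\ast T + 1)Ry = y$ together with the invariance of $\dom(T^\ast T)$ under $A$ one deduces $AR = RA$. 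Since $R$ is self-adjoint, $A \in \{R\}'$ forces $A$ to commute with every spectral projection of $R$, hence of $T^\ast T$, hence of $|T| = (T^\ast T)^{1/2}$; in particular $AE(\lambda) = E(\lambda)A$ and $A|T| = |T|A$. For the partial isometry I would then compute $AV|T|x = ATx = TAx = V|T|Ax = VA|T|x$ for $x \in \dom T$ (the last step using $A|T| = |T|A$), giving $AV = VA$ on $\overline{\Ran|T|}$; on $\Ker|T| = \Ker V = E(\{0\})\Hil$ both sides vanish, since $A$ preserves this subspace. Hence $AV = VA$ on all of $\Hil$.

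Finally, $\{V\}'$ and each $\{E(\lambda)\}'$ are von Neumann algebras, hence weakly closed, and by the preceding step they contain $\N_0$; therefore they contain its weak closure $\M'$. This means $V$ and all $E(\lambda)$ commute with $\M'$, i.e.\ $V, E(\lambda) \in \M'' = \M$, which is exactly condition~(\ref{it:affil3}). Consequently $T \affil \M$, as claimed. The only genuinely delicate point is the third paragraph, where the absence of normality of $T$ obliges us to transfer the commutation relation first to $T^\ast T$ and its resolvent, with the domain bookkeeping being the part most prone to error.
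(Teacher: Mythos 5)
Your proof is correct, but it takes a genuinely different route from the paper's. The paper verifies the commutation criterion~(\ref{it:affil1}) directly: it extends $AT\vpsi = TA\vpsi$ simultaneously from $\mathcal{D}_0$ to $\dom{T}$ (core property) and from $\N_0$ to all of $\M'$, by choosing a net $A_\lambda \to A$ \emph{strongly} (possible because the weak and strong closures of the convex set $\N_0$ coincide) and passing to the limit inside the closed graph of $T$. You instead verify the polar-decomposition criterion~(\ref{it:affil3}): after the same core-to-domain extension (for fixed $A\in\N_0$ only), you push the commutation through $T^\ast$, $T^\ast T$, its resolvent, the spectral projections and $|T|$, and finally the partial isometry $V$; then, since commutants of bounded operators are weakly closed, weak density of $\N_0$ in $\M'$ gives $\M'\subset\{V\}'\cap\{E(\lambda)\}'$, whence $V,E(\lambda)\in\M''=\M$. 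Your delicate steps all check out: $A\dom{T^\ast}\subset\dom{T^\ast}$ follows correctly by taking adjoints in $A^\ast T\subset TA^\ast$ (using $(BT)^\ast = T^\ast B^\ast$ for bounded $B$ and $(TB)^\ast\supset B^\ast T^\ast$); the identity $(T^\ast T+\idop)AR\vxi = A\vxi$ does yield $AR=RA$; and commutation with the bounded self-adjoint $R$ propagates to all spectral projections of $T^\ast T$ and $|T|$, and to $V$ on $\overline{\Ran|T|}\oplus\Ker|T|=\Hil$. As for what each approach buys: the paper's argument is much shorter, but it silently invokes the convexity fact that weak density implies strong density; your argument uses only weak density, invoked once and abstractly through the weak closedness of commutants, at the price of heavier machinery (von Neumann's theorem that $T^\ast T$ is self-adjoint, adjoint rules for products). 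In effect you re-prove the implication (\ref{it:affil1})$\Rightarrow$(\ref{it:affil3}) from the appendix with the unitary-conjugation trick replaced by resolvent commutation --- a replacement that is genuinely needed, since a weakly dense $\ast$-subalgebra $\N_0$ need not contain enough unitaries. One further remark: since the graph of $T$ is a norm-closed subspace and hence weakly closed, the paper's own limit argument would also go through with a merely weakly convergent net, so both proofs can be made to rest on weak density alone; your route makes this visible, the paper's hides it.
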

\begin{proof}
  We may assume that $\N_0$ is unital, since $A=\idop$ fulfills the condition above. 
  
  Now for given $\vpsi \in \dom{T}$, choose $(\vpsi_n)\subset \mathcal{D}_0$ with $\vpsi_n \to \vpsi$ and $T \vpsi_n \to T\vpsi$. Likewise for given $A \in \M'$, choose a net $(A_\lambda)\subset \N_0$ such that $A_\lambda \xrightarrow[]{\lambda} A$ strongly (note that $\N_0\subset \M'$ is also strongly dense). Then the following convergence statements hold within the (closed) graph of $T$, using continuity of $A_\lambda$:
  \begin{equation}
      A_\lambda \vpsi_n \oplus TA_\lambda \vpsi_n 
      =  A_\lambda \vpsi_n \oplus A_\lambda T \vpsi_n
      \xrightarrow{n \to \infty} A_\lambda \vpsi \oplus A_\lambda T \vpsi
      \xrightarrow{\lambda} A \vpsi \oplus A T \vpsi.
  \end{equation}
  We conclude that $A\vpsi \in \dom{T}$ and $TA\vpsi = AT\vpsi$. Thus condition (\ref{it:affil1}) is fulfilled.
\end{proof}

Vice versa, a stronger form of (i) holds that involves two closed (possibly unbounded) operators:
\begin{lemma}\label{lem: affaff}
    Suppose that $T \affil \M$ and $T' \affil \M'$. If $\vpsi\in \dom{T}\cap \dom{T'}$ and $T\vpsi\in\dom{T'}$, then $T'\vpsi \in \dom{T}$ and
    \begin{equation*}
        TT'\vpsi=T'T\vpsi.
    \end{equation*}
\end{lemma}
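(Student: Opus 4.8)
The plan is to approximate $T'$ from inside its own algebra $\M'$ by bounded operators and to exploit that these commute \emph{exactly} with $T$. Concretely, I would start from the polar decomposition $T' = V'|T'|$ with $|T'| = \int_0^\infty \lambda\, dF(\lambda)$; since $T'\affil\M'$, condition~(\ref{it:affil3}) guarantees $V', F(\lambda)\in\M'$. Writing $Q_m := F(m)$ for the spectral projection of $|T'|$ onto $[0,m]$, the truncation $T'Q_m = V'\,(|T'|Q_m)$ is a \emph{bounded} operator, and it lies in $\M'$ because $|T'|Q_m$ is a bounded Borel function of $|T'|\affil\M'$. Moreover $T'Q_m$ converges to $T'$ on the domain in the sense that $T'Q_m\eta \to T'\eta$ for every $\eta\in\dom{T'}$; this is the standard truncation fact, following from $F(m)\to\idop$ strongly and boundedness of $V'$, via $T'Q_m\eta = V'F(m)|T'|\eta \to V'|T'|\eta = T'\eta$.

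The key step is a commutation identity. Since $T\affil\M$ and $T'Q_m\in\M'$, condition~(\ref{it:affil1}) applied to the element $T'Q_m$ of $\M'$ shows that $T'Q_m$ maps $\dom T$ into itself and commutes with $T$ there. With $\vpsi\in\dom T$ this yields
\begin{equation*}
  T\big(T'Q_m\vpsi\big) = T'Q_m\,(T\vpsi) \qquad \text{for all } m .
\end{equation*}
I then let $m\to\infty$. The left-hand arguments converge, $T'Q_m\vpsi \to T'\vpsi$, by the truncation property above. For the right-hand side I invoke the hypothesis $T\vpsi\in\dom{T'}$: applying the same truncation property to the \emph{fixed} vector $T\vpsi$ gives $T'Q_m\,(T\vpsi) \to T'(T\vpsi) = T'T\vpsi$. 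Thus $T'Q_m\vpsi \to T'\vpsi$ while $T\big(T'Q_m\vpsi\big) \to T'T\vpsi$, and since $T$ is closed this forces $T'\vpsi\in\dom T$ with $TT'\vpsi = T'T\vpsi$, as claimed.

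The one point requiring care — and the reason for truncating $T'$ rather than $T$ — is to avoid circularity. The symmetric-looking attempt of replacing $T$ by its bounded truncations $TE(n)\in\M$ (with $E(\lambda)$ the spectral resolution of $|T|$) and passing $n\to\infty$ reduces the claim to the convergence of $TE(n)\,T'\vpsi = T'\big(TE(n)\vpsi\big)$, which is precisely the quantity one is trying to control, and it would require applying the unbounded $T'$ to a merely convergent sequence. The argument above sidesteps this entirely: it is $T'Q_m$, not $TE(n)$, that is genuinely bounded and an element of the commutant, so its commutation with $T$ is exact, and the only limits taken are of $T'$ applied to the two fixed vectors $\vpsi$ and $T\vpsi$, both assumed to lie in $\dom{T'}$. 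The remaining checks — that $|T'|Q_m$ is a bounded function of $|T'|$ and hence in $\M'$, and the truncation convergence on $\dom{T'}$ — are routine.
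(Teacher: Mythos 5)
Your proof is correct and follows essentially the same route as the paper: truncate $T'$ via its spectral projections to get bounded elements $T'Q_m\in\M'$, use the exact commutation of these with $T$ from the affiliation condition, pass to the limit on the two fixed vectors $\vpsi$ and $T\vpsi$ (both in $\dom{T'}$ by hypothesis), and conclude by closedness of $T$. Your closing remark about why one must truncate $T'$ rather than $T$ is a correct and worthwhile observation that the paper leaves implicit.
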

\begin{proof}
   Let $T_n' := T' E'(n)$, where $V',E'(\lambda)\in\M'$ arise from polar decomposition of $T'$ as in (\ref{it:affil3}). Then $T_n'\vpsi \to T'\vpsi$ since $\vpsi\in\dom{T'}$ and 
   \begin{equation}
       T T_n'\vpsi = T_n'T\vpsi \to T'T \vpsi
   \end{equation}
   since $T\vpsi \in \dom{T'}$. By closedness of $T$ (notice that $T_n'\vpsi\in\dom{T}$ since $T\affil \M$ and $T'_n\in\M '$), we now have $T'\vpsi\in\dom{T}$ and $TT'\vpsi = T'T\vpsi$.
\end{proof}

If $\M$ has a cyclic and separating vector $\Omega\in\Hil$, and hence the Tomita operator $S$ is defined by
    \begin{equation}
        SA\Omega=A^*\Omega, \quad \A \in \M
    \end{equation}
and operator closure, we can extend this relation to affiliated operators:
\begin{proposition}\label{prop:Ext_Tomita}
    If $T\affil\mathcal{M}$ and $\Omega\in\dom{T} \cap \dom{T^*}$, then we have $T\Omega\in\dom{S}$ and
    \begin{equation*}
        ST\Omega=T^*\Omega.
    \end{equation*}
\end{proposition}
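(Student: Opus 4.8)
The plan is to approximate the (possibly unbounded) operator $T$ from within $\M$ by bounded operators, apply the defining relation of the Tomita operator to each approximant, and then pass to the limit using the closedness of $S$. Concretely, I would invoke the polar decomposition $T=V\int_0^\infty\lambda\,dE(\lambda)$ provided by the affiliation criterion \eqref{it:affil3}, with $V\in\M$ and $E(\lambda)\in\M$ for all $\lambda$.

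First I would set $T_n:=TE(n)=V\int_0^n\lambda\,dE(\lambda)$. Because the spectral integral is truncated at $n$, the factor $\int_0^n\lambda\,dE(\lambda)$ is a bounded selfadjoint operator (a norm limit of linear combinations of the projections $E(\lambda)\in\M$), so it lies in $\M$; together with $V\in\M$ this gives $T_n\in\M$. The definition $SA\Omega=A^*\Omega$ for $A\in\M$ then yields immediately that $T_n\Omega\in\dom{S}$ and $S(T_n\Omega)=T_n^*\Omega$.

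Next I would establish the two relevant convergences. For the domain vector itself, I use $\Omega\in\dom{T}=\dom{|T|}$ and the fact that $E(n)$ commutes with $|T|$, so that $T_n\Omega=V\,E(n)|T|\Omega\to V|T|\Omega=T\Omega$, since $E(n)\to\idop$ strongly and $V$ is bounded. For the adjoint side, I would first identify $T_n^\ast$ on $\dom{T^\ast}$: for $\psi\in\dom{T^\ast}$ and arbitrary $\phi$, the vector $E(n)\phi$ lies in $\dom{T}$, so
\begin{equation*}
(T_n\phi,\psi)=(TE(n)\phi,\psi)=(E(n)\phi,T^\ast\psi)=(\phi,E(n)T^\ast\psi),
\end{equation*}
which shows $T_n^\ast\psi=E(n)T^\ast\psi$ for all $\psi\in\dom{T^\ast}$. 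Evaluating at $\psi=\Omega$ (here the hypothesis $\Omega\in\dom{T^\ast}$ enters) and again using $E(n)\to\idop$ strongly gives $T_n^\ast\Omega=E(n)T^\ast\Omega\to T^\ast\Omega$. Finally, since $T_n\Omega\to T\Omega$, $S(T_n\Omega)=T_n^\ast\Omega\to T^\ast\Omega$, and $S$ is closed, I conclude $T\Omega\in\dom{S}$ and $ST\Omega=T^\ast\Omega$.

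I expect the only genuinely delicate point to be the correct handling of the adjoint of the truncated operator, i.e.\ the identity $T_n^\ast=E(n)T^\ast$ on $\dom{T^\ast}$; one must be careful that $T_n$ is bounded (so that $T_n^\ast$ is everywhere defined) while $T$ and $T^\ast$ are not, and verify that the projection may be pulled across without assuming any commutation of $E(n)$ with the partial isometry $V$. Everything else—membership $T_n\in\M$ and the two strong-convergence statements—is routine once the polar decomposition of the affiliated operator is in hand.
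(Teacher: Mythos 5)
Your proof is correct and follows essentially the same route as the paper's: truncate $T$ to $T_n := TE(n) \in \M$ using the spectral projections of $|T|$ from the polar decomposition, use $ST_n\Omega = T_n^*\Omega$ for the bounded approximants, and pass to the limit via closedness of $S$. The only (minor) divergence is in justifying $T_n^*\Omega \to T^*\Omega$: the paper invokes the identities $T^* = V^*TV^*$ and $T_n^* = V^*T_nV^*$ from the literature, whereas you derive $T_n^*\psi = E(n)T^*\psi$ for $\psi \in \dom{T^*}$ by a direct adjoint computation — both are valid, and your handling of this step is somewhat more self-contained.
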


\begin{proof}
  Let $V$, $E(\lambda)$ as in (\ref{it:affil3}), and set $T_n:=TE(n)\in\M$. It is known that $T^\ast = V^\ast T V^\ast$,
  where $\Omega\in\dom{T^\ast}$ implies $V^\ast\Omega \in \dom{T}$ \cite[Sec.~7.1]{Schmuedgen:unbounded}; similarly $T_n^\ast = V^\ast T_n V^\ast$. Then, within the graph of $S$,
  \begin{equation}
    T_n \Omega \oplus S T_n \Omega =
    T_n \Omega \oplus V^\ast T_n V^\ast \Omega
    \xrightarrow{n \to \infty} T \Omega \oplus V^\ast T V^\ast \Omega
    = T \Omega \oplus T^\ast \Omega,
  \end{equation}
  thus $T\Omega \in \dom{S}$ and $ST\Omega=T^\ast \Omega$.
\end{proof}

We now discuss the sum of two closed operators, say, $T_1$ and $T_2$, which we take to be naturally defined on $\dom{T_1}\cap \dom{T_2}$. In general, this sum $T_1+T_2$ need not be densely defined nor closable. However, if it is both, then the sum operation works as expected with respect to affiliation:
\begin{lemma}\label{lem:sumaffil}
Let $T_1$, $T_2$ be two closed operators affiliated with von Neumann algebras $\M_1$ and $\M_2$ respectively, and suppose that $T_1+T_2$ is densely defined (on $\dom T_1 \cap \dom T_2$) and closable. Then $\overline{T_1+T_2}$ is affiliated with $\M_1 \vee \M_2$.
\end{lemma}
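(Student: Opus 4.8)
The plan is to verify the defining condition of affiliation for the closed operator $T \coloneqq \overline{T_1+T_2}$ with respect to $\M_1\vee\M_2$ --- namely that every element of the commutant $(\M_1\vee\M_2)'$ leaves $\dom{T}$ invariant and commutes with $T$ --- and to reduce this verification to a core by means of Lemma~\ref{lem: weaker_affiliation}.

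First I would identify the relevant commutant. Since $\M_1\vee\M_2 = (\M_1\cup\M_2)''$, the standard fact for the join of von Neumann algebras gives $(\M_1\vee\M_2)' = \M_1'\cap\M_2'$, so it suffices to treat operators $A\in\M_1'\cap\M_2'$. Next I would establish the commutation on the natural domain $\mathcal{D}_0\coloneqq\dom{T_1}\cap\dom{T_2}$, which by the closability hypothesis is a core for $T$. Fixing $A\in\M_1'\cap\M_2'$ and $\vpsi\in\mathcal{D}_0$, affiliation of $T_1$ with $\M_1$ gives $A\vpsi\in\dom{T_1}$ and $AT_1\vpsi=T_1A\vpsi$, and likewise $A\vpsi\in\dom{T_2}$ and $AT_2\vpsi=T_2A\vpsi$. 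Hence $A\vpsi\in\mathcal{D}_0\subset\dom{T}$ and
\begin{equation*}
  A(T_1+T_2)\vpsi = AT_1\vpsi + AT_2\vpsi = T_1A\vpsi + T_2A\vpsi = (T_1+T_2)A\vpsi,
\end{equation*}
that is, $AT\vpsi = TA\vpsi$ for all $\vpsi\in\mathcal{D}_0$.

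Finally I would apply Lemma~\ref{lem: weaker_affiliation} with the core $\mathcal{D}_0$ and the weakly dense $\ast$-subalgebra $\N_0\coloneqq\M_1'\cap\M_2' = (\M_1\vee\M_2)'$ (which is trivially weakly dense in itself): the commutation just verified is exactly its hypothesis, and the lemma then yields $T\affil\M_1\vee\M_2$, as claimed. I do not expect a genuine obstacle here; the only technical point is the passage from the core $\mathcal{D}_0$ to all of $\dom{T}$, and this is precisely what Lemma~\ref{lem: weaker_affiliation} supplies, its graph-limit argument handling the closure. It is worth emphasising that the closability assumption enters at exactly this juncture, since it is what makes $T=\overline{T_1+T_2}$ a well-defined closed operator for which $\mathcal{D}_0$ is a core; without it the statement would not even be meaningful.
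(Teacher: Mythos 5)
Your proposal is correct and follows essentially the same route as the paper's own proof: commutation is verified for $A\in(\M_1\vee\M_2)'=\M_1'\cap\M_2'$ on the core $\dom{T_1}\cap\dom{T_2}$ using affiliation of each $T_j$, and then Lemma~\ref{lem: weaker_affiliation} (with $\N_0$ the full commutant) lifts this to the closure. No gaps.
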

\begin{proof}
Let $A\in (\M_1 \vee \M_2)'=\M_1' \cap \M_2'$. Then, by characterization (i) of affiliation, we have $A \dom{T_j}\in \dom{T_j}$ and $AT_j=T_jA$ on $\dom{T_j}$ ($j=1,2$). 
Hence for $\vpsi\in\mathcal{D}_\cap := \dom{T_1} \cap \dom{T_2}$, we have $A\mathcal{D}_\cap\subset \mathcal{D}_\cap$ and
\begin{equation}
   \overline{T_1+T_2} A \vpsi = T_1 A \vpsi + T_2 A \vpsi = A T_1 \vpsi +AT_2 \vpsi = A (\overline{T_1+T_2})\vpsi. 
\end{equation}
Since $\mathcal{D}_\cap$ is by definition a core for $\overline{T_1+T_2}$, the result now follows from Lemma~\ref{lem: weaker_affiliation}.
\end{proof}

\section{Noncommutative \texorpdfstring{$L^p$}{Lp} spaces}\label{app:lp}

Let $\M$ be a von Neumann algebra of operators acting on Hilbert space $\mathcal{H}$, with a cyclic and separating normalised vector $\Omega$.
The noncommutative $L^p$ spaces $L^p(\M,\Omega)$, $2 \leq p \leq \infty$, are Banach spaces that interpolate between $L^2(\M,\Omega)\coloneqq \Hil$ (with the norm of $\Hil$)  and $L^\infty(\M,\Omega)\coloneqq \M$ (with the operator norm), continuously included in each other, i.e., $L^p(\M,\Omega) \subset L^q(\M,\Omega)$ for $p \geq q$. Their name is justified since they share many properties with the classical $L^p$ spaces, notably the Hölder inequality. 

These spaces have first been constructed for tracial von Neumann algebras \cite{Dixmier:lp}, and are still best known in that setting. In our applications, however, $\M$ is a factor of type III, so that we require a more generic formulation. Several approaches to this end have been proposed in the literature, all of which turn out to be equivalent: by extending $\M$ to a larger algebra that possesses a trace \cite{Haagerup}; by complex interpolation between $L^2(\M,\Omega)$ and $L^\infty(\M,\Omega)$ \cite{KOSAKIci}; and using the relative modular operators associated with $\M$ \cite{ArakiMasuda}. If $\M=C(X)$ is commutative, and $\Omega$ a constant function on $X$, one recovers the classical spaces $L^p(X)$.

For the convenience of the reader, let us give here a direct construction of the space $L^4(\M,\Omega)$, which is particularly important in our context. We start with the space $\Hil = L^2(\M,\Omega)$, the norm of which we denote by $\|\cdot\|_2$. The Tomita operator of $\M,\Omega$ on $\Hil$ will be denoted $S = J \Delta^{1/2}$. By cyclicity, the subspace $\M\Omega \subset \Hil$ is dense; on it, we define a new norm by
\begin{equation} \label{eq:noncnormdef}    
\begin{aligned}
    \|\cdot\|_4:\;\;\;&\M\Omega\to \R,\\
     &A\Omega \mapsto \|A\Omega\|_4 \coloneqq \|\Delta^{1/4}A^*A\Omega\|_2^{1/2}.
\end{aligned}
\end{equation}
This is well-defined since $A^\ast A \Omega \in \dom{S}$. That it fulfills the axioms of a norm is less direct: 
\begin{lemma}
    The real-valued function $\|\cdot\|_4$ indeed defines a norm.
\end{lemma}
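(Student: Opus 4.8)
The plan is to verify the three norm axioms. Positive definiteness and homogeneity are immediate; the real work is the triangle inequality, which I will reduce to a single Cauchy--Schwarz estimate in $\Hil$.

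First, well-definedness and the easy axioms. Since $\Omega$ is separating for $\M$, the assignment $A\mapsto A\Omega$ is injective, so $\|A\Omega\|_4$ depends only on the vector $A\Omega$; moreover $A^\ast A\Omega\in\M\Omega\subset\dom{S}=\dom{\Delta^{1/2}}\subset\dom{\Delta^{1/4}}$, so the expression is finite. Homogeneity follows from $(\lambda A)^\ast(\lambda A)=|\lambda|^2A^\ast A$, which gives $\|(\lambda A)\Omega\|_4=|\lambda|\,\|A\Omega\|_4$. For definiteness, $\Delta^{1/4}$ is injective (as $\Delta$ is a positive injective operator), so $\|A\Omega\|_4=0$ forces $A^\ast A\Omega=0$; since $\Omega$ is separating this yields $A^\ast A=0$, hence $A=0$ and $A\Omega=0$.

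Second, I reduce the triangle inequality to a bilinear estimate. Expanding $(A+B)^\ast(A+B)=A^\ast A+A^\ast B+B^\ast A+B^\ast B$ and applying the ordinary triangle inequality in $\Hil$ to $\Delta^{1/4}(A+B)^\ast(A+B)\Omega$, it suffices to prove the ``mixed'' bound
\[
  \|\Delta^{1/4}A^\ast B\Omega\|\leq \|\Delta^{1/4}A^\ast A\Omega\|^{1/2}\,\|\Delta^{1/4}B^\ast B\Omega\|^{1/2},
\]
and its analogue with $A,B$ interchanged. Indeed, writing $a:=\|A\Omega\|_4^2=\|\Delta^{1/4}A^\ast A\Omega\|$ and $b:=\|B\Omega\|_4^2$, the mixed bound gives $\|(A+B)\Omega\|_4^2\leq a+2\sqrt{ab}+b=(\sqrt a+\sqrt b)^2$, which is exactly the claim after taking square roots.

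Third, and this is the crux, I rewrite the mixed bound as an honest Cauchy--Schwarz inequality. Using $S=J\Delta^{1/2}$ and $S(A^\ast B\Omega)=(A^\ast B)^\ast\Omega=B^\ast A\Omega$, one finds $\Delta^{1/2}A^\ast B\Omega=JB^\ast A\Omega$, so that, with $J\Omega=\Omega$,
\[
  \|\Delta^{1/4}A^\ast B\Omega\|^2=(A^\ast B\Omega,\Delta^{1/2}A^\ast B\Omega)=\big(A^\ast B\Omega,\,(JB^\ast AJ)\Omega\big).
\]
Since $JB^\ast AJ=(JB^\ast J)(JAJ)\in\M'$ commutes with the elements of $\M$, moving the factors past one another (and using $J\Omega=\Omega$ once more) reduces the right-hand side to $(\eta_B,\eta_A)$, where
\[
  \eta_A:=A\,(JAJ)\,\Omega,\qquad \eta_B:=B\,(JBJ)\,\Omega\in\Hil .
\]
The identical computation with $B=A$ gives $\|\Delta^{1/4}A^\ast A\Omega\|^2=\|\eta_A\|^2$, and likewise for $B$. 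Thus the mixed bound is precisely $|(\eta_B,\eta_A)|\leq\|\eta_A\|\,\|\eta_B\|$, i.e.\ Cauchy--Schwarz. I expect the main obstacle to be the bookkeeping in this last step: one must track carefully the interplay of $S=J\Delta^{1/2}$, the self-adjointness of $A^\ast A$, the relation $J\Omega=\Omega$, and the commutation $J\M J=\M'$, in order to arrive at the clean factorization $\|\Delta^{1/4}A^\ast B\Omega\|^2=(\eta_B,\eta_A)$. Once this identity is established the mixed bound, and with it the triangle inequality, follow immediately.
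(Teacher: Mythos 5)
Your proof is correct and takes essentially the same route as the paper's: the crux in both is the identity $\|\Delta^{1/4}A^*B\Omega\|^2 = (BJB\Omega, AJA\Omega)$ (your $(\eta_B,\eta_A)$), obtained from $S=J\Delta^{1/2}$ together with the commutation of $\M$ with $J\M J$, followed by Cauchy--Schwarz. The only cosmetic differences are that you bound the two mixed terms $\|\Delta^{1/4}A^*B\Omega\|$ and $\|\Delta^{1/4}B^*A\Omega\|$ separately rather than estimating $\|\Delta^{1/4}(A^*B+B^*A)\Omega\|$ as a whole, and that you deduce positive definiteness from injectivity of $\Delta^{1/4}$ rather than via the Tomita operator $S$.
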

\begin{proof}
Absolute homogeneity is an immediate consequence of the absolute homogeneity of the norm $\|\cdot\|_2$. The proof of subadditivity requires some computation. Using subadditivity of $\|\cdot\|_2$, we obtain for $A,B\in\M$,
\begin{equation}\label{eq: proof_triang}
     \begin{aligned}
     \|A\Omega+B\Omega\|_4^2 &= 
     \|\Delta^{1/4}(A^*+B^*)(A+B)\Omega\|_2
    \\
    &\leq \|\Delta^{1/4}A^*A\Omega\|_2+\|\Delta^{1/4}B^*B\Omega\|_2+\|\Delta^{1/4}(A^*B+B^*A)\Omega\|_2.
     \end{aligned}
 \end{equation}
 In addition, since $A$ commutes with $JAJ$, we have
 \begin{equation}\label{eq:JAJ}
     \| AJA\Omega \|^2
     = (AJA \Omega, AJAJ\Omega)
     = (JA^\ast J JA \Omega, A^\ast A \Omega)
     = (\Delta^{1/2} A^\ast A \Omega,A^\ast A \Omega)
     = \| A \Omega\|_4^4, 
     \end{equation}
     and similarly for $B$, from which we conclude
 \begin{equation}
 \begin{aligned}
\left(\Delta^{1/4}A^*B\Omega,\Delta^{1/4}A^*B\Omega\right)&=\left|\left(JBJB\Omega,AJA\Omega\right)\right|,
      \\
      &\leq 
      \| A \Omega\|_4^2 \| B \Omega\|_4^2
      \\ \left(\Delta^{1/4}B^*A\Omega,\Delta^{1/4}B^*A\Omega\right)& \leq 
      \| B \Omega\|_4^2 \| A \Omega\|_4^2,
 \\
 \operatorname{Re}{\left(\Delta^{1/4}A^*B\Omega,\Delta^{1/4}B^*A\Omega\right)}&\leq \left|\left(\Delta^{1/4}A^*B\Omega,\Delta^{1/4}B^*A\Omega\right)\right|
      \\
     &\leq  \|\Delta^{1/4}A^*B\Omega\|_2\|\Delta^{1/4}B^*A\Omega\|_2
     \\&\leq  
      \| A \Omega\|_4^2 \| B \Omega\|_4^2.
 \end{aligned}
 \end{equation}
 Therefore
 \begin{equation}
     \|\Delta^{1/4}(A^*B+B^*A)\Omega\|_2\leq 2\| A \Omega\|_4 \| B \Omega\|_4.
 \end{equation}
 Substituting into \eqref{eq: proof_triang}, we obtain
 \begin{equation}
     \|A\Omega+B\Omega\|_4^2\leq \|A\Omega\|_4^2+\|B\Omega\|_4^2+2\|A\Omega\|_4\|B\Omega\|_4,
 \end{equation}
 which concludes the proof of subadditivity.
 
 Finally, using positive-definiteness of the norm $\|\cdot\|_2$, we observe that $\|A\Omega\|_4=0$ implies
 \begin{equation}
     0 =\|\Delta^{1/4}A^*A\Omega\|^{1/2}_2\implies \Delta^{1/4}A^*A\Omega=0.
 \end{equation}
 Since $A^*A\Omega\in\dom{S}$, we can deduce that
 \begin{equation}
     0=\Delta^{1/2}A^*A\Omega= SA^*A\Omega=A^*A\Omega,
 \end{equation}
and since $\Omega$ is separating, that implies $A^\ast A=0$, hence $A=0$. This shows that $\|\cdot\|_4$ is positive definite and, in conclusion, a norm on the vector space $\M\Omega$. 
 \end{proof}
 
 Now the normed vector space $(\M\Omega,\|\cdot\|_4)$ can be embedded into the Hilbert space $(\Hil,\|\cdot\|_2)$ via the inclusion map $\iota$:
 \begin{equation}
 \begin{aligned}
     \iota:\;\;\;(\M\Omega,\|\cdot\|_4)&\to(\Hil,\|\cdot\|_2)\\
      A\Omega&\mapsto A\Omega.
 \end{aligned}
 \end{equation}
 This inclusion map is continuous since
 \begin{equation}
 \begin{aligned}
     \|\iota (A\Omega)\|_2&=(A\Omega,A\Omega)^{1/2}=(A^*A\Omega,\Omega)^{1/2}=(\Delta^{1/4}A^*A\Omega,\Omega)^{1/2}\\
     &\leq \|\Delta^{1/4}A^*A\Omega\|_2^{1/2}\cdot 1=\|A\Omega\|_4.
\end{aligned}
 \end{equation}
 We can now complete the normed vector space $(\M\Omega,\|\cdot\|_4)$ to a Banach space, which we denote $L^4(\M,\Omega)$.
 
 Since the inclusion map $\iota$ is continuous, we can extend it by continuity to the entire Banach space $L^4(\M,\Omega)$. More precisely, the extension $\hat{\iota}$ is defined as follows: If $\vpsi \in L^4(\M,\Omega)$ is the $\|\cdot\|_4$-limit of $\vpsi_n \in \M\Omega$, then $\hat \iota(\vpsi)=\operatorname{\Hil-lim} \vpsi_n$.  
One can show that the extended map $\hat\iota$ is still injective (see \cite[Prop.~3.2.6]{Sangaletti:thesis} for the proof) and, therefore, that $L^4(\M,\Omega)$ is continuously embedded into  $L^2(\M,\Omega)=\Hil$.

 The above construction of the Banach space $L^4(\M,\Omega)$ suffices for all we need in this article, hence the reader may confine themselves to it. For completeness, we show that is equivalent to the noncommutative $L^4$ space known in the literature, specifically, in the formulation of Araki and Masuda \cite{ArakiMasuda}. To that end, it suffices to show that the noncommutative $L^4$ norm of \cite{ArakiMasuda} (denoted in the following as $\|\cdot\|_4^{\mathrm{AM}}$) coincides with our definition \eqref{eq:noncnormdef} on the dense subspace $\M \Omega$. According to \cite{ArakiMasuda} it is given by
\begin{equation}\label{eq:l4am}
    \|A\Omega\|_4^{\mathrm{AM}}=\sup_{\|\vxi\|=1}\|\Delta^{1/4}_{\vxi,\Omega}A\Omega\|_2,
\end{equation}
where $\Delta_{\vxi,\Omega}$ is the relative modular operator as defined in Appendix C of the same paper (where the properties used in the following are also proven). Since the relative modular operator depends on the vector $\vxi$ only through the state $\omega_\vxi(\cdot)\coloneqq\left(\vxi,\cdot\vxi\right)$ it defines on $\M$, and since every normal state has exactly one vector representative in the positive cone $V_\Omega^{1/4}$ (of unit norm), we can restrict the supremum in \eqref{eq:l4am} to $\vxi\in V_\Omega^{1/4}$. Since also $\Omega$ is an element of the cone, we have $J=J_{\Omega,\Omega}=J_{ \vxi,\Omega}$ and therefore:
\begin{equation}
     \|A\Omega\|_4^{\mathrm{AM}}=\sup_{\vxi \in V_\Omega^{1/4}, \|\vxi\|=1}\left(A^*\vxi,JA\Omega\right)^{1/2}=\sup_{\vxi \in V_\Omega^{1/4}, \|\vxi\|=1}\left(\vxi,AJA\Omega\right)^{1/2}.
\end{equation}
Note that the vector $AJA\Omega$ belongs to the positive cone $V_\Omega^{1/4}\subset\mathcal{H}$ as well. Therefore, using again \eqref{eq:JAJ},
\begin{equation}    \|A\Omega\|_4^{\mathrm{AM}}=\frac{(AJA\Omega,AJA\Omega)^{1/2}}{\|AJA\Omega\|_2^{1/2}}=\|AJA\Omega\|_2^{1/2} = \| A\Omega \|_4
\end{equation}
(cf.~\cite{Hollands:trace_ineq}, Eq.~(151) with $n=4$, $\psi=\eta=\Omega$).

Finally we show how, under certain domain conditions, vectors obtained by applying operators \emph{affiliated} with $\M$ to the vacuum vector $\Omega$ are contained in the noncommutative $L^4$ space as well, with the formula \eqref{eq:noncnormdef} for their noncommutative $L^4$ norm remaining valid.

\begin{proposition}\label{prop:affilp4norm}
  Let $A\affil\mathcal{M}$ and $\Omega\in\dom{A}$. If $A\Omega\in\dom{A^*}$, then $A\Omega \in L^4(\M,\Omega)$ and
    \begin{equation*}
        \|A\Omega\|_4^2=\|\Delta^{1/4}A^*A\Omega\|_2.
    \end{equation*}
\end{proposition}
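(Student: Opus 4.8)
The plan is to approximate the affiliated operator $A$ by bounded operators from $\M$ and to transport the formula \eqref{eq:noncnormdef}, already established on $\M\Omega$, to the limit. Writing the polar decomposition $A = V\int_0^\infty \lambda\,dE(\lambda)$ with $V,E(\lambda)\in\M$ (characterization~(\ref{it:affil3}) of affiliation), I would set $A_n := A\,E(n)\in\M$. Each $A_n$ is bounded, so $A_n\Omega\in\M\Omega$ lies in the domain of $\|\cdot\|_4$; moreover $A_n\Omega = V E(n)|A|\Omega \to A\Omega$ in $\Hil$, since $\Omega\in\dom A$ and $E(n)\to\idop$ strongly. The goal is then to show that $(A_n\Omega)_n$ is $\|\cdot\|_4$-Cauchy: its $\|\cdot\|_4$-limit is an element $\vpsi$ of the completion $L^4(\M,\Omega)$ whose image under the injective embedding $\hat\iota$ is $A\Omega$, which is exactly the assertion $A\Omega\in L^4(\M,\Omega)$, and continuity of $\|\cdot\|_4$ then yields the norm formula.

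First I would record the relevant domain facts. Since $A\affil\M$, also $A^*A\affil\M$, and $\Omega\in\dom{A^*A}$ by hypothesis (as $\Omega\in\dom A$ and $A\Omega\in\dom{A^*}$). Applying Proposition~\ref{prop:Ext_Tomita} to the self-adjoint operator $A^*A$ gives $\xi := A^*A\Omega\in\dom S\subset\dom{\Delta^{1/2}}$ with $S\xi=\xi$, i.e.
\begin{equation}
  \Delta^{1/2}\xi = J\xi .
\end{equation}
A short computation from the polar decomposition, using that $E(n)$ commutes with $|A|$, shows $A_n^*A_n = A^*A\,E(n)$ as bounded operators in $\M$, whence $A_n^*A_n\Omega = E(n)\xi$. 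Since $A_n^*A_n\in\M$ is self-adjoint and bounded, the elementary Tomita relation $SB\Omega=B^*\Omega$ gives $\Delta^{1/2}E(n)\xi = J\,E(n)\xi$. Thus both $\xi$ and all $E(n)\xi$ lie in $\dom{\Delta^{1/2}}$, and on this particular sequence $\Delta^{1/2}$ acts simply as the antiunitary $J$.

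The heart of the argument, and the main obstacle, is that $E(n)\xi\to\xi$ in $\Hil$ does not by itself control $\Delta^{1/4}E(n)\xi$, because $\Delta^{1/4}$ is unbounded. This is resolved by the previous step: from $\Delta^{1/2}(E(n)\xi-\xi)=J(E(n)-\idop)\xi$ and the isometry of $J$, one gets $\Delta^{1/2}E(n)\xi\to\Delta^{1/2}\xi$, i.e.\ $E(n)\xi\to\xi$ in the graph norm of $\Delta^{1/2}$. Using the spectral estimate $\lambda^{1/2}\le\tfrac12(1+\lambda)$ for $\lambda\ge0$, one has the moment inequality
\begin{equation}
  \|\Delta^{1/4}\zeta\|_2^2 \le \tfrac12\big(\|\zeta\|_2^2 + \|\Delta^{1/2}\zeta\|_2^2\big),\qquad \zeta\in\dom{\Delta^{1/2}},
\end{equation}
which, applied to $\zeta=E(n)\xi-\xi$ and to $\zeta=(E(n)-E(m))\xi$, upgrades the convergence to $\Delta^{1/4}E(n)\xi\to\Delta^{1/4}\xi$ and shows that $(\Delta^{1/4}E(n)\xi)_n$ is Cauchy in $\Hil$.

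Finally I would assemble the conclusion. By \eqref{eq:noncnormdef}, $\|A_n\Omega-A_m\Omega\|_4^2 = \|\Delta^{1/4}(E(n)-E(m))\xi\|_2\to0$, so $(A_n\Omega)_n$ is $\|\cdot\|_4$-Cauchy and converges to some $\vpsi\in L^4(\M,\Omega)$ whose image under $\hat\iota$ is the $\Hil$-limit of $A_n\Omega$, namely $A\Omega$; by injectivity of $\hat\iota$ this identifies $A\Omega$ as an element of $L^4(\M,\Omega)$. Continuity of the norm then gives
\begin{equation}
  \|A\Omega\|_4^2 = \lim_{n\to\infty}\|A_n\Omega\|_4^2 = \lim_{n\to\infty}\|\Delta^{1/4}E(n)\xi\|_2 = \|\Delta^{1/4}\xi\|_2 = \|\Delta^{1/4}A^*A\Omega\|_2,
\end{equation}
as claimed. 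The points requiring care are the identity $A_n^*A_n = A^*A\,E(n)$ and the various domain memberships, all of which rest on $E(n)$ commuting with $|A|$ and on Proposition~\ref{prop:Ext_Tomita}.
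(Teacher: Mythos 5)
Your proposal is correct and follows essentially the same route as the paper's proof: the same truncations $A_n = AE(n)$, the same key mechanism (the form inequality $\Delta^{1/2}\leq\Delta+\idop$ combined with the Tomita relation, which makes $\Delta^{1/2}$ act as the isometry $J$ on vectors of the form $B^*B\Omega$), and the same identification of the limit via the injective embedding $\hat\iota$. Your reduction of all convergence statements to $E(n)\xi\to\xi$ via the identity $A_n^*A_n=A^*A\,E(n)$ is a modest streamlining of the paper's explicit spectral-integral and dominated-convergence computations, but the underlying argument is identical.
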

\begin{proof}
Let $A=V\int_{{\R}^+}\lambda \di{}E(\lambda)$ be the polar decomposition. Consider the sequence $A_n\coloneqq AE(n)$ in $\M$. We start showing that:
\begin{equation}
    A_n\Omega\xrightarrow[L^4]{n\to\infty}\vpsi
\end{equation}
for some $\vpsi\in L^4$. Since $L^4$ is a Banach space, it is enough to show that the sequence $A_n\Omega$ is Cauchy in the $L^4$ norm. 
To that end, notice that for $m<n$,
\begin{equation}\label{eq:anamdiff}
\begin{aligned}
        \|(A_n-A_m)\Omega\|^2_4 &= \|\Delta^{1/4}(A_n-A_m)^*(A_n-A_m)\Omega\|_2\\
        &= \big(\Delta^{1/4}(A_n-A_m)^*(A_n-A_m)\Omega,\Delta^{1/4}(A_n-A_m)^*(A_n-A_m)\Omega\big)^{1/2}\\
        &\leq \big((A_n-A_m)^*(A_n-A_m)\Omega,(\Delta+\idop)(A_n-A_m)^*(A_n-A_m)\Omega\big)^{1/2}\\
        &=\sqrt{2}\|(A_n-A_m)^*(A_n-A_m)\Omega\|_2,
    \end{aligned}
    \end{equation}
    where we used the estimate $\Delta^r\leq\Delta+\idop$ valid for $0\leq r\leq 1$, also as sesquilinear forms. By spectral calculus we obtain:
    \begin{equation}
  \|(A_n-A_m)^*(A_n-A_m)\Omega\|_2 \leq \left(\int_0^\infty\lambda^4\chi_{[m,\infty)}\left(\Omega,\di E(\lambda)\Omega\right)\right)^{1/2}.
  \end{equation}
   The condition $\Omega\in\dom (A^*A)$ allows us to apply the dominated convergence theorem. Since the characteristic functions $\chi_{[m,\infty)}$ converges pointwise to $0$ in the limit $m\to\infty$, we obtain the Cauchy property of the sequence $A_m \Omega$ in $L^4$.
   
   We now compute the $L^4$ norm of the limit vector $\vpsi$. By continuity of the norm we have:
\begin{equation}\label{eq:normcont}
    \|\vpsi\|^2_4=\lim_{n\to\infty}\|A_n\Omega\|^2_4=\lim_{n\to\infty}\|\Delta^{1/4}A_n^*A_n\Omega\|_2.
\end{equation}
We want to show that the sequence $\Delta^{1/4}A_n^*A_n\Omega$ converges in $\Hil$ to the vector $\Delta^{1/4}A^*A\Omega$. (Note that the latter is well-defined thanks to Proposition~\ref{prop:Ext_Tomita}, with $T=A^\ast A$; the domain assumptions in our hypothesis enter.) With a computation as in \eqref{eq:anamdiff} we find
\begin{equation}
    \|\Delta^{1/4}(A_n^*A_n-A^*A)\Omega\|_2\leq \sqrt{2}\|(A^*_nA_n-A^*A)\Omega\|_2.
\end{equation}
Using the spectral theorem it follows that
\begin{equation}
    \|(A^*_nA_n-A^*A)\Omega\|_2=\int_0^{\infty}\lambda^4\theta (\lambda-n)\left(\Omega,\di E(\lambda)\Omega\right).
\end{equation}
The condition $\Omega\in\dom{(A^*A)}$ guarantees once more that the dominated convergence theorem applies and, therefore, one finds convergence (in $\Hil$) of the sequence $\Delta^{1/4}A_n^*A_n\Omega$ to the vector $\Delta^{1/4}A^*A\Omega$. With \eqref{eq:normcont}, we conclude
\begin{equation}
    \|\vpsi\|_4^2=\|\Delta^{1/4}A^*A\Omega\|_2.
\end{equation}

As a final step, we show $\vpsi=A\Omega$ or, more precisely, $\hat{\iota}(\vpsi)=A\Omega$. By definition of the extended inclusion $\hat{\iota}$ we have:
\begin{equation}
    \hat{\iota}(\vpsi)=\operatorname*{\Hil-lim}_{n\to\infty} \,A_n\Omega=A\Omega,
\end{equation}
where the last equality can once more be proven using the spectral theorem together with the condition $\Omega\in\dom{A}$.  
\end{proof}

\section*{Acknowledgements}
D.C.\ and L.S.\ have been supported by the Deutsche Forschungsgemeinschaft (DFG) within the Emmy Noether grant CA1850/1-1. 

\bibliographystyle{alpha}
\bibliography{qei}

\end{document}